\documentclass[11pt]{article}

\usepackage[T1]{fontenc}
\usepackage{lmodern}
\usepackage{xcolor}
\usepackage[margin=1in]{geometry}
\usepackage{amsmath,amssymb,amsthm,mathtools}
\usepackage{microtype}
\usepackage[hidelinks]{hyperref}
\usepackage{authblk}

\newtheorem{theorem}{Theorem}[section]
\newtheorem{lemma}[theorem]{Lemma}

\newtheorem{proposition}[theorem]{Proposition}
\newtheorem{corollary}[theorem]{Corollary}
\theoremstyle{definition}
\newtheorem{definition}[theorem]{Definition}
\theoremstyle{remark}

\DeclareMathOperator{\Tr}{Tr}
\DeclareMathOperator{\supp}{supp}

\DeclareMathOperator{\QIC}{QIC}
\DeclareMathOperator{\QCC}{QCC}
\DeclareMathOperator{\AND}{AND}
\DeclareMathOperator{\DISJ}{DISJ}
\DeclareMathOperator{\id}{id}
\newcommand{\ket}[1]{\lvert #1\rangle}
\newcommand{\bra}[1]{\langle #1\rvert}

\newcommand{\norm}[1]{\lVert #1\rVert}
\newcommand{\abs}[1]{\lvert #1\rvert}

\newcommand{\proj}[1]{\ket{#1}\!\bra{#1}}

\newcommand{\eps}{\varepsilon}

\title{Information Potential: A Variational Approach to Quantum Information Complexity}
\author[1,2]{Penghui Yao\thanks{phyao1985@gmail.com}}
\author[1]{Yifan Zhou\thanks{zyfzhouyifan@outlook.com}}

\affil[1]{State Key Laboratory of Novel Software Technology, Nanjing University, Nanjing 210023, China}
\affil[2]{Hefei National Laboratory, Hefei 230088, China}

\date{}

\begin{document}
\maketitle

\begin{abstract}
Quantum information complexity (QIC), introduced by Touchette [Touchette, STOC 2015], has been shown to be one of the most powerful methods for proving quantum communication complexity and has also been shown to be equal to amortized quantum communication complexity. Unfortunately, QIC is generally hard to analyze because it is a sum of quantum conditional mutual information terms, which are difficult to estimate. In this work, we introduce a new variational approach, the information potential, for analyzing QIC and proving quantum communication lower bounds inspired by the resolvent representation for quantum relative entropy. This approach connects the QIC of individual messages to a quadratic form, making it more amenable and thus enables us to bound the cumulative positive increments of the potential throughout an interactive quantum protocol by its QIC. Lower bounds on the growth of the potential therefore translate into lower bounds on both QIC and quantum communication complexity.

As an application, we give an optimal $\Omega(1/r)$ lower bound on the QIC of the two-bit $\mathsf{AND}$ function as well as an optimal $\Omega(n/r)$ lower bound on the quantum communication complexity of $r$-round Set-Disjointness, answering an open problem in~[Braverman, Garg, Ko, Mao, Touchette FOCS 2015]. Moreover, we further prove a direct-sum theorem for bounded-round quantum communication complexity of Set Disjointness. With the tight bound on the QIC of $\mathsf{AND}$ function, we further establish a nearly tight tradeoff for the asymmetric quantum communication complexity of $\mathsf{Set} \mathsf{Disjointness}$:
$(q_A+1)(q_B+1)=\Omega(n)$,
where $q_A$ and $q_B$ denote the total numbers of qubits sent by Alice and Bob, respectively.
\end{abstract}

\tableofcontents

\section{Introduction}
\label{sec:introduction}

Quantum information complexity (QIC) measures the amount of information that is exchanged between the parties during an interactive quantum communication protocol, rather than the total number of qubits transmitted. It was introduced by Touchette~\cite{Tou15} as a fully quantum counterpart of the classical notion of information complexity (IC)~\cite{CSWY01,BYJKS04,BBCR10}, which has played a central role in the study of randomized communication complexity~\cite{BBCR10,BR11,Bra12,BGPW13}. While communication complexity charges the total number of bits exchanged throughout a communication protocol, IC measures only the information about the inputs revealed through the communication. This distinction makes IC particularly useful for understanding how much of the communication in a protocol is intrinsically necessary and how much can, in principle, be compressed. In the quantum setting, QIC enjoys analogous fundamental properties: it is one of the most powerful lower bound methods for quantum communication complexity, and satisfies natural information-theoretic properties such as additivity under independent instances~\cite{Tou15}. Most importantly, Touchette established an operational characterization of QIC by showing that it coincides with amortized quantum communication complexity: when many independent copies of a communication task are solved jointly, the optimal communication cost per copy converges to its QIC~\cite{Tou15}. Thus, QIC provides both an information-theoretic measure of the information flow in an interactive quantum protocol and an operational characterization of the asymptotic communication required to implement it. These properties make QIC a natural tool for studying direct-sum phenomena, and understanding the compressibility of quantum protocols.

Despite its appealing operational interpretation, QIC is substantially more difficult to analyze than its classical counterpart. At a high level, QIC is obtained by summing quantum conditional mutual information terms over the different rounds of an interactive protocol. The difficulty is that quantum conditional mutual information is itself a subtle quantity, especially when the conditioning system contains genuine quantum side information. In the classical setting, conditional mutual information admits a transparent interpretation as an average, over the conditioning variable, of ordinary mutual information. No analogous decomposition is available in general in the quantum setting: there is no canonical way to condition on a quantum register without potentially disturbing the state. This difficulty is reflected more broadly in quantum information theory, where understanding states with small quantum conditional mutual information led to sophisticated recovery-map techniques, beginning with the landmark result of Fawzi and Renner~\cite{FR15} and subsequently strengthened, simplified, and further developed in~\cite{BHOS15,BT16,SFR16}. These subtleties make it difficult to translate successful information-theoretic techniques from classical communication complexity directly to the quantum setting and constitute a central obstacle to obtaining sharp bounds on QIC.

Among concrete functions, the two-bit $\mathsf{AND}$ function is arguably the simplest nontrivial example for which IC exhibits genuinely interesting behavior. At the same time, understanding the IC of $\mathsf{AND}$ has consequences far beyond this elementary function. One of the main reasons is its fundamental connection to $\mathsf{Set} \mathsf{Disjointness}$: the disjointness problem can be viewed as applying $\mathsf{AND}$ coordinate by coordinate and determining whether any coordinate evaluates to one. $\mathsf{Set} \mathsf{Disjointness}$ is one of the central problems in communication complexity and has served as a canonical source of lower bounds, with applications to data-streaming algorithms, distributed computation, and lower bounds on extended formulations, among others~\cite{CP10,BYJKS04,BEOPV13,BM13,BP13}. Consequently, lower bounds for the IC of $\mathsf{AND}$, combined with appropriate direct-sum arguments, can be lifted to communication lower bounds for $\mathsf{Set} \mathsf{Disjointness}$. This connection has motivated a substantial line of work in the classical setting, beginning with information-theoretic lower bounds for $\mathsf{AND}$~\cite{BYJKS04} and leading to an exact characterization of its zero-error IC and sharp analyses of its behavior under error and in related multiparty settings~\cite{BGPW13,DFHL18,FHLY19}. These results in turn yield a precise understanding of the connection between the IC of $\mathsf{AND}$ and the randomized communication complexity of $\mathsf{Set} \mathsf{Disjointness}$~\cite{BGPW13,DFHL18}. In the quantum setting, Braverman, Garg, Ko, Mao, and Touchette~\cite{BGKMT15} developed an analogous approach based on QIC. In particular, they established a lower bound on the QIC of $\mathsf{AND}$ and used it as a key ingredient in proving a near-optimal lower bound for the bounded-round quantum communication complexity of $\mathsf{Set} \mathsf{Disjointness}$.

The result of the authors of~\cite{BGKMT15}, however, leaves several important questions unresolved. Quantitatively, their lower bound on the $r$-round QIC of $\mathsf{AND}$ is $\Omega(1/(r\log^8 r))$, and therefore loses polylogarithmic factors compared with the conjectured optimal dependence on the number of rounds. Moreover, their proof is indirect: rather than analyzing directly the quantum information revealed by an $\mathsf{AND}$ protocol, it proceeds through the quantum communication complexity of $\mathsf{Set} \mathsf{Disjointness}$ and ultimately relies on a threshold strong direct product theorem for quantum communication~\cite{She12}. Indeed,~\cite{BGKMT15} explicitly posed obtaining a direct information-theoretic proof for the QIC lower bound of $\mathsf{AND}$ as an open problem. In this work, we resolve this question by giving a direct and tight analysis of the bounded-round QIC of $\mathsf{AND}$, showing that its dependence on the number of rounds is $\Theta(1/r)$. Beyond the quantitative improvement, our proof introduces information potential, a new variational approach to QIC. Instead of manipulating quantum conditional mutual information directly, we study suitable variational representations of quantum divergences and their data-processing gaps, which allow the information revealed by individual messages to be controlled through operator inequalities and stability arguments. We believe that this perspective provides a useful new tool for the study of QIC and may find applications beyond the $\mathsf{AND}$ function.

\subsection{Main result}

Let $\mu_0$ be a uniform distribution on
$\{(0,0),(0,1),(1,0)\}$.
The following theorem establishes a lower bound on the QIC of computing $\operatorname{AND}$.
Throughout, the number of rounds counts individual messages.

\begin{theorem}\label{thm:and-qic-lower-bound}
For every
integer $r\geq 1$, every quantum communication protocol $\Pi$
that computes $\operatorname{AND}$ using at most $r$ messages
with worst-case error at most $1/3$ satisfies
\[
    \operatorname{QIC}(\Pi,\mu_0)\geq \Omega\left(\frac{1}{r}\right).
\]
The bound holds with arbitrary prior entanglement and
finite-dimensional private workspaces.
\end{theorem}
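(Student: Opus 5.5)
The plan is to run a potential argument on the global pure state of the protocol. Purify the inputs: for $\mu_0$, introduce reference registers so that the joint state after round $t$ is a pure state $\ket{\psi_t}$ on $R_A R_B A_t B_t C_t$. Here $C_t$ is the message register, $A_t,B_t$ are the private workspaces (including prior entanglement), and $R_A,R_B$ purify $X$ and $Y$. For each input pair $(x,y)$ let $\rho^{xy}_t$ denote the conditional state.

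\textbf{Step 1: the potential.} Define the information potential $\Phi_t$ as a variational (resolvent-type) quadratic form that measures how distinguishable the states on Bob's side are for $x=0$ versus $x=1$, and likewise on Alice's side for $y=0$ versus $y=1$. Conditioning is on the "other" input being $0$, as in the classical cut-and-paste argument. Concretely, I would take
\[
\Phi_t=\sup_{H}\bigl(2\Tr H(\rho^{00}+\rho^{10}) - \Tr H^2\cdots\bigr),
\]
which is a quadratic surrogate obtained from the integral representation $D(\rho\|\sigma)=\int_0^\infty \Tr\,\rho\,[(\sigma+s)^{-1}-(\rho+s)^{-1}]\,ds$-type formulas. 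Equivalently, up to constants, it is a $\chi^2$ or Bures-type divergence between $\rho^{0y}_{B}$ and $\rho^{1y}_{B}$ averaged over $y$.

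\textbf{Step 2: per-message control.} Show that when Alice sends message $t$, the increment $(\Phi_t-\Phi_{t-1})_+$ is at most $O(1)$ times the QCMI term $I(X;C_t\mid Y B_{t-1})$ appearing in $\QIC(\Pi,\mu_0)$, and symmetrically for Bob's messages. The mechanism is that Bob's side changes only by gaining $C_t$. The data-processing gap of the variational form is then a quadratic form in $(\rho^{0y}-\rho^{1y})$ restricted to $C_t$, and this is dominated by the relative entropy $D(\rho^{xy}_{B C}\|\rho^{\cdot y}_{BC})$. For a fixed $y$ the latter equals the conditional mutual information with classical $X$, via Pinsker-type operator inequalities such as $\chi^2\le$ (constant)$\cdot$relative entropy for nearby states, obtained through a stability argument. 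Summing over messages gives $\sum_t(\Phi_t-\Phi_{t-1})_+\le O(\QIC)$.

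\textbf{Step 3: potential must grow across $r$ messages.} Here correctness and the rectangle/cut-and-paste structure enter. Since $\rho^{11}$ must be distinguishable from $\rho^{00},\rho^{01},\rho^{10}$ with error $1/3$, a quantum cut-and-paste (Jain--Radhakrishnan--Sen style round elimination, which loses a factor per round) forces a constant total Hellinger-type separation at the end. The separation of $\rho^{11}$ from $\rho^{00}$ is then bounded by the sum over rounds of the one-sided separations. Each round can convert the one-sided potential into at most its square-root contribution to the final separation, so by Cauchy--Schwarz over $r$ rounds, $\Omega(1)\le \sum_t\sqrt{\delta_t}\le\sqrt{r\sum_t\delta_t}$. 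This gives $\sum_t\delta_t\ge\Omega(1/r)$, and combined with Step 2 yields $\QIC\ge\Omega(1/r)$.

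\textbf{Main obstacle.} I expect the main difficulty to be Step 2 with genuine quantum side information $B_{t-1}$ (including entanglement). The QCMI does not decompose, so one must show that the positive increment of the quadratic potential is bounded by $I(X;C_t\mid YB_{t-1})$ without paying dimension factors. I would first try to write the increment as a difference of resolvent quadratic forms, $\Tr\Delta(\sigma+s)^{-1}\Delta$, before and after the partial trace over $C_t$. I would then apply operator convexity of $(X,\sigma)\mapsto X^\dagger\sigma^{-1}X$, which gives a nonnegative gap, and compare it with the analogous integrand of the relative-entropy integral representation, matching term by term in $s$.
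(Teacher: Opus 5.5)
Your Step 2, done as in your last paragraph, is essentially the paper's mechanism. The paper takes $V=\int_1^2\mathcal F_u(\omega_{TK},\alpha_T\otimes\omega_K)\,du$ with $T=R_X$ and $K$ Bob's side. Completing the square shows each resolvent data-processing gap $g_u$ is a nonnegative quadratic form. Since the relative-entropy kernel $u/(1+u)^2$ is at least $2/9$ on $[1,2]$, this gives $\int_1^2 g_u\,du\le\frac92 I(T:C\mid K)$. For this to produce a conditional mutual information, the potential must be built on the pair $(\omega_{TK},\alpha_T\otimes\omega_K)$, not on a pairwise divergence between $\rho^{0y}_B$ and $\rho^{1y}_B$. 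The inequality $\chi^2\le C\cdot D$ you mention is not needed, and it fails in general.

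The genuine gap is Step 3. Your potential only involves branches where the other input is $0$, so it never sees $\rho^{11}$. Everything then rests on the claim $\Omega(1)\le\sum_t\sqrt{\delta_t}$, with $\delta_t$ the per-message increment. There is no quantum cut-and-paste, and the JRS local-transition hybrid does not give this. The Uhlmann error at message $t$ is controlled by the fidelity of the two conditional states at time $t$. That fidelity reflects the cumulative information $\sum_{s\le t}\delta_s$, not $\delta_t$. Summing yields only $\sum_t\bigl(\sum_{s\le t}\delta_s\bigr)^{1/2}\le r\sqrt{O(\QIC)}$, which is the JRS-type bound $\QIC=\Omega(1/r^2)$. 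A per-round $\sqrt{\delta_t}$ is exactly the difficulty left open in BGKMT15, so it cannot simply be asserted.

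The missing idea is the paper's perturbation argument. It tracks the same potential under $\mu_p=(1-p)\mu_0+p\delta_{11}$, with reference $\alpha_T=\mathrm{diag}(2/3,1/3)$ fixed independently of $p$. This way the $11$ branch enters the potential directly.

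On the upper side, a stability lemma shows $\sqrt{g_u}$ is Lipschitz in the length of the purification path $\cos t\,\chi(0)+\sin t\,\chi(11)$. That length is $\arcsin\sqrt p$. Squaring shows each Alice-to-Bob increment under $\mu_p$ is at most twice the increment under $\mu_0$ plus $O(p)$. So the total increase is $O(I_0+rp)$, without ever bounding $\QIC(\Pi,\mu_p)$.

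On the lower side, small $I_0$ lets input-controlled local unitaries align the three zero-input final states. Correctness then forces $\mathrm{Re}\langle\xi|\zeta\rangle\le 1-\gamma/2$. A $4\times4$ computation gives a net gain of $\Omega(\sqrt p)-O(I_0)-O(p)$, where the $\sqrt p$ comes from the coherent cross term linear in $z=\sqrt p$.

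Taking $p\asymp r^{-2}$ gives $I_0=\Omega(1/r)$. The optimal $r$-dependence comes from the asymmetry between an $O(p)$ cost per message and an $\Omega(\sqrt p)$ gain, and your sketch has no counterpart of this.
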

This bound is tight up to a constant factor. \cite{AA05} gives an $r$-message
protocol for $\mathrm{DISJ}_n$ with communication $O(n/r)$ when
$n\ge r^2$. Together with the DISJ-to-AND reduction
of~\cite[Lemma~4.20]{BGKMT15}, this yields an $r$-message protocol
for $\operatorname{AND}$ with worst-case error at most $1/3$ and
quantum information cost $O(1/r)$ under $\mu_0$.
Thus, the $r$-message quantum information complexity of
$\operatorname{AND}$ under $\mu_0$ is $\Theta(1/r)$.

Theorem~\ref{thm:and-qic-lower-bound}, together with the reductions
and direct-sum properties of QIC,
gives the following consequences for $\mathsf{Set} \mathsf{Disjointness}$.
Here $\mathrm{DISJ}_n(x,y)=1$ if and only if
$x_jy_j=0$ for every $j\in\{1,\ldots,n\}$.

\begin{theorem}\label{thm:main-disjointness}
For all integers $n,r\ge 1$, every quantum protocol computing $\mathrm{DISJ}_n$ with at most $r$ messages and worst-case error at most $1/3$ communicates at least $\Omega(n/r)$ qubits.
\end{theorem}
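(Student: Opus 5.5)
The plan is to derive Theorem~\ref{thm:main-disjointness} from Theorem~\ref{thm:and-qic-lower-bound} by the standard information-complexity route of~\cite{BGKMT15}: a direct-sum embedding of $\AND$ into $\DISJ_n$, followed by the fact that quantum communication bounds quantum information cost. Concretely, fix an $r$-message protocol $\Pi$ for $\DISJ_n$ with worst-case error $1/3$ and communication $C$. The first step is to record that $\QIC(\Pi,\mu)\le \QCC(\Pi)\le C$ for every input distribution $\mu$. Each message register $C_i$ contributes a conditional mutual information term of at most $2\log\abs{C_i}$ (or at most $\log\abs{C_i}$ after Touchette's normalization). Summing these over messages gives $O(C)$.

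The second step is the embedding. Take the distribution $\mu_0^{\otimes n}$ on $(x,y)\in(\{0,1\}^2)^n$. For each coordinate $j$, build a protocol $\Pi_j$ for $\AND$ as follows. Alice and Bob receive $(a,b)$ and place it in coordinate $j$. The remaining coordinates are sampled from $\mu_0$, where Alice holds her bits and Bob holds his, and the joint sampling is done using shared entanglement: prior shared copies of a purification of $\mu_0$ across the two parties, with each party reading its own half. They then run $\Pi$ and output the negation of the answer. Under $\mu_0$, every other coordinate has $x_ky_k=0$, so $\DISJ_n(x,y)=\neg\AND(a,b)$. Hence $\Pi_j$ computes $\AND$ with worst-case error at most $1/3$, uses at most $r$ messages, and is allowed because Theorem~\ref{thm:and-qic-lower-bound} permits arbitrary prior entanglement.

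The third step, which is the main point to check, is the additivity inequality
\[
\sum_{j=1}^n \QIC(\Pi_j,\mu_0)\le \QIC(\Pi,\mu_0^{\otimes n}).
\]
To prove it, I will write each message term $I(X;C_i\mid Y B_i)$ for $\Pi$ on the product distribution and expand it by the chain rule over coordinates:
\[
I(X;C_i\mid YB_i)=\sum_j I(X_j;C_i\mid Y B_i X_{<j}).
\]
Since the coordinates are independent under $\mu_0^{\otimes n}$, the conditioning on $X_{<j}$ and $Y_{\neq j}$ can be absorbed into the side registers that $\Pi_j$ holds through its purification. The symmetric expansion handles Bob's terms. The delicate point is matching the conditioning systems exactly. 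Here I follow \cite[Lemma~4.20 and direct-sum theorem]{BGKMT15}, which use classical purification registers for the public and private parts and choose Alice to hold $X_{<j}$ and Bob to hold $Y_{>j}$. This per-coordinate mixing of who holds what gives each term exactly the QIC of $\Pi_j$ under that embedding. I expect this bookkeeping to be the main obstacle, but it is precisely the existing direct-sum property that the paper states it may invoke.

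Combining the three steps with Theorem~\ref{thm:and-qic-lower-bound} gives
\[
C\ge \Omega\bigl(\QIC(\Pi,\mu_0^{\otimes n})\bigr)\ge \Omega\Bigl(\sum_j \QIC(\Pi_j,\mu_0)\Bigr)\ge n\cdot\Omega(1/r)=\Omega(n/r).
\]
For $n<r$ the claimed bound is $\Omega(1)$, which is trivial since any correct protocol must communicate at least one qubit whenever $n\ge1$. So the statement holds for all $n,r\ge1$.
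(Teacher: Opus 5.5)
Your proposal is correct and is essentially the paper's argument: $\QCC\ge\QIC$, then a direct-sum reduction of $\DISJ_n$ to $n$ embedded copies of $\AND$ under $\mu_0$, then Theorem~\ref{thm:and-qic-lower-bound}; the paper imports the middle step as \cite[Lemma~4.20]{BGKMT15}, and in the proof of Theorem~\ref{thm:multiple-disjointness} it runs exactly your fixed-distribution chain through $\mu_0^{\otimes n}$. One slip in your bookkeeping sketch: for the terms $I(R_XR_Y:C_i\mid S_i)$ to split exactly by the chain rule (forward for Alice-to-Bob messages, backward for Bob-to-Alice ones), $\Pi_j$ must give Bob the full purification registers $R_{X_k}R_{Y_k}$ for $k<j$ and Alice those for $k>j$ --- that is, Bob holds copies of $X_{<j}$ and Alice holds copies of $Y_{>j}$, which is the reverse of what you wrote and is not a symmetric split of each purification.
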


This result matches the upper bound from~\cite{AA05}
when $n\ge r^2$.

The following theorem establishes a direct sum theorem for bounded-round quantum communication complexity of $\mathsf{Set}$ $\mathsf{Disjointness}$.

\begin{theorem}\label{thm:direct summain-disjointness}
For all integers $n,k,r\ge 1$, every quantum protocol computing $k$ instances of
    $\mathrm{DISJ}_n$ with at most $r$ messages in total
    communicates at least $\Omega(kn/r)$ qubits, provided that,
    on every joint input, each individual answer is correct
    with probability at least $2/3$.
\end{theorem}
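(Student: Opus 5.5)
The plan is to reduce the statement to a lower bound on quantum information cost and then exploit additivity of QIC together with Theorem~\ref{thm:and-qic-lower-bound}. Let $\Pi$ be an $r$-message protocol for $\mathrm{DISJ}_n^{\otimes k}$ in which each answer is correct with probability at least $2/3$ on every joint input. Since quantum communication upper-bounds QIC under any input distribution, it suffices to show that $\QIC(\Pi,\mu)=\Omega(kn/r)$ for a suitable distribution $\mu$ on $kn$ coordinate pairs.

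\textbf{Choice of distribution.} I will take $\mu=\mu_0^{\otimes kn}$. On this distribution every one of the $k$ instances is disjoint, so $\mu$ is supported on inputs where each coordinate has $\AND=0$. Correctness on every joint input, rather than only on $\mu$, still constrains $\Pi$ where a single coordinate is set to $(1,1)$.

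\textbf{Embedding one $\AND$ into $\Pi$.} The standard direct-sum embedding works as follows. Fix a coordinate $(i,j)$, with $i\in[k]$ indexing the instance and $j\in[n]$ the position. Alice and Bob place their $\AND$ inputs $(a,b)$ at position $(i,j)$. They fill the remaining coordinates with $\mu_0$-samples using shared randomness together with private randomness: for each other coordinate, shared randomness decides which player's bit is fixed to $0$, and the other player samples its own bit privately. Because $\mu_0$ has zero probability on $(1,1)$, every other coordinate is guaranteed to satisfy $\AND=0$. Hence instance $i$ evaluates to $\neg\AND(a,b)$, so reading the $i$-th output yields an $r$-message protocol $\Pi_{ij}$ for $\AND$ with worst-case error at most $1/3$.

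To make the embedding compatible with QIC, the shared randomness should be realized as purified shared entanglement (or as classical registers conditioned upon). Theorem~\ref{thm:and-qic-lower-bound} explicitly allows arbitrary prior entanglement, so this causes no problem. The same theorem then gives $\QIC(\Pi_{ij},\mu_0)=\Omega(1/r)$ for every pair $(i,j)$.

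\textbf{Summing over coordinates (the main obstacle).} The key step is the quantum direct-sum inequality
\[
  \sum_{i,j}\QIC(\Pi_{ij},\mu_0)\le \QIC(\Pi,\mu_0^{\otimes kn}),
\]
which is the analogue of~\cite[Lemma~4.20]{BGKMT15} and of Touchette's additivity~\cite{Tou15}. It is proved via the chain rule for quantum conditional mutual information, applied round by round to each term $I(X:C\mid YB)$.

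The delicate point is that $\mu_0$ is not a product distribution. The chain rule therefore requires conditioning on the auxiliary variable $D$, which records which player's bit is fixed to $0$. One must check that the per-coordinate terms, conditioned on the embedded randomness, match exactly the QIC of $\Pi_{ij}$ with that randomness held as shared, purified side information. I expect this bookkeeping to be the main effort; the version with a single instance ($k=1$) in~\cite{BGKMT15} should carry over verbatim, since the coordinates are simply indexed by pairs $(i,j)$ and the message count $r$ is unchanged.

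\textbf{Conclusion.} Combining the three steps gives
\[
  \QCC(\Pi)\ge \QIC(\Pi,\mu)\ge kn\cdot\Omega(1/r),
\]
which is the claimed bound $\Omega(kn/r)$.
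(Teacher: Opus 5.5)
Your proposal is correct and is essentially the paper's argument. Both bound $\QCC(\Pi)$ by $\QIC(\Pi,\mu_0^{\otimes kn})$, split this information cost via the product-input direct sum into embedded $r$-message $\AND$ protocols that are worst-case correct (the other coordinates never contain $(1,1)$), and finish with Theorem~\ref{thm:and-qic-lower-bound}. The paper does the split in two stages, iterating \cite[Lemma~3.10]{BGKMT15} over the $k$ instances and then applying \cite[Lemma~4.18]{BGKMT15} over the $n$ coordinates, whereas you do it in one pass over $kn$ coordinates. The auxiliary variable $D$ you worry about is not needed, because the entanglement-assisted quantum direct sum already handles non-product per-coordinate distributions such as $\mu_0$.
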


By running the protocol of~\cite{AA05} in parallel,
we see that this lower bound is tight when $n\ge r^2$.

We further investigate asymmetric quantum communication complexity of $\mathsf{Set}$ $\mathsf{Disjointness}$ and provide an almost optimal bound. 

\begin{theorem}\label{thm:asymmain-disjointness}
For integer $n\ge 1$, every quantum protocol computing $\mathrm{DISJ}_n$
    with worst-case error at most $1/3$ satisfies
    \[
        (q_A+1)(q_B+1)\ge \Omega(n),
    \]
    where $q_A$ and $q_B$ are the total numbers of qubits
    communicated from Alice to Bob and from Bob to Alice,
    respectively. This statement imposes no bound on the
    number of messages.
    We also present an protocol, shown in section~\ref{sec:asymmetric-upper-bound}, whose communication complexity matches our lower bound up to logarithmic factors.
\end{theorem}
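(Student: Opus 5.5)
The plan is to reduce the asymmetric bound to a per-coordinate information statement for $\AND$ in which Alice's and Bob's information contributions are tracked separately, and then lift it to $\mathrm{DISJ}_n$ by the standard direct-sum embedding.

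\textbf{Step 1: an asymmetric variant of Theorem~\ref{thm:and-qic-lower-bound}.} Write $\QIC(\Pi,\mu)=\QIC_A(\Pi,\mu)+\QIC_B(\Pi,\mu)$, where $\QIC_A$ sums the conditional mutual information terms of Alice's messages and $\QIC_B$ those of Bob's. I would prove that for any $\AND$ protocol with worst-case error at most $1/3$, and with no bound on the number of rounds,
\[
(\QIC_A(\Pi,\mu_0)+a)\,(\QIC_B(\Pi,\mu_0)+b)\ \ge\ \Omega(ab)\quad\text{whenever } a\cdot b\lesssim 1,
\]
or, equivalently in the form actually needed, $\QIC_A\cdot\QIC_B\ge\Omega(1)$ up to the additive slack.

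The argument would rerun the information-potential analysis, but split the cumulative positive increments of the potential into those caused by Alice's messages and those caused by Bob's messages. Each message's increment is bounded by its quadratic form, which in turn is controlled by that message's own conditional mutual information. To distinguish $(1,1)$ from $\mu_0$'s inputs, the potential must grow multiplicatively through alternating contributions from Alice and Bob. This is the analogue of the classical rectangle, or cut-and-paste, argument, which forces the product of the two sides' distinguishabilities to be constant. The $1/r$ loss in Theorem~\ref{thm:and-qic-lower-bound} should come from Cauchy--Schwarz over rounds. Here I would instead apply Cauchy--Schwarz separately to each party's total, using the fact that the qubits a party sends bound its information, so that the number of rounds is replaced by the budgets.

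\textbf{Step 2: direct sum.} Given a $\mathrm{DISJ}_n$ protocol, embed an $\AND$ instance at a uniformly random coordinate $j$. Fill the other coordinates from $\mu_0$ using the usual mixed public/private sampling, as in the reduction of~\cite[Lemma~4.20]{BGKMT15}. By the chain rule and the additivity of QIC applied to each party separately, the per-coordinate values satisfy $\QIC_A\le O(q_A/n)$ and $\QIC_B\le O(q_B/n)$, up to the $+1$ slack that accounts for zero-communication sides and constant overheads. Since quantum conditional mutual information of a message is at most twice its size, $\QIC_A\le 2q_A$ and $\QIC_B\le 2q_B$ overall.

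\textbf{Step 3: conclude.} Plugging Step 2 into the product bound of Step 1 gives $\frac{(q_A+1)(q_B+1)}{n^2}\ge\Omega(1/n)$, that is, $(q_A+1)(q_B+1)\ge\Omega(n)$.

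The upper bound protocol in Section~\ref{sec:asymmetric-upper-bound} should be a Grover-style blockwise search. Alice, say, splits $[n]$ into blocks and Bob runs amplitude amplification over them, so that one party sends $O(\log n)$ qubits per query while the other sends more. This matches the tradeoff up to logarithmic factors.

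\textbf{Main obstacle.} The hard step is Step 1: showing that the potential's growth genuinely requires a product of the two parties' information contributions, rather than their sum divided by the number of rounds. I would first try to bound the potential's increment by a product-type quantity. Concretely, I would show the potential is a bilinear form in Alice-side and Bob-side distinguishability vectors, so that each message perturbs only one factor. A Cauchy--Schwarz argument should then yield $\sqrt{\QIC_A\cdot\QIC_B}\gtrsim$ the total growth, which is $\Omega(1)$ by correctness.
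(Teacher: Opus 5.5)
There is a genuine gap, and it is in Step~1, which carries the whole argument.

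The product inequality you aim for, $\QIC_A(\Pi,\mu_0)\cdot\QIC_B(\Pi,\mu_0)\ge\Omega(1)$ for every worst-case-correct $\AND$ protocol with no round bound, is false. The one-message protocol in which Alice sends $x$ already has $\QIC_B=0$. Worse, embedding an $O(\sqrt n)$-qubit $\mathrm{DISJ}_n$ protocol into a random coordinate, exactly as in your Step~2, gives a correct $\AND$ protocol with $\QIC_A,\QIC_B=O(1/\sqrt n)$. So the product can be made arbitrarily small. More generally, $\QIC(\AND)$ itself tends to $0$ as the number of rounds grows, which is why Theorem~\ref{thm:and-qic-lower-bound} is stated as $\Omega(1/r)$.

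The slack version $(\QIC_A+a)(\QIC_B+b)\ge\Omega(ab)$ holds trivially, since both information terms are nonnegative, so it says nothing. Step~3 is also inconsistent with Step~1. An $\Omega(1)$ product bound combined with $\QIC_A=O(q_A/n)$ and $\QIC_B=O(q_B/n)$ would give $q_Aq_B\ge\Omega(n^2)$, contradicting the $O(\sqrt n)$ protocols of~\cite{AA05}. The $\Omega(1/n)$ you actually insert cannot come from any statement about a single $\AND$ protocol, because $n$ is not one of its parameters.

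Finally, the bilinear-form/Cauchy--Schwarz paragraph is a heuristic, not a proof. The $1/r$ in Theorem~\ref{thm:and-qic-lower-bound} also does not come from Cauchy--Schwarz over rounds. It comes from balancing the $\Omega(\sqrt p)$ terminal gain against an $O(p)$ stability cost per message, with $p\sim r^{-2}$.

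The missing observation is that the number of messages, not the per-party information, passes through the direct-sum embedding undivided. That number is controlled by the smaller directional budget. Delete empty messages and merge consecutive messages from the same sender. The messages then alternate and each carries at least one qubit, so $r\le 2\min\{q_A,q_B\}+1$. Substituting into the already established bound $q_A+q_B\ge cn/r$ of Theorem~\ref{thm:main-disjointness} gives $(q_A+q_B)(2\min\{q_A,q_B\}+1)\ge cn$. Since $q_A+q_B\le 2\max\{q_A,q_B\}+1$, this yields $(2q_A+1)(2q_B+1)\ge cn$, hence $(q_A+1)(q_B+1)\ge\Omega(n)$. That is the paper's entire argument; no asymmetric information inequality is needed.

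Your description of the matching protocol is likewise only a guess. Section~\ref{sec:asymmetric-upper-bound} does not use blockwise Grover search. It runs $k$ cycles of rotations by $\pi/(2k)$, with Alice resetting the coordinates where $x_i=0$. Alice's messages are compressed into the span of $\ket{0^N}$ and the weight-one strings, and Bob's by Hamming-weight truncation. Unknown intersection sizes are handled by random sampling at geometrically decreasing scales.
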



\subsection{Technical Contribution}

QIC is a sum of quantum conditional mutual information terms,
whose analysis is complicated by quantum conditioning and
noncommutativity. Our approach starts instead from Petz's
resolvent representation of quantum relative entropy~\cite{Petz}.
We introduce an \emph{information potential} $V(\rho_{AB},A,B)$ for any bipartite state $\rho_{AB}$ by integrating the
resolvent divergences over a fixed parameter interval (see Definition~\ref{def:potential}). Up to a
universal rescaling, this potential lower-bounds relative entropy.
Its advantage is that each integrand in the information potential admits a concave quadratic
variational representation over operators, making its behavior
accessible through operator inequalities.

The information potential shares several nice properties with quantum relative entropy.For instance, isometries on the second register keep
the potential unchanged.  We also show that the potential is monotone under partial trace:
discarding a register restricts the variational optimization and
therefore cannot increase its value. More importantly, completing
the square expresses each resolvent data-processing gap exactly
as a weighted squared distance between the full and reduced
optimizers. To connect these gaps to QIC, we use the identity

\[
    I(A:C\mid B)_\rho
    =
    D(\rho_{ABC}\Vert\alpha_A\otimes\rho_{BC})
    -
    D(\rho_{AB}\Vert\alpha_A\otimes\rho_B),
\]
valid for any fixed positive definite reference state $\alpha_A$.
The resolvent representation and nonnegativity of its
data-processing gaps then imply
\[
    0\le
    V(\rho_{ABC},A,BC)-V(\rho_{AB},A,B)
    \le \frac{9}{2}I(A:C\mid B)_\rho.
\]
Thus, the potential preserves an essential connection to
conditional mutual information without requiring us to analyze
the latter directly.

For a safe, purified protocol, we track the potential between
Alice's input reference and Bob's registers. Receiving a message
from Alice increases the potential by a gap controlled by the
corresponding information cost; local isometries leave it
unchanged, and sending a message cannot increase it. Consequently,
QIC bounds the cumulative positive increments and hence the
net increase. Subtracting the initial value gives a progress
measure that starts at zero, without changing any message
increment.

We demonstrate that, along a purification path induced by input perturbations, the variation of the resolvent data-processing gap is bounded by the length of the purification path. Specifically, consider a fixed Alice-to-Bob message of protocol $\Pi$: when the input distribution is $\mu_1$, the purified quantum state at this stage is $\chi(\mu_1)$, and when the distribution is perturbed to $\mu_2$, the state becomes $\chi(\mu_2)$. We show that for the corresponding gaps $g(\mu_1)$ and $g(\mu_2)$ before and after tracing out the message register, the change in their square roots is bounded by a constant multiple of the integral of the rate of change of $\chi$ along the perturbation path. By squaring this variation and applying elementary inequalities, we derive a relationship between $g(\mu_1)$ and $g(\mu_2)$. Since the potential increment is the integral of $g$ over the interval $[1, 2]$, integrating both sides allows us to establish the relationship between the potential increments under distributions $\mu_1$ and $\mu_2$.

For $\mathsf{AND}$, let $\mu_0$ be uniform on its three zero
inputs and write $I_0=\operatorname{QIC}(\Pi,\mu_0)$ for an
$r$-message protocol with worst-case error at most $1/3$.
We perturb the distribution to
$\mu_p=(1-p)\mu_0+p\delta_{11}$.
The corresponding purification paths have length $O(\sqrt p)$,
so stability bounds the total potential increase by
$O(I_0+rp)$, without bounding QIC under $\mu_p$.
Conversely, in the small-information regime, local alignment
of the zero-input states and correctness on $11$ reduce the
terminal comparison to a four-dimensional calculation.
A Taylor expansion in $\sqrt p$ gives an increase of
$\Omega(\sqrt p)-O(I_0+p)$.
Comparing these bounds and choosing $p$ to be a sufficiently
small constant multiple of $r^{-2}$ yields
$I_0=\Omega(1/r)$. The logarithmic loss is avoided because
the perturbation costs only $O(p)$ per message, whereas
correctness forces a gain of order $\sqrt p$.

\subsection*{Acknowledgment}

This work is supported by the National Natural Science Foundation of China (Grant Nos. 62332009 and 12347104), the Quantum Science and Technology—National Science and Technology Major Project (Grant No. 2021ZD0302901), the NSFC/RGC Joint Research Scheme (Grant No. 12461160276), the Natural Science Foundation of Jiangsu Province (No. BK20243060), the Fundamental and Interdisciplinary Disciplines Breakthrough Plan of the Ministry of Education of China (No. JYB2025XDXM118), the "111 Center"(No. B26023), and the Fundamental Research Funds for the Central Universities (Grant no. 2026300376).

\subsection*{AI disclosure}
We used GPT-6 Astra during this work. Discussion with the model helped inspire the concept of information potential and assisted us in proving the results in this paper. The authors take full responsibility for all results and arguments presented in this paper.
\section{Preliminaries}
\label{sec:preliminaries}

\subsection{Quantum information}

For operators $M,N$ on the same Hilbert space, write
$\langle M,N\rangle=\Tr(M^\dagger N)$ for the Hilbert--Schmidt inner product.

\begin{definition}
\label{def:entropy-information}
For a density operator $\rho_{AB}$, write
$\rho_A=\Tr_B\rho_{AB}$ and $S(A)_\rho=S(\rho_A)$.
The von Neumann entropy, conditional entropy, mutual information, and
conditional mutual information are defined by
\[
\begin{aligned}
 S(\rho)&=-\Tr(\rho\log\rho),\\
 S(A\mid B)_\rho&=S(AB)_\rho-S(B)_\rho,\\
 I(A:B)_\rho&=S(A)_\rho+S(B)_\rho-S(AB)_\rho,\\
 I(A:C\mid B)_\rho
 &=S(AB)_\rho+S(BC)_\rho-S(B)_\rho-S(ABC)_\rho.
\end{aligned}
\]
The quantum relative entropy is
$D(\rho\Vert\sigma)=\Tr[\rho(\log\rho-\log\sigma)]$
when $\supp\rho\subseteq\supp\sigma$, and is $+\infty$ otherwise.
State subscripts are omitted when the state is clear.
\end{definition}

\begin{lemma}[{\cite{BGKMT15,Watrous2018}}]
\label{thm:entropy-properties}
For every state on the indicated registers,
\[
\begin{gathered}
 0\le S(A)\le\log\dim A,\qquad
 -S(A)\le S(A\mid B)\le S(A),\\
 0\le I(A:B)\le2S(A),\qquad
 0\le I(A:C\mid B)\le2S(A),\\
 I(AD:C\mid B)=I(A:C\mid B)+I(D:C\mid AB).
\end{gathered}
\]
If $\mathcal N:A\to A'$ is a quantum channel and
$\sigma_{A'BC}=(\mathcal N\otimes\id_{BC})(\rho_{ABC})$, then
$I(A':C\mid B)_\sigma\le I(A:C\mid B)_\rho$.
These information quantities are invariant under local isometries.
For a pure state on $AB$, $S(A)=S(B)$; for a pure state on $ABC$,
$S(A\mid B)=-S(A\mid C)$.
\end{lemma}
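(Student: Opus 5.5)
The statement is a collection of standard facts, and I would prove them by reducing each to two basic ingredients: strong subadditivity and the Schmidt decomposition. I would not use any of the later machinery of the paper.

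\textbf{Bounds on entropy and mutual information.} First, $0\le S(A)\le\log\dim A$ follows from the spectral decomposition of $\rho_A$: the von Neumann entropy is the Shannon entropy of the eigenvalues, which is nonnegative and maximized by the uniform distribution. Next, apply the Schmidt decomposition to a pure state on $AB$. The reduced states $\rho_A$ and $\rho_B$ have the same nonzero spectrum, so $S(A)=S(B)$. For a pure state on $ABC$, this gives $S(AB)=S(C)$ and $S(B)=S(AC)$, hence
\[
S(A\mid B)=S(C)-S(AC)=-S(A\mid C).
\]
Subadditivity gives $I(A:B)\ge0$, which is equivalent to $S(A\mid B)\le S(A)$. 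For the lower bound $S(A\mid B)\ge -S(A)$, purify $\rho_{AB}$ by a register $C$. Then
\[
S(A\mid B)=-S(A\mid C)\ge -S(A),
\]
using the upper bound just proved on the $AC$ marginal. Finally,
\[
I(A:B)=S(A)-S(A\mid B)\le 2S(A).
\]

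\textbf{Conditional mutual information.} Nonnegativity $I(A:C\mid B)\ge0$ is exactly strong subadditivity (Lieb--Ruskai), which I would cite rather than reprove. The chain rule
\[
I(AD:C\mid B)=I(A:C\mid B)+I(D:C\mid AB)
\]
follows by expanding both sides in terms of entropies: the terms $S(AB)$ and $S(ABC)$ cancel, leaving $S(ABD)+S(BC)-S(B)-S(ABCD)$ on each side. For the upper bound, write
\[
I(A:C\mid B)=S(A\mid B)-S(A\mid BC).
\]
Both conditional entropies lie in $[-S(A),S(A)]$ by the previous step, so $I(A:C\mid B)\le 2S(A)$.

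\textbf{Data processing and invariance.} For the channel $\mathcal N$, take a Stinespring isometry $V:A\to A'E$. Invariance under local isometries holds because an isometry preserves the nonzero spectrum of every marginal containing the acted-on register, and leaves the other marginals unchanged. Therefore
\[
I(A:C\mid B)_\rho=I(A'E:C\mid B)_{V\rho V^\dagger}.
\]
By the chain rule this equals
\[
I(A':C\mid B)+I(E:C\mid A'B),
\]
and the second term is nonnegative. Discarding $E$ thus can only decrease the quantity, which is the claimed inequality.

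The only nontrivial step is strong subadditivity. That is a deep result in its own right, so I would import it from Watrous's textbook~\cite{Watrous2018} rather than prove it here.
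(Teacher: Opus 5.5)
Your proposal is correct: the bounds, the purification step for $S(A\mid B)\ge -S(A)$, the entropy expansion for the chain rule, and the Stinespring-plus-chain-rule argument for data processing all check out, resting only on strong subadditivity and the Schmidt decomposition. The paper gives no proof of this lemma and simply defers to the cited standard references (Watrous's textbook and BGKMT15), where essentially this textbook derivation is the one used.
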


The following lemma gives an integral representation, which is known as a resolvent, for quantum relative entropy. Here we include the proof for completeness.

\begin{lemma}[Resolvent for Quantum Relative Entropy, \cite{Petz}]
\label{lemma:resolvent}
Let $\rho, \sigma \in \mathbb{C}^{d \times d}$ be strictly positive density matrices. The quantum relative entropy $D(\rho\parallel\sigma) = \mathrm{Tr}(\rho \log \rho - \rho \log \sigma)$ has the integral representation:
$$D(\rho\parallel\sigma) = \int_0^\infty \mathrm{Tr}[(\rho - \sigma)M] \frac{t}{(1+t)^2} dt$$
where $M \in \mathbb{C}^{d \times d}$ satisfies $\rho M + t M \sigma = \rho - \sigma$.
\end{lemma}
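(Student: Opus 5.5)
The plan is to derive the formula from the classical integral representation of the logarithm. For $x>0$,
\[
\log x=\int_0^\infty\Bigl(\frac{1}{1+t}-\frac{1}{x+t}\Bigr)dt .
\]
Applying this to both $\rho$ and $\sigma$ gives
\[
D(\rho\Vert\sigma)=\Tr\rho(\log\rho-\log\sigma)=\int_0^\infty \Tr\bigl[\rho\bigl((\sigma+t)^{-1}-(\rho+t)^{-1}\bigr)\bigr]dt .
\]
The interchange of trace and integral is justified because, for strictly positive $\rho,\sigma$, the integrand is $O(t^{-2})$ at infinity and bounded near $0$. This holds since $(\sigma+t)^{-1}-(\rho+t)^{-1}=(\sigma+t)^{-1}(\rho-\sigma)(\rho+t)^{-1}$.

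The stated form uses the weight $t/(1+t)^2$ together with the Sylvester-type equation $\rho M+tM\sigma=\rho-\sigma$. This suggests a change of variables that rescales one operator by $t$. The first step is to solve the equation explicitly. Since $\rho>0$ and $t\sigma>0$, the spectra of $\rho$ and $-t\sigma$ are disjoint, so the equation has a unique solution. In the eigenbases $\rho=\sum_i p_i\proj{e_i}$ and $\sigma=\sum_j q_j\proj{f_j}$, the solution has matrix entries
\[
\langle e_i|M|f_j\rangle=\frac{\langle e_i|\rho-\sigma|f_j\rangle}{p_i+tq_j}.
\]
Consequently
\[
\Tr[(\rho-\sigma)M]=\sum_{i,j}\frac{\abs{\langle e_i|f_j\rangle}^2(p_i-q_j)^2}{p_i+tq_j},
\]
using $\langle e_i|\rho-\sigma|f_j\rangle=(p_i-q_j)\langle e_i|f_j\rangle$.

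Similarly, $D(\rho\Vert\sigma)=\sum_{i,j}\abs{\langle e_i|f_j\rangle}^2\,p_i(\log p_i-\log q_j)$. Because $\sum_{i,j}\abs{\langle e_i|f_j\rangle}^2(p_i-q_j)=\Tr\rho-\Tr\sigma=0$, we may rewrite this as
\[
D(\rho\Vert\sigma)=\sum_{i,j}\abs{\langle e_i|f_j\rangle}^2\bigl(p\log(p/q)-p+q\bigr)\big|_{p=p_i,\,q=q_j}.
\]
The proof therefore reduces to the scalar identity, valid for $p,q>0$:
\[
p\log\frac{p}{q}-p+q=\int_0^\infty\frac{(p-q)^2}{p+tq}\,\frac{t}{(1+t)^2}\,dt .
\]

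The scalar identity is the only real computation, and it is the step I expect to be the main obstacle, mostly bookkeeping with partial fractions. By homogeneity, set $q=1$ and $p=x$. We must show
\[
\int_0^\infty\frac{t(x-1)^2}{(x+t)(1+t)^2}\,dt=x\log x-x+1 .
\]
I would decompose
\[
\frac{t}{(x+t)(1+t)^2}=\frac{a}{x+t}+\frac{b}{1+t}+\frac{c}{(1+t)^2},
\]
with $a=-x/(x-1)^2$, $b=x/(x-1)^2$ and $c=-1/(x-1)$. Since $a+b=0$, the logarithmic terms combine into a convergent integral:
\[
\int_0^\infty\Bigl(\frac{b}{1+t}-\frac{b}{x+t}\Bigr)dt=b\log x,
\]
and $\int_0^\infty c(1+t)^{-2}\,dt=c$. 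Multiplying by $(x-1)^2$ gives $x\log x-(x-1)$, as required. The case $x=1$ is trivial, with both sides equal to zero.

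Finally, summing the scalar identity against the weights $\abs{\langle e_i|f_j\rangle}^2$ and exchanging the finite sum with the integral (Tonelli, since every term is nonnegative) yields the lemma.
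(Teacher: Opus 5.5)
Your proposal is correct and follows essentially the same route as the paper: solve $\rho M+tM\sigma=\rho-\sigma$ entrywise in the eigenbases of $\rho$ and $\sigma$, reduce to the scalar identity $\int_0^\infty \frac{(p-q)^2}{p+tq}\frac{t}{(1+t)^2}\,dt=p\log(p/q)-p+q$, and sum using $\Tr\rho=\Tr\sigma=1$. Your partial-fraction computation correctly supplies the scalar identity that the paper attributes only to calculus, and the opening integral representation of $\log x$ is not used in the argument you actually carry out.
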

\begin{proof}
    Taking spectral decompositions of $\rho$ and $\sigma$, we have $\rho = \sum_i p_i \vert{}i\rangle\langle i\vert{}$ and $\sigma = \sum_j q_j \vert{}j\rangle\langle j\vert{}$ for $p_i, q_j > 0$.
    Projecting the equation $\rho M + t M \sigma = \rho - \sigma$ onto the basis $\{\vert{}i\rangle\langle j\vert{}\}$ directly yields the matrix elements of $M$:
    $$
    M_{ij} = \frac{p_i - q_j}{p_i + t q_j} \langle i \vert{} j \rangle
    $$
    So, the trace term $\mathrm{Tr}[(\rho - \sigma)M]$ is equal to:
    $$
    \mathrm{Tr}[(\rho - \sigma)M] = \sum_{i,j} \vert{}\langle i \vert{} j \rangle\vert{}^2 \frac{(p_i - q_j)^2}{p_i + t q_j}
    $$
    The calculus told us $\int_0^\infty \frac{(p-q)^2}{p+tq} \frac{t}{(1+t)^2} dt = p \log p - p \log q - p + q$.

    So,
    $$
    \int_0^\infty \mathrm{Tr}[(\rho - \sigma)M] \frac{t}{(1+t)^2} dt = \sum_{i,j} \vert{}\langle i \vert{} j \rangle\vert{}^2 (p_i \log p_i - p_i \log q_j - p_i + q_j)
    $$
    Since the basis states are orthonormal, the right-hand side is equal to $\Tr(\rho \log \rho) - \Tr(\rho \log \sigma) - \mathrm{Tr}(\rho) + \Tr(\sigma)$. For density matrices, $\Tr(\rho) = \Tr(\sigma) = 1$, reducing the sum exactly to $D(\rho\parallel\sigma)$.
\end{proof}

\begin{definition}
\label{def:fidelity}
For an operator $M$, let $\norm{M}_1=\Tr\sqrt{M^\dagger M}$.
The trace distance and fidelity of density operators $\rho,\sigma$ are
\[
 \Delta(\rho,\sigma)=\frac12\norm{\rho-\sigma}_1,
 \qquad F(\rho,\sigma)=\norm{\sqrt\rho\sqrt\sigma}_1.
\]
Thus $F$ denotes root fidelity. A purification of $\rho_A$ is a unit
vector $\ket{\psi}_{AE}$ such that
$\Tr_E\ket{\psi}\!\bra{\psi}=\rho_A$.
\end{definition}

\begin{lemma}[{\cite{Watrous2018,MDSFT13}}]
\label{thm:fidelity-inequalities}
For density operators $\rho,\sigma$, a quantum channel $\mathcal N$,
and a measurement effect $0\preceq E\preceq I$,
\[
\begin{gathered}
 1-F(\rho,\sigma)\le\Delta(\rho,\sigma)
 \le\sqrt{1-F(\rho,\sigma)^2},\\
 D(\rho\Vert\sigma)\ge-2\ln F(\rho,\sigma),\qquad
 \abs{\Tr E(\rho-\sigma)}\le\Delta(\rho,\sigma),\\
 \Delta(\mathcal N(\rho),\mathcal N(\sigma))\le\Delta(\rho,\sigma),\qquad
 F(\mathcal N(\rho),\mathcal N(\sigma))\ge F(\rho,\sigma).
\end{gathered}
\]
Trace distance and fidelity are invariant under applying the same
isometry to both states, and under adjoining the same independent state.
\end{lemma}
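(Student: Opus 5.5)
This is Lemma~\ref{thm:fidelity-inequalities}, a collection of standard facts, so the proof assembles known arguments rather than building new machinery. I would rely on Uhlmann's theorem, $F(\rho,\sigma)=\max\abs{\langle\psi\vert\phi\rangle}$ over purifications, and on the variational formula $\Delta(\rho,\sigma)=\max_{0\preceq E\preceq I}\Tr E(\rho-\sigma)$.

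\textbf{Trace-distance facts.} The variational formula comes from the Jordan decomposition $\rho-\sigma=P-Q$ with $P,Q\succeq0$ of disjoint support and $\Tr P=\Tr Q=\Delta$. It gives $\Tr E(\rho-\sigma)\le\Tr EP\le\Delta$, and symmetrically from below, which is the bound $\abs{\Tr E(\rho-\sigma)}\le\Delta$. Contractivity under a channel $\mathcal N$ follows by passing to the Heisenberg picture: $\Tr E\,\mathcal N(\rho-\sigma)=\Tr \mathcal N^\dagger(E)(\rho-\sigma)$, and $\mathcal N^\dagger(E)$ is again an effect because $\mathcal N^\dagger$ is unital and completely positive.

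\textbf{Fidelity facts.} Monotonicity of $F$ follows from Uhlmann's theorem via a Stinespring dilation $V$ of $\mathcal N$: applying $V$ to optimal purifications of $\rho,\sigma$ yields purifications of $\mathcal N(\rho),\mathcal N(\sigma)$ with the same overlap. For the upper Fuchs--van de Graaf bound, I take optimal pure purifications, use the pure-state identity $\Delta=\sqrt{1-\abs{\langle\psi\vert\phi\rangle}^2}$ (a two-dimensional computation), and then apply monotonicity of $\Delta$ under the partial trace. For the lower bound $1-F\le\Delta$, I take the optimal measurement for fidelity: $F(\rho,\sigma)=\min_{\{E_k\}}\sum_k\sqrt{p_kq_k}$ over POVMs. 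Then $1-\sum\sqrt{p_kq_k}=\tfrac12\sum(\sqrt{p_k}-\sqrt{q_k})^2\le\tfrac12\sum\abs{p_k-q_k}\le\Delta$.

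\textbf{Relative entropy bound.} For $D\ge-2\ln F$, I view it as the monotonicity of sandwiched R\'enyi divergences in $\alpha$: $D=\tilde D_1\ge\tilde D_{1/2}=-2\ln F$. If I want to avoid citing that, I would instead use data processing under the fidelity-achieving measurement. Classically, $D(p\Vert q)\ge-2\ln\sum\sqrt{p_kq_k}$ by Jensen applied to $\ln$ of $\sqrt{q_k/p_k}$. This requires that the measured fidelity equal $F$, which is exactly the Fuchs--Caves optimal-measurement result.

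\textbf{Invariances.} Invariance under a common isometry and under adjoining an independent state $\tau$ are immediate: $\norm{V(\cdot)V^\dagger}_1=\norm{\cdot}_1$ and $\norm{X\otimes\tau}_1=\norm{X}_1$, and $\sqrt{\rho\otimes\tau}=\sqrt\rho\otimes\sqrt\tau$ gives $F(\rho\otimes\tau,\sigma\otimes\tau)=F(\rho,\sigma)F(\tau,\tau)=F(\rho,\sigma)$.

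The only genuinely nontrivial input is the $D\ge-2\ln F$ step, which needs either the Fuchs--Caves theorem or R\'enyi monotonicity. I would cite one of these (e.g.\ \cite{MDSFT13}) rather than reprove it.
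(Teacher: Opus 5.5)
Your proposal is correct and consistent with the paper, which gives no proof of its own and only cites \cite{Watrous2018,MDSFT13}; your arguments are the standard ones behind those citations: the Jordan decomposition and adjoint channel for $\Delta$, Stinespring plus Uhlmann for monotonicity of $F$, optimal purifications for $\Delta\le\sqrt{1-F^2}$, the Fuchs--Caves optimal measurement for $1-F\le\Delta$, and $D=\tilde D_1\ge\tilde D_{1/2}=-2\ln F$. Two minor remarks: Fuchs--Caves is also the nontrivial input to your $1-F\le\Delta$ step, not only to the relative-entropy bound (the Powers--St{\o}rmer inequality $\norm{\rho-\sigma}_1\ge\norm{\sqrt\rho-\sqrt\sigma}_2^2$ avoids it), and invariance of $F$ under a common isometry, which you left implicit, follows at once from $\sqrt{V\rho V^\dagger}=V\sqrt\rho\,V^\dagger$.
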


\begin{theorem}[Uhlmann's Theorem, {\cite{Watrous2018}}]
\label{thm:uhlmann}
Let $\ket{\psi}_{AE}$ and $\ket{\phi}_{AE}$ purify $\rho_A$ and
$\sigma_A$, respectively. Then
\[
 \max_{U_E}\abs{\bra{\psi}(I_A\otimes U_E)\ket{\phi}}
 =F(\rho_A,\sigma_A),
\]
where the maximum is over unitaries on $E$.
An optimizing unitary can be chosen so that
\[
 \norm{\ket{\psi}-(I_A\otimes U_E)\ket{\phi}}_2^2
 =2\bigl(1-F(\rho_A,\sigma_A)\bigr).
\]
\end{theorem}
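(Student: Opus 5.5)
The statement to prove is Uhlmann's theorem. The plan is to reduce both claims to the variational characterization of the trace norm, $\norm{X}_1=\max_{U}\abs{\Tr(UX)}$ over unitaries $U$. This characterization follows from the polar/singular value decomposition $X=W\abs{X}$: for any $U$ we have $\abs{\Tr(U W\abs{X})}\le\Tr\abs{X}$, with equality at $U=W^\dagger$.

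First, I will put the two purifications in a canonical form. Fix an orthonormal basis and write $\ket{\Gamma}=\sum_i\ket{i}_A\ket{i}_{A'}$, where $A'$ is a copy of $A$ (assume for now $\dim E\ge\dim A$, embedding $A'$ into $E$). Then $\ket{\psi}=(\sqrt{\rho_A}\otimes V_1)\ket{\Gamma}$ and $\ket{\phi}=(\sqrt{\sigma_A}\otimes V_2)\ket{\Gamma}$ for some isometries $V_1,V_2\colon A'\to E$. This is the standard fact that all purifications of the same state are related by an isometry on the purifying system; it can be checked via the Schmidt decomposition.

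Next, I will compute the overlap using the transpose trick $(I\otimes Y)\ket{\Gamma}=(Y^{T}\otimes I)\ket{\Gamma}$ and $\bra{\Gamma}(X\otimes I)\ket{\Gamma}=\Tr X$. For any unitary $U_E$ this gives
\[
\bra{\psi}(I\otimes U_E)\ket{\phi}=\Tr\bigl(\sqrt{\rho_A}\sqrt{\sigma_A}\,(V_1^\dagger U_E V_2)^{T}\bigr).
\]
The operator $Z=V_1^\dagger U_EV_2$ is a contraction on $A'$. Hence the absolute value is at most $\norm{\sqrt{\rho_A}\sqrt{\sigma_A}}_1=F(\rho_A,\sigma_A)$, since $\abs{\Tr(XZ)}\le\norm{X}_1\norm{Z}_\infty$.

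For attainment, I take the polar decomposition $\sqrt{\rho_A}\sqrt{\sigma_A}=W\abs{\sqrt{\rho_A}\sqrt{\sigma_A}}$. I then choose $U_E$ to be a unitary extension of the partial isometry $V_1\,(W^\dagger)^{T}\,V_2^\dagger$ from $\operatorname{ran}V_2$ onto $\operatorname{ran}V_1$. Such an extension exists because both ranges have dimension $\dim A$ inside $E$. With this choice $Z^T=W^\dagger$ and the overlap equals $\Tr\abs{\sqrt{\rho_A}\sqrt{\sigma_A}}=F$, which is real and nonnegative. If $\dim E<\dim A$, I will work directly with the Schmidt supports, which live in subspaces of dimension at most $\dim E$; the same argument then goes through with $A'$ replaced by those supports.

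For the second claim, I expand the squared norm for unit vectors:
\[
\norm{\ket{\psi}-(I\otimes U_E)\ket{\phi}}_2^2=2-2\operatorname{Re}\bra{\psi}(I\otimes U_E)\ket{\phi}.
\]
For any maximizer, I can multiply $U_E$ by a global phase, which is still a unitary on $E$ and preserves the absolute value, to make the overlap real and positive. The overlap is then exactly $F(\rho_A,\sigma_A)$, which gives $2(1-F)$.

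The main obstacle is the bookkeeping when $E$ and $A$ have mismatched dimensions, namely extending the partial isometry to a genuine unitary on $E$. This is a finite-dimensional linear-algebra point: the orthogonal complements of the two ranges have equal dimension, so a unitary between them can be added.
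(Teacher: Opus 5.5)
Your argument is correct in substance, and it is the standard proof. The paper does not prove this theorem itself; it only cites \cite{Watrous2018}, whose proof rests on the same trace-norm duality $\norm{X}_1=\max_U\abs{\Tr(UX)}$ combined with polar decompositions of the purifications. When $\dim E\ge\dim A$ every step you wrote checks out. That includes the transpose trick, the contraction bound, and the choice of $U_E$ as a unitary extension of $V_1(W^\dagger)^TV_2^\dagger$.

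The case $\dim E<\dim A$, however, does not go through with ``the same argument'' verbatim.

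\begin{itemize}
\item There $V_1$ and $V_2$ can only be partial isometries, isometric on the conjugates of $\supp\rho_A$ and $\supp\sigma_A$ respectively. These supports may differ and may even have different dimensions.
\item Consequently $V_1(W^\dagger)^TV_2^\dagger$ need not be an isometry from $\operatorname{ran}V_2$ onto $\operatorname{ran}V_1$. For example, take $\rho_A=\proj0$ and $\sigma_A=\frac12(\proj0+\proj2)$ on $\mathbb C^3$ with $\dim E=2$; the two ranges then have dimensions $1$ and $2$.
\item Your upper bound survives, since $V_1^\dagger U_EV_2$ is still a contraction.
\item For attainment, replace $W$ by the partial-isometry polar factor $W_0$ of $Y=\sqrt{\rho_A}\sqrt{\sigma_A}$. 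It maps $\supp\abs{Y}\subseteq\supp\sigma_A$ onto $\operatorname{ran}Y\subseteq\supp\rho_A$. Prescribe $U_E$ only on the $(\operatorname{rank}Y)$-dimensional subspace $V_2\bigl(\overline{\supp\abs{Y}}\bigr)$, via $V_2\bar v\mapsto V_1(W_0^\dagger)^T\bar v$, where bars denote complex conjugation in your fixed basis. Then extend to a unitary on $E$; the overlap is again $\Tr\abs{Y}=F(\rho_A,\sigma_A)$.
\end{itemize}

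Alternatively, the textbook route (as in \cite{Watrous2018}) avoids the case split by vectorizing with respect to $E$ rather than $A$. Write $\ket{\psi}=(X_\psi\otimes I_E)\ket{\Gamma_E}$ with $X_\psi\colon E\to A$ and $X_\psi X_\psi^\dagger=\rho_A$, and similarly for $\phi$. Then $\bra{\psi}(I_A\otimes U_E)\ket{\phi}=\Tr(X_\psi^\dagger X_\phi U_E^T)$, where $X_\psi^\dagger X_\phi$ is an operator on $E$ itself. So the maximum over unitaries on $E$ is exactly $\norm{X_\psi^\dagger X_\phi}_1$, with no extension step needed. The identity $\norm{X_\psi^\dagger X_\phi}_1=\norm{\sqrt{\rho_A}\sqrt{\sigma_A}}_1$ then follows from the polar decompositions $X_\psi=\sqrt{\rho_A}W_\psi$ and $X_\phi=\sqrt{\sigma_A}W_\phi$ with partial isometries $W_\psi,W_\phi$. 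This handles all dimensions uniformly.

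Your final step, turning the overlap $F$ into $\norm{\ket{\psi}-(I\otimes U_E)\ket{\phi}}_2^2=2(1-F)$, is fine as written.
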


\subsection{Quantum communication protocols}
\label{subsec:protocols}

\begin{definition}
\label{def:protocol}
Alice and Bob receive classical inputs $x\in\mathcal X$ and
$y\in\mathcal Y$ in registers $X,Y$, for finite sets $\mathcal X,\mathcal Y$.
A pure quantum protocol $\Pi$ starts with an input-independent resource
$\ket{\theta}_{A_0B_0}$, possibly entangled, and alternates local
isometries with transmissions of registers $C_1,\ldots,C_m$.
All local environments are retained in finite-dimensional private
workspaces, and Bob measures his output at the end.
Each transmission counts as one message; either party may send first.
The protocol is \emph{safe} if the original input registers are retained
unchanged and used only as controls for local operations.
If $C_i$ contains $q_i$ qubits, its communication cost is
$\QCC(\Pi)=\sum_{i=1}^m q_i$.
\end{definition}

\begin{definition}
\label{def:protocol-states}
For an input distribution $\mu$, the canonical initial purification is
\[
 \ket{\Psi_0^\mu}
 =\sum_{x,y}\sqrt{\mu(x,y)}\,
   \ket{x}_X\ket{y}_Y\ket{x}_{R_X}\ket{y}_{R_Y}
   \ket{\theta}_{A_0B_0}.
\]
The reference $R=R_XR_Y$ is never acted on.
After preparation of message $C_i$, let
$\ket{\Psi_i^\mu}_{RW_iC_iS_i}$ be the global pure state,
where $W_i$ is the sender's retained register and $S_i$ is the
receiver's register before receipt, both excluding $C_i$.
For a safe protocol, $S_i=YB_i$ on Alice-to-Bob messages and
$S_i=XA_i$ on Bob-to-Alice messages, where $A_i,B_i$ denote the
corresponding private workspaces.

For a safe protocol, write
$\ket{x}_X\ket{y}_Y\ket{\psi^{xy}}_{AB}$ for its final state on
input $(x,y)$ before measurement. Tracing out $R$ gives
\[
 \tau_{XYAB}^\mu
 =\sum_{x,y}\mu(x,y)\ket{xy}\!\bra{xy}_{XY}
   \otimes\ket{\psi^{xy}}\!\bra{\psi^{xy}}_{AB}.
\]
Classical-input information is evaluated on $\tau^\mu$; information
involving $R$ is evaluated on the coherent protocol state.
\end{definition}

\begin{definition}
\label{def:error-communication-complexity}
For a function $f:\mathcal X\times\mathcal Y\to\mathcal Z$, define the average error and the worst-case error to be 
\[
 e_\mu(\Pi,f)=\Pr_{(x,y)\sim\mu}[\Pi(x,y)\ne f(x,y)],\qquad~\mbox{and}~
 e_{\mathrm{wc}}(\Pi,f)=\max_{x,y}\Pr[\Pi(x,y)\ne f(x,y)],
\]
respectively. Let $\mathcal P_k(f,\mu,\eps)$ and $\mathcal P_k(f,\eps)$ be the
sets of pure protocols with at most $k$ messages and, respectively,
$e_\mu(\Pi,f)\le\eps$ and $e_{\mathrm{wc}}(\Pi,f)\le\eps$.
Their communication complexities are
\[
 \QCC_k^{\mathrm{avg}}(f,\mu,\eps)
 =\inf_{\Pi\in\mathcal P_k(f,\mu,\eps)}\QCC(\Pi),\qquad
 \QCC_k(f,\eps)
 =\inf_{\Pi\in\mathcal P_k(f,\eps)}\QCC(\Pi).
\]
\end{definition}

\subsection{Quantum information complexity}
\label{subsec:qic}

We use the notion of quantum information complexity as defined in \cite{Tou15,BGKMT15}.

\begin{definition}[{\cite{Tou15}}]
\label{def:qic}
For a protocol $\Pi$ and an input distribution $\mu$, define
\[
 \QIC(\Pi,\mu)=\frac12\sum_{i=1}^mI(R_XR_Y:C_i\mid S_i)_{\Psi_i^\mu}
\]
\end{definition}

\begin{proposition}[Proposition 9 of {\cite{LT17}}]
\label{prop:safe-copies}
For a classical-input protocol $\Pi$, making local safe copies of its
inputs gives a protocol $\Pi^{\mathrm{safe}}$ with the same message
count, communication cost, and output distribution, such that
$\QIC(\Pi^{\mathrm{safe}},\mu)\le\QIC(\Pi,\mu)$ for every $\mu$.
\end{proposition}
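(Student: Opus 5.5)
The plan is to construct $\Pi^{\mathrm{safe}}$ explicitly and compare the quantum conditional mutual information terms message by message. At the start, Alice applies the classical copy isometry $\ket{x}_X\mapsto\ket{x}_X\ket{x}_{X'}$, and Bob does the same with $Y\mapsto YY'$. Both then run $\Pi$ verbatim, using $X',Y'$ in place of $X,Y$. The original registers $X,Y$ are never touched again. Since $\Pi$ is a classical-input protocol, it acts on the input registers as controls anyway, so this is the same as the standard construction. The message count, the message sizes $q_i$, and the final output distribution are unchanged: on each basis input the reduced state of the registers $\Pi$ acts on is identical, and the extra copies are orthogonal classical tags. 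Hence $\QCC$, the error probabilities and the round structure are preserved.

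For the information bound, note that in $\ket{\Psi_i^\mu}$ for $\Pi^{\mathrm{safe}}$ the sender's and receiver's registers differ from those of $\Pi$ only by the extra classical copies. Consider an Alice-to-Bob message. In $\Pi^{\mathrm{safe}}$ the receiver's register is $S_i^{\mathrm{safe}}=S_iY_{\mathrm{copy}}$, where $S_i$ is the receiver register of $\Pi$ (containing $Y'$ playing the role of $Y$) and $Y_{\mathrm{copy}}$ is the retained extra copy. The key observation is that $R_Y$, $Y$ and $Y'$ are perfectly correlated classical copies in the coherent state, so conditioning on one more copy can only reduce information. Concretely, using the chain rule in Lemma~\ref{thm:entropy-properties},
\[
I(R:C_i\mid S_iY_{\mathrm{copy}})=I(R\,Y_{\mathrm{copy}}:C_i\mid S_i)-I(Y_{\mathrm{copy}}:C_i\mid S_i)\le I(R\,Y_{\mathrm{copy}}:C_i\mid S_i).
\]
I would then argue that $I(RY_{\mathrm{copy}}:C_i\mid S_i)=I(R:C_i\mid S_i)$. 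The map $R_Y\mapsto R_YY_{\mathrm{copy}}$ is a classical copying isometry: the state is of the form $\sum\sqrt{\mu}\ket{y}_{R_Y}\ket{y}_{Y_{\mathrm{copy}}}\cdots$ with $Y_{\mathrm{copy}}$ untouched. A local isometry on $R$ leaves the information quantities invariant (Lemma~\ref{thm:entropy-properties}). Bob-to-Alice messages are symmetric, with Alice's extra copy $X_{\mathrm{copy}}$ absorbed into $R_X$ in the same way. Summing over $i$ and halving gives $\QIC(\Pi^{\mathrm{safe}},\mu)\le\QIC(\Pi,\mu)$.

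The main obstacle is bookkeeping. The copy $Y_{\mathrm{copy}}$ must genuinely be untouched by all later operations, which holds by construction since only $X',Y'$ are used as inputs. The same isometric absorption into $R$ must also remain valid for the sender-side extra copies. These sit in $W_i$, which does not appear in the conditional mutual information term, so they cause no difficulty. One must also check that the copy isometry on $R_Y$ commutes with the protocol's actions, since the protocol never acts on $R$ or $Y_{\mathrm{copy}}$.
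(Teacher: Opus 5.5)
Your construction, your check that message count, cost and output distribution are preserved, and the chain-rule step $I(R:C_i\mid S_iY_{\mathrm{copy}})\le I(RY_{\mathrm{copy}}:C_i\mid S_i)$ are all fine. This is the standard route; the paper gives no proof and simply cites \cite{LT17}.

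\textbf{The step that fails.} The claimed equality $I(RY_{\mathrm{copy}}:C_i\mid S_i)_{\Pi^{\mathrm{safe}}}=I(R:C_i\mid S_i)_{\Pi}$ is false in general. It is true that the global state of $\Pi^{\mathrm{safe}}$ is $(V_X\otimes V_Y)\ket{\Psi_i^\mu}$. Here $V_X:R_X\to R_XX_{\mathrm{copy}}$ and $V_Y:R_Y\to R_YY_{\mathrm{copy}}$ are the copying isometries, and $\ket{\Psi_i^\mu}$ is the state of $\Pi$. However, on an Alice-to-Bob message the sender's copy $X_{\mathrm{copy}}$ lies in $W_i$ and is traced out. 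So the state on $RY_{\mathrm{copy}}C_iS_i$ is the image of $\Pi$'s state on $RC_iS_i$ under $\mathcal D_{R_X}\otimes V_Y$, where $\mathcal D_{R_X}$ is complete dephasing in the computational basis. That map is a channel, not an isometry. The absence of $X_{\mathrm{copy}}$ from the information term is exactly what breaks the equality, not what makes it harmless.

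Equality can fail whenever $\Pi$ acts on or transmits its input registers coherently, which is exactly the situation the proposition is meant to handle. If $\Pi$ really only used and kept its inputs as controls, as you say in passing, it would already be safe. For a concrete failure, let $x$ be uniform and $y$ fixed, and let $\Pi$ send the register $X$ itself as $C_1$. Then $R_XC_1$ is maximally entangled and $I(R:C_1\mid S_1)_\Pi=2$. In $\Pi^{\mathrm{safe}}$, the registers $R_XX_{\mathrm{copy}}C_1$ are in a GHZ state and the corresponding term equals $1$.

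\textbf{The repair.} It is short, and the resulting inequality points the way you need. You have two options:
\begin{itemize}
\item Apply the data-processing inequality of Lemma~\ref{thm:entropy-properties} to the channel $\mathcal D_{R_X}\otimes V_Y$ acting on the first argument.
\item First enlarge the first argument, $I(RY_{\mathrm{copy}}:C_i\mid S_i)\le I(RX_{\mathrm{copy}}Y_{\mathrm{copy}}:C_i\mid S_i)$, by the chain rule and nonnegativity, and only then invoke isometric invariance. This is now legitimate, since the state on $RX_{\mathrm{copy}}Y_{\mathrm{copy}}C_iS_i$ is the image of $\Pi$'s state on $RC_iS_i$ under $V_X\otimes V_Y$.
\end{itemize}
Bob-to-Alice messages are symmetric, with $Y_{\mathrm{copy}}$ now the traced-out sender copy, which dephases $R_Y$. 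Either option gives the termwise bound $I(R:C_i\mid S_i^{\mathrm{safe}})_{\Pi^{\mathrm{safe}}}\le I(R:C_i\mid S_i)_\Pi$, and hence the proposition. As the example shows, the inequality can be strict, which is why the statement is an inequality rather than an equality.
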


\begin{definition}
\label{def:function-qic}
For $k\ge1$ and $0\le\eps<1/2$, define
\[
\begin{aligned}
 \QIC_k^{\mathrm{avg}}(f,\mu,\eps)
 &=\inf_{\Pi\in\mathcal P_k(f,\mu,\eps)}\QIC(\Pi,\mu),\\
 \QIC_k(f,\mu,\eps)
 &=\inf_{\Pi\in\mathcal P_k(f,\eps)}\QIC(\Pi,\mu).
\end{aligned}
\]
The latter charges information under $\mu$ but requires correctness on
every input, including inputs outside $\supp\mu$.
Both infima may equivalently be restricted to safe protocols.
Omitting $k$ removes the message bound.
\end{definition}

\begin{lemma}[Lemma 1 of \cite{Tou15}]
\label{lem:qic-entropy-formula}
\[
\begin{aligned}
 0\le\QIC(\Pi,\mu)&\le\QCC(\Pi),\\
 0\le\QIC_k^{\mathrm{avg}}(f,\mu,\eps)
 &\le\QCC_k^{\mathrm{avg}}(f,\mu,\eps),\\
 0\le\QIC_k(f,\mu,\eps)&\le\QCC_k(f,\eps).
\end{aligned}
\]
The complexity inequalities also hold without a message bound.
\end{lemma}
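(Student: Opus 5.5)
The statement is Lemma~\ref{lem:qic-entropy-formula}. My plan is to prove the protocol-level inequality $0\le\QIC(\Pi,\mu)\le\QCC(\Pi)$ term by term, and then pass to infima.

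For nonnegativity, each summand $I(R_XR_Y:C_i\mid S_i)_{\Psi_i^\mu}$ is a quantum conditional mutual information. By Lemma~\ref{thm:entropy-properties} it is nonnegative (strong subadditivity), so the sum is nonnegative.

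For the upper bound, I will invoke the bound $I(A:C\mid B)\le 2S(A)$ from Lemma~\ref{thm:entropy-properties}, with the message register playing the role of $A$. Conditional mutual information is symmetric in its two unconditioned arguments, since the defining entropy expression $S(AB)+S(BC)-S(B)-S(ABC)$ is symmetric under $A\leftrightarrow C$. Hence
\[
 I(R_XR_Y:C_i\mid S_i)=I(C_i:R_XR_Y\mid S_i)\le 2S(C_i)\le 2\log\dim C_i=2q_i .
\]
Multiplying by $\tfrac12$ and summing over the $m$ messages gives $\QIC(\Pi,\mu)\le\sum_i q_i=\QCC(\Pi)$. This is the only substantive step. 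The factor $\tfrac12$ in Definition~\ref{def:qic} is calibrated exactly to absorb the factor $2$ coming from a single quantum message. The only care needed is to use the symmetry so that the bound is in terms of the message dimension rather than the reference.

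For the complexity-level inequalities, take any $\Pi\in\mathcal P_k(f,\mu,\eps)$. The protocol-level bound gives $\QIC_k^{\mathrm{avg}}(f,\mu,\eps)\le\QIC(\Pi,\mu)\le\QCC(\Pi)$. Taking the infimum over $\Pi$ yields $\QIC_k^{\mathrm{avg}}\le\QCC_k^{\mathrm{avg}}$, and nonnegativity is inherited from the protocol-level statement. The worst-case version is identical, using $\mathcal P_k(f,\eps)$ in place of $\mathcal P_k(f,\mu,\eps)$. Note that the information is still measured under $\mu$, but the cost is input-independent. Dropping the message bound $k$ changes nothing, because the argument is uniform in $m$.

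I do not expect a real obstacle. The one point to check is finiteness of all entropies, which follows from the finite-dimensionality of the workspaces assumed in Definition~\ref{def:protocol}. If one prefers to avoid the symmetry step, an alternative is the chain-rule bound $I(R:C\mid S)\le S(C\mid S)+S(C)\le 2S(C)$.
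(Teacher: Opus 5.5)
Your proof is correct and is the standard argument behind this lemma; the paper gives no proof of its own and only cites Touchette, and what you describe is the usual argument for it. Each summand satisfies $I(R_XR_Y:C_i\mid S_i)\le 2S(C_i)\le 2q_i$, the factor $\tfrac12$ absorbs the $2$, nonnegativity follows from strong subadditivity, and the complexity-level inequalities follow by taking infima over the common protocol class $\mathcal P_k(f,\mu,\eps)$ or $\mathcal P_k(f,\eps)$, uniformly in the number of messages.
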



One application of our information potential is giving the optimal $r$-round QIC of the AND function. Here we give the definition of AND.

\begin{definition}
	\label{def:and-distributions}
	The function $\AND:\{0,1\}^2\to\{0,1\}$ is
	$\AND(x,y)=x\wedge y$.
	Let $\mu_0$ be uniform on $00,01,10$, and for $p\in[0,1]$ let
	$\mu_p=(1-p)\mu_0+p\delta_{11}$, where $\delta_{11}$ is the point
	mass at $(1,1)$.
	For a fixed protocol $\Pi$, write $I_0=\QIC(\Pi,\mu_0)$.
	Define the fixed reference state
	$\alpha=\frac23\ket{0}\!\bra{0}+\frac13\ket{1}\!\bra{1}$.
	The state $\alpha$ does not vary with $p$.
\end{definition}

\begin{definition}
	\label{def:and-error}
	A protocol $\Pi$ computes $\AND$ with worst-case error at most $\eps$ if
	\[
	\Pr[\Pi(x,y)\ne\AND(x,y)]\le\eps
	\quad\text{for all }(x,y)\in\{0,1\}^2.
	\]
	The probability is over the protocol's randomness and measurements on
	the specified input. This requirement includes $11$, even though
	$\mu_0(11)=0$.
	A protocol has no false positives if it outputs $0$ with probability
	one on $00,01,10$. Its acceptance probability on $11$ is denoted by $q$.
\end{definition}

\begin{definition}
	\label{def:disjointness}
	For $x,y\in\{0,1\}^n$, set disjointness satisfies
	$\DISJ_n(x,y)=1$ if and only if $x_jy_j=0$ for every
	$j\in\{1,\ldots,n\}$.
	Protocols for $\DISJ_n$ use the message-count convention of
	Definition~\ref{def:protocol} and the worst-case-error convention of
	Definition~\ref{def:and-error}, with $\AND$ replaced by $\DISJ_n$.
\end{definition}

\section{Information Potential}
\label{sec:potential}

We introduce a new potential function, which we term the \emph{information potential}. Its definition is motivated by the integral representation of quantum relative entropy. We then give its quadratic variational representation, relate its increments to conditional mutual information and QIC, and prove stability under perturbations.

\subsection{From relative entropy to information potential}

For positive semidefinite operators $\rho,\sigma$ on the same Hilbert space satisfying
$\supp\rho\subseteq\supp\sigma$, and for $u>0$, define
\begin{equation}
\label{eq:resolvent-functional}
\mathcal F_u(\rho,\sigma)
=\left\langle\rho-\sigma,M\right\rangle,
\end{equation}
where $M$ is the solution on $\supp\sigma$ of
\[
\mathcal K_{\rho,\sigma}^{(u)}(M)=\rho M+uM\sigma=\rho-\sigma.
\]
The inner product is the Hilbert--Schmidt inner product from Section~\ref{sec:preliminaries}.
The inverse of $\mathcal K_{\rho,\sigma}^{(u)}$ acts on the space of linear operators on $\supp\sigma$. We extend $M$ to the whole Hilbert space of $\sigma$, we called $H_\sigma$, by setting it to zero on the orthogonal complement of $\supp\sigma$.

\noindent\begin{minipage}{\linewidth}
With this notation, the resolvent representation in Lemma~\ref{lemma:resolvent} reads
\[
D(\rho\Vert\sigma)
=\int_0^\infty \frac{u}{(1+u)^2}\,
\mathcal F_u(\rho,\sigma)\,du
\]
for strictly positive density operators $\rho,\sigma$.
Our potential uses the same integrand for the pair
$(\omega_{TK},\alpha_T\otimes\omega_K)$, but replaces the kernel
$u/(1+u)^2$ by $1$ and restricts the integral to $[1,2]$.
On this interval the original kernel is bounded below by $2/9$;
this comparison will connect potential increments to conditional mutual information.

\begin{definition}[Information potential]
	\label{def:potential}
	Fix a positive definite density operator $\alpha_T$ on a register $T$.
	For any state $\omega=\omega_{TK}$, define its information potential by
	\[
	V(\omega,T,K)
	=\int_1^2\mathcal F_u
	(\omega_{TK},\alpha_T\otimes\omega_K)\,du,
	\qquad \omega_K=\Tr_T\omega_{TK},
	\]
	where $\mathcal F_u$ is defined in \eqref{eq:resolvent-functional}.
\end{definition}
\end{minipage}\par

The reference state $\alpha_T$ is fixed when comparing potentials. It is not required to equal $\omega_T$.
When $\alpha_T=\omega_T$, the potential measures correlations between $T$ and $K$: it is zero exactly when $\omega_{TK}=\omega_T\otimes\omega_K$, as follows from the variational representation below. For a different reference state, it can also be positive because $\omega_T$ differs from $\alpha_T$.

\subsection{Variational representation and basic properties}

Analyzing the integrand $\langle\rho-\sigma,M\rangle$ directly is difficult because $M$ is defined implicitly by a linear equation. The useful observation is that $\mathcal F_u$ has a quadratic variational representation.
Consider the scalar analogue of our integrand. If $\rho$ and $\sigma$ were scalars, the integrand would be $(\rho-\sigma)^2/(\rho+u\sigma)$, which has the form $b^2/a$ for $a>0$. A standard variational identity gives
\[
\frac{b^2}{a}=\max_{x\in\mathbb R}(2bx-ax^2).
\]
The following lemma gives the corresponding representation for operators.

\begin{lemma}[Variational representation of information potential]
	\label{lem:potential-variational}
	For $\rho,\sigma$ as in \eqref{eq:resolvent-functional}, define
	\[
	\begin{aligned}
		\mathcal L_{\rho,\sigma}^{(u)}(M)
		=2\operatorname{Re}\Tr\bigl(M^\dagger(\rho-\sigma)\bigr)
		-\Tr(\rho MM^\dagger)-u\Tr(\sigma M^\dagger M).
	\end{aligned}
	\]
	Then
	\[
	\mathcal F_u(\rho,\sigma)
	=
	\max_M
	\mathcal L_{\rho,\sigma}^{(u)}(M).
	\]
	The maximum is over linear operators on $\supp\sigma$.
	The maximizer is the solution of $\rho M_u+uM_u\sigma=\rho-\sigma$.
	
	Substituting $\rho=\omega_{TK}$ and $\sigma=\alpha_T\otimes\omega_K$, with the maximum over operators on $\mathcal H_T\otimes\supp\omega_K$, gives
	\[
	\begin{aligned}
		V(\omega,T,K)
		&=\int_{1}^2
		\max_M
		\mathcal L_{\omega_{TK},\alpha_T\otimes \omega_K}^{(u)}(M) du
	\end{aligned}
	\]
\end{lemma}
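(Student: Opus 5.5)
We will prove the representation by viewing $\mathcal L^{(u)}_{\rho,\sigma}$ as a strictly concave quadratic functional on the Hilbert space $\mathcal B(\supp\sigma)$ of linear operators on $\supp\sigma$, equipped with the real inner product $\operatorname{Re}\langle\cdot,\cdot\rangle$. Write
\[
\mathcal L^{(u)}_{\rho,\sigma}(M)=2\operatorname{Re}\langle M,\rho-\sigma\rangle-\operatorname{Re}\langle M,\mathcal K(M)\rangle,
\qquad \mathcal K(M)=\rho M+uM\sigma,
\]
where $\mathcal K=\mathcal K^{(u)}_{\rho,\sigma}$, and where $\rho$ is compressed to $\supp\sigma$. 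This compression is harmless, because $\supp\rho\subseteq\supp\sigma$. The identity $\Tr(M^\dagger\rho M)=\Tr(\rho MM^\dagger)$ together with $\Tr(M^\dagger M\sigma)=\Tr(\sigma M^\dagger M)$ shows that the quadratic part is $\langle M,\mathcal K(M)\rangle$, and that it is real.

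The first step is to check that $\mathcal K$ is self-adjoint and positive definite on $\mathcal B(\supp\sigma)$. Self-adjointness is immediate: left multiplication by the Hermitian $\rho$ and right multiplication by the Hermitian $\sigma$ are each self-adjoint for the Hilbert--Schmidt inner product. For positivity, we have $\langle M,\mathcal K(M)\rangle=\Tr(M^\dagger\rho M)+u\Tr(M^\dagger M\sigma)$. Both terms are nonnegative, and the second is strictly positive for $M\neq 0$, since $\sigma$ is positive definite on $\supp\sigma$ and $u>0$. Alternatively, in eigenbases of $\rho$ and $\sigma$ (as in the proof of Lemma~\ref{lemma:resolvent}), $\mathcal K$ is represented by multiplication by $p_i+uq_j$. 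This quantity is positive because $q_j>0$ and $p_i\ge 0$. Consequently $\mathcal K$ is invertible, and the solution $M_u=\mathcal K^{-1}(\rho-\sigma)$ used in \eqref{eq:resolvent-functional} exists and is unique.

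The second step is to complete the square. For any $M$, write $M=M_u+N$ and expand, using self-adjointness and $\mathcal K(M_u)=\rho-\sigma$:
\[
\mathcal L^{(u)}_{\rho,\sigma}(M)=\langle M_u,\mathcal K(M_u)\rangle-\langle N,\mathcal K(N)\rangle
=\langle\rho-\sigma,M_u\rangle-\langle N,\mathcal K(N)\rangle .
\]
The cross terms cancel in the following way. The linear part contributes $2\operatorname{Re}\langle N,\rho-\sigma\rangle$, while the quadratic part contributes $-2\operatorname{Re}\langle N,\mathcal K(M_u)\rangle$, and these are equal and opposite. It remains to note that $\langle\rho-\sigma,M_u\rangle=\langle M_u,\mathcal K(M_u)\rangle$ is real. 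Hence the maximum equals $\mathcal F_u(\rho,\sigma)$, it is attained exactly at $N=0$, and the maximizer $M_u$ is unique by strict positivity. This identity also gives the gap formula $\mathcal L(M)=\mathcal F_u-\langle M-M_u,\mathcal K(M-M_u)\rangle$, which the paper will presumably reuse later.

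The final step is the substitution $\rho=\omega_{TK}$, $\sigma=\alpha_T\otimes\omega_K$. Here $\supp\sigma=\mathcal H_T\otimes\supp\omega_K$, because $\alpha_T$ is positive definite, and $\supp\omega_{TK}\subseteq\mathcal H_T\otimes\supp\omega_K$ holds by the standard support property of reduced states. We then integrate the pointwise identity over $u\in[1,2]$, which gives the expression for $V$.

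The only mildly delicate point is this support bookkeeping: we must ensure that restricting to $\supp\sigma$ loses nothing, and that $\rho$ acts within that space so that $\mathcal K$ maps $\mathcal B(\supp\sigma)$ to itself. Measurability or continuity in $u$ for the integral follows from the explicit eigenbasis formula $\sum_{i,j}\abs{\langle i|j\rangle}^2(p_i-q_j)^2/(p_i+uq_j)$.
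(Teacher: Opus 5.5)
Your proposal is correct and follows essentially the same route as the paper. You show that $\mathcal K^{(u)}_{\rho,\sigma}$ is self-adjoint and positive definite on operators over $\supp\sigma$ via $\langle M,\mathcal K(M)\rangle=\Tr(\rho MM^\dagger)+u\Tr(\sigma M^\dagger M)>0$, and then complete the square around $M_u=\mathcal K^{-1}(\rho-\sigma)$ to obtain $\mathcal L(M)=\mathcal F_u(\rho,\sigma)-\langle M-M_u,\mathcal K(M-M_u)\rangle$. Your explicit support bookkeeping for the substitution $\rho=\omega_{TK}$, $\sigma=\alpha_T\otimes\omega_K$, and your remark on regularity in $u$, fill in details the paper leaves implicit.
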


\begin{proof}
	For every nonzero $M$,
	\[
	\begin{aligned}
		\left\langle M,\mathcal K_{\rho,\sigma}^{(u)}(M)\right\rangle=
		\Tr(\rho MM^\dagger)
		+u\Tr(\sigma M^\dagger M)>0,
	\end{aligned}
	\]
	because $\sigma$ is positive definite on $\supp \sigma$.
	Thus $\mathcal K_{\rho,\sigma}^{(u)}$ is positive definite and invertible.
	
	Set $M_u=(\mathcal K_{\rho,\sigma}^{(u)})^{-1}(\rho-\sigma)$.
	Completing the square gives
	\[
	\begin{aligned}
		\mathcal L_{\rho,\sigma}^{(u)}(M) 
		&= 2\operatorname{Re}\Tr\bigl(M^\dagger(\rho-\sigma)\bigr) - \Big[ \Tr(\rho MM^\dagger) + u\Tr(\sigma M^\dagger M) \Big]
		\\&= 
		2\operatorname{Re}\langle M, \rho-\sigma \rangle - \langle M, \mathcal K_{\rho,\sigma}^{(u)}(M) \rangle
		\\&= 2\operatorname{Re}\langle M, \mathcal K_{\rho,\sigma}^{(u)}(M_u) \rangle - \langle M, \mathcal K_{\rho,\sigma}^{(u)}(M) \rangle
		\\&=
		-\langle M-M_u, \mathcal K_{\rho,\sigma}^{(u)}(M-M_u) \rangle + \langle M_u, \mathcal K_{\rho,\sigma}^{(u)}(M_u) \rangle
		\\&=
		\mathcal F_u(\rho,\sigma)
		-
		\left\langle
		M-M_u,\,
		\mathcal K_{\rho,\sigma}^{(u)}(M-M_u)
		\right\rangle.
	\end{aligned}
	\]
	The fourth equality follows because the superoperator $\mathcal K_{\rho,\sigma}^{(u)}$
	is self-adjoint with respect to the Hilbert--Schmidt inner product.
	The subtracted term is nonnegative and vanishes exactly when
	$M=M_u$. The same identity holds for operators $M$ defined on the whole Hilbert space by extending them by zero.
\end{proof}



Completing the square also expresses the loss under a partial trace as a nonnegative quadratic form.

\begin{lemma}
	\label{lem:potential-message-gap}
	Let $\rho,\sigma$ be positive semidefinite operators with
	$\supp\rho\subseteq\supp\sigma$, and write
	$\rho^-=\Tr_C\rho$ and $\sigma^-=\Tr_C\sigma$.
	For $u>0$, let $M_u$ and $N_u^-$ be the optimizing operators for
	$(\rho,\sigma)$ and $(\rho^-,\sigma^-)$, respectively, each
	extended by zero to the whole Hilbert space. Let
	$N_u=N_u^-\otimes I_C$ and $R_u=M_u-N_u$. Then
	\[
	\begin{aligned}
		\mathcal F_u(\rho,\sigma)-\mathcal F_u(\rho^-,\sigma^-)
		&=\Tr(\rho R_uR_u^\dagger)
		+u\Tr(\sigma R_u^\dagger R_u)\geq0.
	\end{aligned}
	\]
	Consequently, for every state $\omega_{TKC}$,
	\[
	V(\omega_{TKC},T,KC)\geq V(\omega_{TK},T,K).
	\]
\end{lemma}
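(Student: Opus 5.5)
The plan is to read off the identity from the completing-the-square formula in the proof of Lemma~\ref{lem:potential-variational}, with the reduced optimizer lifted by tensoring with the identity on $C$. Recall that for every operator $M$ on the full space (extended by zero outside $\supp\sigma$),
\[
\mathcal L^{(u)}_{\rho,\sigma}(M)=\mathcal F_u(\rho,\sigma)-\Tr(\rho (M-M_u)(M-M_u)^\dagger)-u\Tr(\sigma (M-M_u)^\dagger(M-M_u)).
\]
Substituting $M=N_u=N_u^-\otimes I_C$, we have $M-M_u=-R_u$, and the subtracted quadratic form is exactly $\Tr(\rho R_uR_u^\dagger)+u\Tr(\sigma R_u^\dagger R_u)$. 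It therefore remains to show
\[
\mathcal L^{(u)}_{\rho,\sigma}(N_u^-\otimes I_C)=\mathcal L^{(u)}_{\rho^-,\sigma^-}(N_u^-)=\mathcal F_u(\rho^-,\sigma^-).
\]
The second equality is the maximality statement of Lemma~\ref{lem:potential-variational} applied to $(\rho^-,\sigma^-)$. The first equality is a term-by-term partial trace computation:
\[
\Tr\bigl((N^-\otimes I)^\dagger(\rho-\sigma)\bigr)=\Tr\bigl(N^{-\dagger}(\rho^--\sigma^-)\bigr),
\qquad
\Tr\bigl(\rho (N^-\otimes I)(N^-\otimes I)^\dagger\bigr)=\Tr(\rho^-N^-N^{-\dagger}),
\]
and similarly for the $\sigma$ term. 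Each holds because an operator of the form $X\otimes I_C$ can be moved out of $\Tr_C$.

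The one point needing care is supports, and I expect it to be the only real obstacle. The maximization for $(\rho,\sigma)$ is over operators on $\supp\sigma$, and the completing-the-square identity was stated for such operators and extended by zero. Since $\sigma\ge0$, we have $\supp\sigma\subseteq\supp\sigma^-\otimes\mathcal H_C$. Hence $N_u^-\otimes I_C$ may fail to vanish on the part of $\supp\sigma^-\otimes\mathcal H_C$ orthogonal to $\supp\sigma$. To handle this, I will check that the completing-the-square identity holds for every operator on the whole space, not only those supported on $\supp\sigma$. That identity uses only self-adjointness of $\mathcal K$ with respect to the Hilbert--Schmidt inner product and the equation $\mathcal K(M_u)=\rho-\sigma$, and the latter is valid on the whole space because $\rho$ and $\sigma$ vanish off $\supp\sigma$ (using $\supp\rho\subseteq\supp\sigma$). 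With this checked, the identity applies to $N_u$ directly and no projection onto $\supp\sigma$ is needed.

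Nonnegativity is immediate, since $\rho,\sigma\ge0$ and $u>0$.

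For the consequence about $V$, apply the identity with $\rho=\omega_{TKC}$ and $\sigma=\alpha_T\otimes\omega_{KC}$. Then $\Tr_C\rho=\omega_{TK}$ and $\Tr_C\sigma=\alpha_T\otimes\omega_K$, which are exactly the pairs used in Definition~\ref{def:potential}. The support condition holds because $\supp\omega_{TKC}\subseteq\mathcal H_T\otimes\supp\omega_{KC}=\supp(\alpha_T\otimes\omega_{KC})$, using that $\alpha_T$ is positive definite. Integrating the nonnegative gap over $u\in[1,2]$ then gives $V(\omega_{TKC},T,KC)\ge V(\omega_{TK},T,K)$.
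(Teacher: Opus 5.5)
Your proposal is correct and is essentially the paper's proof: you lift the reduced optimizer as $N_u^-\otimes I_C$, use $\mathcal L^{(u)}_{\rho,\sigma}(N_u^-\otimes I_C)=\mathcal L^{(u)}_{\rho^-,\sigma^-}(N_u^-)=\mathcal F_u(\rho^-,\sigma^-)$, and read the gap off the completing-the-square identity. Your check that this identity holds for arbitrary operators on the whole space correctly handles the fact that $N_u^-\otimes I_C$ need not vanish off $\supp\sigma$, which the paper does not address at this point; it holds because $\mathcal K^{(u)}_{\rho,\sigma}(M_u)=\rho-\sigma$ holds globally and $\mathcal K^{(u)}_{\rho,\sigma}$ is Hilbert--Schmidt self-adjoint. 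You also silently use $\supp\rho^-\subseteq\supp\sigma^-$ when you apply the variational lemma to the reduced pair; this follows from $\rho\preceq c\,\sigma$ and positivity of the partial trace.
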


\begin{proof}
	By the definition of the partial trace,
	\[
	\mathcal L_{\rho,\sigma}^{(u)}(N_u^-\otimes I_C)
	=\mathcal L_{\rho^-,\sigma^-}^{(u)}(N_u^-)
	=\mathcal F_u(\rho^-,\sigma^-).
	\]
	By Lemma~\ref{lem:potential-variational}, completing the square therefore gives
	\[
		\mathcal F_u(\rho,\sigma)
		-\mathcal L_{\rho,\sigma}^{(u)}(N_u)=
		\operatorname{Tr}
		\bigl(\rho(M_u-N_u)(M_u-N_u)^\dagger\bigr)+
		u\operatorname{Tr}
		\bigl(\sigma(M_u-N_u)^\dagger(M_u-N_u)\bigr)
	\]
\end{proof}

The same variational representation shows that local isometries preserve the potential.

\begin{lemma}
	\label{lem:potential-isometries}
	The potential does not change under isometries on register $K$.
	More precisely, let $U_K:K\to K'$ be an isometry, write
	$U=I_T\otimes U_K$, and set $\widetilde\rho=U\rho U^\dagger$.
	Then $V(\widetilde\rho,T,K')=V(\rho,T,K)$.
\end{lemma}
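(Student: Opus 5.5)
The plan is to prove equality of the integrands pointwise in $u\in[1,2]$ and then integrate. Write $\rho=\rho_{TK}$ and $\sigma=\alpha_T\otimes\rho_K$. Then $\widetilde\rho_{TK'}=U\rho U^\dagger$ and $\widetilde\rho_{K'}=U_K\rho_KU_K^\dagger$. Since $U$ acts trivially on $T$ and $\alpha_T$ is fixed, $\widetilde\sigma=\alpha_T\otimes U_K\rho_KU_K^\dagger=U\sigma U^\dagger$. It therefore suffices to show $\mathcal F_u(U\rho U^\dagger,U\sigma U^\dagger)=\mathcal F_u(\rho,\sigma)$ for every $u>0$. This covariance statement is really about the resolvent equation.

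\emph{Step 1 (supports).} Since $U^\dagger U=I$, the map $U$ sends $\supp\sigma$ isometrically onto $\supp(U\sigma U^\dagger)=U\,\supp\sigma$. Likewise $\supp(U\rho U^\dagger)=U\,\supp\rho\subseteq U\,\supp\sigma$, so the pair $(\widetilde\rho,\widetilde\sigma)$ satisfies the hypotheses of \eqref{eq:resolvent-functional}.

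\emph{Step 2 (transport the solution).} Let $M_u$ be the solution for $(\rho,\sigma)$ on $\supp\sigma$, extended by zero, and set $\widetilde M=UM_uU^\dagger$. This operator is supported on $U\,\supp\sigma$, and
\[
\widetilde\rho\widetilde M+u\widetilde M\widetilde\sigma
=U(\rho M_u+uM_u\sigma)U^\dagger=U(\rho-\sigma)U^\dagger=\widetilde\rho-\widetilde\sigma .
\]
By the invertibility of $\mathcal K^{(u)}_{\widetilde\rho,\widetilde\sigma}$ on operators on $\supp\widetilde\sigma$ (proof of Lemma~\ref{lem:potential-variational}), $\widetilde M$ is the unique solution. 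Hence
\[
\mathcal F_u(\widetilde\rho,\widetilde\sigma)=\Tr\bigl(U(\rho-\sigma)U^\dagger UM_uU^\dagger\bigr)=\Tr\bigl((\rho-\sigma)M_u\bigr)=\mathcal F_u(\rho,\sigma),
\]
using $U^\dagger U=I$ and cyclicity of the trace.

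As an alternative to Step 2, one can argue through the variational formula of Lemma~\ref{lem:potential-variational}. The map $M\mapsto UMU^\dagger$ is a bijection between operators on $\supp\sigma$ and operators on $U\,\supp\sigma$, with inverse $N\mapsto U^\dagger NU$. It preserves $\mathcal L^{(u)}$ term by term, so the two maxima coincide. The only point requiring care, and the main obstacle, is this support bookkeeping: $U$ is an isometry rather than a unitary, so $UU^\dagger$ is merely the projection onto the range. One must check that every operator appearing is supported inside the range of $U$, where $UU^\dagger$ acts as the identity, so that it can be removed. Once that is verified, integrating over $u\in[1,2]$ gives $V(\widetilde\rho,T,K')=V(\rho,T,K)$.
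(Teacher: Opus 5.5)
Your proof is correct and takes essentially the paper's approach. You note $\widetilde\sigma=\alpha_T\otimes\widetilde\rho_{K'}=U\sigma U^\dagger$, show $\mathcal F_u(U\rho U^\dagger,U\sigma U^\dagger)=\mathcal F_u(\rho,\sigma)$ pointwise in $u$, and integrate over $[1,2]$. The paper's proof is exactly your alternative: the bijection $M\mapsto UMU^\dagger$ preserves $\mathcal L^{(u)}$ by $U^\dagger U=I$ and cyclicity. Your Step 2 instead transports the resolvent solution and uses uniqueness of the solution of the $\mathcal K^{(u)}_{\widetilde\rho,\widetilde\sigma}$ equation on operators on $\supp\widetilde\sigma$, which is an equally valid variant.
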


\begin{proof}
	Put $\sigma=\alpha_T\otimes\rho_K$ and
	$\widetilde\sigma=U\sigma U^\dagger=\alpha_T\otimes\widetilde\rho_{K'}$.
	Using $U^\dagger U=I$ and the cyclicity of the trace gives
	\[
	\mathcal L_{\widetilde\rho,\widetilde\sigma}^{(u)}(UMU^\dagger)
	=\mathcal L_{\rho,\sigma}^{(u)}(M).
	\]
	The map $M\mapsto UMU^\dagger$ is a bijection between operators on
	$\supp\sigma$ and operators on $\supp\widetilde\sigma$.
	Taking the maxima gives
	$\mathcal F_u(\widetilde\rho,\widetilde\sigma)=\mathcal F_u(\rho,\sigma)$.
	Integrating over $u\in[1,2]$ proves the lemma.
\end{proof}

\subsection{Conditional mutual information and QIC}

We now apply the comparison between the two integral representations.
The nonnegative quadratic gap from Lemma~\ref{lem:potential-message-gap}
allows us to bound each potential increment by conditional mutual information.

\begin{theorem}
	\label{thm:potential-cmi}
	Let $\omega_{TKC}$ be any state and let $\alpha_T$ be a positive definite density operator. Conditional mutual information and potential have the following relation:
	\[
	\begin{aligned}
		0\leq V(\omega_{TKC},T,KC)-V(\omega_{TK},T,K)
		\leq\frac{9}{2}I(T:C\mid K)_\omega.
	\end{aligned}
	\]
\end{theorem}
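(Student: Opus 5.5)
The lower bound $0\le V(\omega_{TKC},T,KC)-V(\omega_{TK},T,K)$ is exactly the monotonicity statement of Lemma~\ref{lem:potential-message-gap}, applied with $\rho=\omega_{TKC}$, $\sigma=\alpha_T\otimes\omega_{KC}$, and the traced register $C$. For this, we note that $\Tr_C\sigma=\alpha_T\otimes\omega_K$ and that $\supp\omega_{TKC}\subseteq\mathcal H_T\otimes\supp\omega_{KC}$. So the real work is the upper bound. The plan is to compare the potential increment with the corresponding relative-entropy increment through the resolvent representation, pointwise in $u$.

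First, use the identity stated in the technical overview,
\[
I(T:C\mid K)_\omega
=D(\omega_{TKC}\Vert\alpha_T\otimes\omega_{KC})
-D(\omega_{TK}\Vert\alpha_T\otimes\omega_K).
\]
This is verified by expanding both relative entropies. The $\log\alpha_T$ terms give $-\Tr\omega_T\log\alpha_T$ in each, so they cancel. The remaining terms are $-S(TKC)+S(KC)+S(TK)-S(K)$, which is the conditional mutual information. Next, write each relative entropy via Lemma~\ref{lemma:resolvent} as $\int_0^\infty\frac{u}{(1+u)^2}\mathcal F_u(\cdot,\cdot)\,du$. Their difference is then
\[
\int_0^\infty\frac{u}{(1+u)^2}g_u\,du,\qquad g_u=\mathcal F_u(\rho,\sigma)-\mathcal F_u(\rho^-,\sigma^-),
\]
and Lemma~\ref{lem:potential-message-gap} gives $g_u\ge0$ for all $u$. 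Restricting the integral to $[1,2]$ and using $\frac{u}{(1+u)^2}\ge\frac{2}{9}$ there (the function decreases on $[1,\infty)$ and equals $2/9$ at $u=2$) gives
\[
I(T:C\mid K)\ge\frac29\int_1^2 g_u\,du=\frac29\bigl(V(\omega_{TKC},T,KC)-V(\omega_{TK},T,K)\bigr),
\]
which is the claimed bound.

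The main obstacle is that Lemma~\ref{lemma:resolvent} is stated only for strictly positive density matrices. Here $\omega_{TKC}$ and $\omega_{KC}$ may be singular, and the resolvent must be taken on $\supp\sigma$. I would handle this in one of two ways.

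The first option is to redo the spectral computation of Lemma~\ref{lemma:resolvent} on $\supp\sigma$, allowing $p_i=0$. The scalar integral $\int_0^\infty\frac{(p-q)^2}{p+tq}\frac{t}{(1+t)^2}dt=p\log p-p\log q-p+q$ still holds for $p=0$, $q>0$: its left side is $q\int_0^\infty\frac{dt}{(1+t)^2}=q$, and with $0\log0=0$ the right side is also $q$. Summing over eigenpairs then gives $D(\rho\Vert\sigma)-\Tr\rho+\Tr\sigma$. Since $\Tr\rho=\Tr\sigma=1$ for both pairs, this equals $D(\rho\Vert\sigma)$, as the identity requires.

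The second option is to replace $\omega$ by $(1-\epsilon)\omega+\epsilon\,I/d$, prove the inequality there, and let $\epsilon\to0$. This needs continuity of $\mathcal F_u$ and of the CMI, and the support change of $\sigma$ makes the continuity of $\mathcal F_u$ delicate. I would therefore take the first option, checking carefully that $M$ extended by zero matches the convention in \eqref{eq:resolvent-functional}.
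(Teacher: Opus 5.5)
Your proposal is correct and follows the paper's proof: $g_u\ge0$ from Lemma~\ref{lem:potential-message-gap}, the resolvent representation of the relative-entropy difference, the kernel bound $u/(1+u)^2\ge 2/9$ on $[1,2]$, and the identity writing $I(T:C\mid K)_\omega$ as $D(\omega_{TKC}\Vert\alpha_T\otimes\omega_{KC})-D(\omega_{TK}\Vert\alpha_T\otimes\omega_K)$. Your extension of Lemma~\ref{lemma:resolvent} to singular $\rho$ is also correct: you work on $\supp\sigma$ and check the scalar integral at $p=0$. This is a detail the paper's proof passes over when it applies that lemma, which is stated only for strictly positive matrices.
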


\begin{proof}
	For $u>0$, write
	\[
	g_u=\mathcal F_u(\omega_{TKC},\alpha_T\otimes\omega_{KC})
	-\mathcal F_u(\omega_{TK},\alpha_T\otimes\omega_K).
	\]
	Lemma \ref{lem:potential-message-gap} gives $g_u\geq 0$.
	Lemma \ref{lemma:resolvent} shows
	\[
		D(\omega_{TKC}\Vert\alpha_T\otimes \omega_{KC})
		-D(\omega_{TK}\Vert\alpha_T\otimes \omega_{K})=
		\int_0^\infty \frac{u}{(1+u)^2}\,g_u\,du.
	\]
	Since $u/(1+u)^2\geq 2/9$ for $1\leq u\leq 2$, nonnegativity
	of $g_u$ implies
	\[
	\int_1^2 g_u\,du
	\leq \frac{9}{2}
	\bigl[
	D(\omega_{TKC}\Vert\alpha_T\otimes \omega_{KC})
	-D(\omega_{TK}\Vert\alpha_T\otimes \omega_{K})
	\bigr].
	\]
	For any state $\omega_{TQ}$,
	$$
	\begin{aligned} 
		D(\omega_{TQ} \parallel \alpha_T \otimes \omega_Q)  &= 
		-S(\omega_{TQ}) - \Tr[\omega_{TQ} \log(\alpha_T\otimes\omega_Q)] 
		\\ &= 
		-S(\omega_{TQ}) - \Tr(\omega_T \log \alpha_T) + S(\omega_Q) 
		\\ &= 
		\left(S(\omega_T) + S(\omega_Q) - S(\omega_{TQ})\right) +\Tr(\omega_T \log \omega_T-\omega_T \log \alpha_T) \\ &= I(T:Q)_\omega + D(\omega_T \parallel \alpha_T) \end{aligned}
	$$
	So
	\[
	\begin{aligned}
		&D(\omega_{TKC}\Vert\alpha_T\otimes\omega_{KC})
		-D(\omega_{TK}\Vert\alpha_T\otimes\omega_K)\\
		&=I(T:KC)_\omega-I(T:K)_\omega
		=I(T:C\mid K)_\omega.
	\end{aligned}
	\]
	Consequently,
	\[
	0\leq\int_1^2 g_u\,du
	\leq\frac{9}{2}I(T:C\mid K)_\omega.
	\]
	Notice that the proof above does not require $\omega_T=\alpha_T$.
\end{proof}

To apply this bound to a protocol, we use the following shorthand for the potential at each message position.

\begin{definition}
	\label{def:potential-protocol}
	For a pure safe protocol $\Pi$ with Bob producing the output,
	let $\rho_i$ denote the global state just after the $i$-th message has been received,
	and let $\rho_0$ be the initial state.
	Set $T=R_X$ and let $K$ consist of $R_YY$ and Bob's work registers.
	If Alice has just sent a message to Bob, $K$ includes that message. Write
	\[
	V(\rho_i)=V((\rho_i)_{TK},T,K).
	\]
\end{definition}

Since QIC is a sum of conditional mutual information terms, summing the preceding bound over Alice-to-Bob messages gives the following corollary.

\begin{corollary}
	\label{lem:potential-qic}
	Fix a safe, purified quantum protocol $\Pi$ with distribution $\mu$. Let $\mathcal I_A$ be the index set of messages sent from Alice to Bob. There exists a universal constant $C_1>0$ such that
	\[
	0\le \sum_{i\in\mathcal I_A}
	\bigl(V(\rho_i)-V(\rho_{i-1})\bigr)\le C_1 \QIC(\Pi,\mu)
	\]
\end{corollary}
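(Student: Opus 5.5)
The plan is to apply Theorem~\ref{thm:potential-cmi} message by message. Fix $i\in\mathcal I_A$ and look at the global pure state $\Psi_i^\mu$ just after Alice prepares $C_i$. Since the protocol is safe, Bob's register before receipt is $S_i=YB_i$. By Definition~\ref{def:potential-protocol}, $\rho_{i-1}$ is the state after the previous message was received, and between that reception and the preparation of $C_i$ only Alice acts. Alice's local isometry does not touch $T=R_X$ or Bob's side, so the reduced state on $TK$ is the same as in $\rho_{i-1}$.

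The one subtlety is that $R_Y$ is part of $K$ in Definition~\ref{def:potential-protocol} but is not part of $S_i$. I would therefore take $K=R_YYB_i$. After receipt, Bob's side becomes $KC_i$ with the reduced state unchanged, so $V(\rho_i)=V((\Psi_i^\mu)_{TKC_i},T,KC_i)$ and $V(\rho_{i-1})=V((\Psi_i^\mu)_{TK},T,K)$.

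Theorem~\ref{thm:potential-cmi} then gives
\[
0\le V(\rho_i)-V(\rho_{i-1})\le \tfrac92\, I(R_X:C_i\mid R_YYB_i).
\]
It remains to compare this with the QIC term $I(R_XR_Y:C_i\mid YB_i)$. By the chain rule in Lemma~\ref{thm:entropy-properties},
\[
I(R_XR_Y:C_i\mid YB_i)=I(R_Y:C_i\mid YB_i)+I(R_X:C_i\mid R_YYB_i)\ge I(R_X:C_i\mid R_YYB_i),
\]
where the inequality uses nonnegativity of conditional mutual information.

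Summing over $i\in\mathcal I_A$ and dropping the nonnegative Bob-to-Alice terms of $\QIC$ yields
\[
\sum_{i\in\mathcal I_A}\bigl(V(\rho_i)-V(\rho_{i-1})\bigr)\le \tfrac92\sum_{i\in\mathcal I_A} I(R_XR_Y:C_i\mid S_i)\le 9\,\QIC(\Pi,\mu),
\]
because of the factor $\tfrac12$ in Definition~\ref{def:qic}. So $C_1=9$ suffices, and the lower bound $0$ holds because each summand is nonnegative.

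The main obstacle is the bookkeeping: checking that the state on $TK$ entering each Alice-to-Bob step is exactly $(\Psi_i^\mu)_{TK}$. This requires two facts. First, Alice's operations preserve the $TK$ marginal, since $T=R_X$ is never acted on. Second, any isometries Bob applies after receiving a message change $V$ only through an isometry on $K$, which Lemma~\ref{lem:potential-isometries} shows leaves $V$ invariant, so the potential is well defined at each "received" position. If Definition~\ref{def:potential-protocol} is instead meant to include Bob's post-receipt isometry in $\rho_i$, that lemma still makes the identification go through.
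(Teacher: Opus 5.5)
Your proposal is correct and follows essentially the same route as the paper. You apply Theorem~\ref{thm:potential-cmi} to each Alice-to-Bob message with $T=R_X$ and $K=R_YYB_i$, bound $I(R_X:C_i\mid R_YYB_i)$ by $I(R_XR_Y:C_i\mid S_i)$, and sum over $\mathcal I_A$ to get $C_1=9$; your chain-rule justification of that comparison and your bookkeeping via Lemma~\ref{lem:potential-isometries} simply spell out steps the paper states more tersely.
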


\begin{proof}
	For an Alice-to-Bob message $i$, tracing $C_i$ out of the state
	after receipt gives the reduced state before receipt.
	By Theorem~\ref{thm:potential-cmi}, its potential increment satisfies
	\[
	\begin{aligned}
		0\leq V(\rho_i)-V(\rho_{i-1})
		&\leq\frac{9}{2}
		I(R_X:C_i\mid R_YYB_i)_{\Psi_i^{\mu}}\\
		&\leq\frac{9}{2}
		I(R_XR_Y:C_i\mid YB_i)_{\Psi_i^{\mu}}\\
		&=\frac{9}{2}I(R_XR_Y:C_i\mid S_i)_{\Psi_i^{\mu}}
	\end{aligned}
	\]
	
	Lemma~\ref{lem:potential-isometries} shows that local isometries on Bob's registers do not change the potential; local operations by Alice leave the reduced state unchanged. By Lemma~\ref{lem:potential-message-gap}, sending a message from Bob to Alice cannot increase the potential.
	
	Finally, summing over all Alice-to-Bob messages,
	\[
	\begin{aligned}
		0
		&\leq\sum_{i\in\mathcal I_A}
		\bigl(V(\rho_i)-V(\rho_{i-1})\bigr)\\
		&\leq\frac{9}{2}\sum_{i\in\mathcal I_A}I(R_XR_Y:C_i\mid S_i)_{\Psi_i^{\mu}}
		\leq\frac{9}{2}\sum_{i}I(R_XR_Y:C_i\mid S_i)_{\Psi_i^{\mu}}
		=9\QIC(\Pi,\mu)
	\end{aligned}
	\]
\end{proof}

\subsection{Stability under perturbations}
\label{subsec:quadratic-gap-stability}

Lemma~\ref{lem:potential-message-gap} writes the difference between $\mathcal F_u$ before and after a partial trace as a quadratic form in $M-N$. This also helps us control how this difference changes when the state is perturbed. Perturbing the input distribution is useful in communication lower bounds~\cite{JRS03,BGKMT15}. Here we compare potential increments along a path of purifications. The following lemma bounds the change in the square root of the gap by the length of that path.

\begin{lemma}
	\label{lem:quadratic-gap-stability}
	Fix a positive definite reference state $\alpha_T$.
	Let $|\chi(t)\rangle_{TKCE}$, for $t\in[t_0,t_1]$, be a
	continuously differentiable path of normalized purifications, and
	set
	\[
	\rho(t)=\Tr_E|\chi(t)\rangle\langle\chi(t)|,
	\qquad
	\sigma(t)=\alpha_T\otimes\rho_{KC}(t).
	\]
	Write $\rho^-(t)=\Tr_C\rho(t)$ and
	$\sigma^-(t)=\Tr_C\sigma(t)$, and define
	\[
	g_u(t)=\mathcal F_u(\rho(t),\sigma(t))
	-\mathcal F_u(\rho^-(t),\sigma^-(t)).
	\]
	Then, for every $u\in[1,2]$,
	\[
	\left|\sqrt{g_u(t_1)}-\sqrt{g_u(t_0)}\right|
	\leq C_{T,\alpha}\int_{t_0}^{t_1}\|\chi'(t)\|_2 dt
	\]
    where $C_{T,\alpha}$ is a constant depending only on the register $T$ and the fixed reference state $\alpha_T$.
\end{lemma}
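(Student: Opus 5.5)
Our plan is to write $\sqrt{g_u(t)}$ as a norm of a vector that depends Lipschitz-continuously on the purification $\ket{\chi(t)}$, and then apply the reverse triangle inequality. By Lemma~\ref{lem:potential-message-gap},
\[
g_u(t)=\Tr\bigl(\rho(t)R_uR_u^\dagger\bigr)+u\Tr\bigl(\sigma(t)R_u^\dagger R_u\bigr),
\qquad R_u=M_u(t)-N_u^-(t)\otimes I_C .
\]
Writing $\rho=\Tr_E\proj{\chi}$, the first term equals $\norm{(R_u^\dagger\otimes I_E)\ket{\chi}}_2^2$. For the second term we purify $\sigma(t)=\alpha_T\otimes\rho_{KC}(t)$ by $\ket{\sqrt\alpha}_{TT'}\otimes\ket{\chi(t)}$, where the $T$ and $E$ parts of $\chi$ are regarded as environment. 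Then $u\Tr(\sigma R^\dagger R)=u\norm{(R\otimes I)\ket{\sqrt\alpha}\ket{\chi}}^2$. Hence $\sqrt{g_u(t)}=\norm{v(t)}$ with $v(t)=\bigl((R_u^\dagger\otimes I)\ket\chi,\ \sqrt u\,(R_u\otimes I)\ket{\sqrt\alpha}\ket\chi\bigr)$. It therefore suffices to show that $\norm{v'(t)}\le C_{T,\alpha}\norm{\chi'(t)}_2$, and then integrate.

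\textbf{Step 1.} The main difficulty is that $M_u$ and $N_u^-$ are defined by inverting $\mathcal K^{(u)}$ on $\supp\sigma$. Their operator norms are not bounded uniformly: $\omega_K$ may have tiny eigenvalues, and $\supp\rho_{KC}(t)$ may jump. We therefore avoid bounding $R_u$ itself and instead bound only the weighted quantities $\rho^{1/2}R$ and $R\,\sigma^{1/2}$. For this we use Petz-type change of variables. From the defining equation $\rho M+uM\sigma=\rho-\sigma$, in the eigenbasis formula of Lemma~\ref{lemma:resolvent} the weighted entries are
\[
\sqrt{p_i}\,M_{ij}=\frac{\sqrt{p_i}\,(p_i-q_j)}{p_i+uq_j}\,\langle i|j\rangle,
\qquad
\frac{|p_i-q_j|\sqrt{p_i}}{p_i+uq_j}\le 1+\frac{1}{\sqrt u}
\]
after normalizing by $\sqrt{p_i}+\sqrt{q_j}$-type factors. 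So $\rho^{1/2}M$ and $M\sigma^{1/2}$ are bounded by Schur multipliers of bounded norm applied to $\rho^{1/2}$ and $\sigma^{1/2}$. The key structural point is that $\rho\le \alpha_{\min}^{-1}\,\alpha_T\otimes\rho_K$ up to a factor $d_T$: we have $\rho_{TK}\le d_T\, I_T\otimes\rho_K\le (d_T/\lambda_{\min}(\alpha))\,\sigma$. This is where the constant $C_{T,\alpha}$ enters, and it makes $\sigma$ dominate $\rho$ uniformly in $t$. Then $M=\sigma^{-1/2}X\sigma^{-1/2}$-type expressions have $X$ controlled by $\sigma^{1/2}$, which is in turn controlled linearly by $\chi$.

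\textbf{Step 2 (differentiation).} We express $v(t)$ as $\mathcal A_t(\ket{\chi(t)})$, where $\mathcal A_t$ is a map built from $\ket\chi$ via the resolvent. We then differentiate the defining equations $\mathcal K(M)=\rho-\sigma$ and $\mathcal K^-(N^-)=\rho^--\sigma^-$. This gives $M'=\mathcal K^{-1}\bigl(\rho'-\sigma'-\rho'M-uM\sigma'\bigr)$, with $\rho'=\Tr_E(\ket{\chi'}\bra\chi+\ket\chi\bra{\chi'})$. Estimating $\norm{\mathcal K^{1/2}(M')}$ in the energy norm $\langle\cdot,\mathcal K(\cdot)\rangle$, the weighted bounds from Step 1 give $\norm{\mathcal K^{1/2}M'}\le C\norm{\chi'}$. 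The same holds for $N'$, where $\mathcal K^{-}$ dominates via $I_C$. Since $g_u=\langle R,\mathcal K(R)\rangle$ is exactly the squared energy norm, $\frac{d}{dt}\sqrt{g_u}$ is at most the energy norm of $R'$ plus the term coming from differentiating the weight $\mathcal K$ itself. That second term is again $\le C\norm{\chi'}\sqrt{\cdot}$ by Step 1, after dividing by $\sqrt{g_u}$. Points where $g_u=0$ are handled by replacing $\sqrt{g_u}$ with $\sqrt{g_u+\delta}$ and letting $\delta\to0$.

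\textbf{Step 3.} Rank changes of $\rho_{KC}(t)$ are handled by regularization. Replace $\chi$ by a path with $\rho_{KC}$ full rank, for example by mixing in a small fixed maximally entangled component and renormalizing. Prove the bound with constants independent of the regularization, and pass to the limit using continuity of $\mathcal F_u$ on supports. Integrating $|\frac{d}{dt}\sqrt{g_u}|\le C_{T,\alpha}\norm{\chi'}_2$ over $[t_0,t_1]$ gives the lemma, uniformly for $u\in[1,2]$.

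The expected main obstacle is Step 2's uniform bound on the derivative of the resolvent solution. Everything hinges on the domination $\rho\le (d_T/\lambda_{\min}(\alpha))\sigma$, which makes the energy norm comparable to a $\sigma$-weighted norm uniformly in $t$. We would try this first and check carefully that the $N^-\otimes I_C$ part admits the analogous domination.
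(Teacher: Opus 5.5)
Your representation $\sqrt{g_u(t)}=\norm{v(t)}$ is correct. Combined with the reverse triangle inequality, it would be a legitimate alternative to the paper's argument. The paper never differentiates the optimizers. It uses the envelope theorem to write $g_u'=\Tr(\rho' H_\rho)+\Tr(\sigma' H_\sigma)$, with $H_\rho=RA_M^\dagger+A_NR^\dagger$, $H_\sigma=-R^\dagger B_M-B_N^\dagger R$, $A_M=I-M$ and $B_M=I+uM$, and then shows $\abs{g_u'}\le 2C_{T,\alpha}\sqrt{g_u}\,\norm{\chi'}_2$.

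As written, however, your plan has a genuine gap exactly at what you call the main difficulty. You assert that $M_u,N_u$ have no uniform operator-norm bound, and you propose to control only weighted quantities such as $\rho^{1/2}R$ and $R\sigma^{1/2}$. That cannot suffice. Every term of $v'(t)$ other than the energy norm of $R'$ pairs an optimizer with $\ket{\chi'}$, and $\ket{\chi'}$ is an arbitrary vector not controlled by $\rho$ or $\sigma$. In the paper's application, $\ket{\chi'(0)}$ is the $11$ branch, which has no weight at all in $\rho(0)$. For instance, $\norm{(R^\dagger\otimes I)\ket{\chi'}}$, or equivalently the weight-derivative term $\Tr(\rho' RR^\dagger)=2\operatorname{Re}\bra{\chi'}RR^\dagger\ket{\chi}$, can only be bounded by something like $\norm{R}_\infty\norm{\chi'}_2\sqrt{g_u}$. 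The same problem arises inside your energy estimate for $M'$. Differentiating the defining equation gives $\mathcal K(M')=\rho'A_M-B_M\sigma'$, and testing against $Z$ produces $\bra{\chi'}A_MZ^\dagger\ket{\chi}$, which needs $\norm{A_M}_\infty$. Your Step~3 limit also needs uniform optimizer bounds, to get continuity of $\mathcal F_u$ when supports change.

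The missing ingredient is that your premise is false. The domination $\rho\preceq\kappa\sigma$ with $\kappa=d_T/\lambda_{\min}(\alpha_T)$, which you correctly identify, yields a uniform operator-norm bound. Take singular vectors $B_Mv=sw$, $B_M^\dagger w=sv$ in $\rho B_M+uB_M\sigma=(u+1)\rho$. This gives $s(\bra{w}\rho\ket{w}+u\bra{v}\sigma\ket{v})=(u+1)\bra{w}\rho\ket{v}\le(u+1)\sqrt{\kappa\bra{w}\rho\ket{w}\bra{v}\sigma\ket{v}}$, hence $\norm{M}_\infty\le 1+\sqrt\kappa$. The same bound holds for $N^-$, because partial trace preserves the domination.

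You also need the right weighted bounds, namely the operator inequalities $A_M^\dagger\rho A_M\preceq\frac98\sigma$ and $B_M\sigma B_M^\dagger\preceq\frac98\rho$. The paper obtains them by the same singular-vector argument applied to $Y=\rho^{1/2}A_M\sigma^{-1/2}$, which solves $\rho Y+uY\sigma=(u+1)\rho^{1/2}\sigma^{1/2}$. Your entrywise bound on the Schur symbol does not give these, since entrywise boundedness of a Schur multiplier does not imply boundedness in operator norm.

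With both ingredients your Step~2 closes. The energy norms $\langle M',\mathcal K(M')\rangle^{1/2}$ and $\langle N',\mathcal K(N')\rangle^{1/2}$ are $O(\norm{\chi'}_2)$, and the remaining part of $v'$ is at most $(1+\sqrt u)\norm{R}_\infty\norm{\chi'}_2$. Your route even avoids the $\sqrt{g_u+\eta}$ device. Without these two ingredients, however, the argument does not go through.
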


\begin{proof}
	Let
	$
	d_T=\dim T,
	\kappa=\frac{d_T}{\lambda_{\min}(\alpha_T)},
	B=1+\sqrt\kappa.
	$
	The Schmidt decomposition and Cauchy--Schwarz give
	$|v\rangle\langle v|_{TKC}\preceq d_T I_T\otimes v_{KC}$
	for every pure state $|v\rangle$, where
	$v_{KC}=\Tr_T|v\rangle\langle v|$. Applying this to each pure state
	in a spectral decomposition of $\rho(t)$ and summing gives
	\[
	\rho(t)\preceq d_T I_T\otimes\rho_{KC}(t)
	\preceq\kappa\bigl(\alpha_T\otimes\rho_{KC}(t)\bigr)
	=\kappa\sigma(t).
	\]
	
	First, we bound the optimizing matrices. For now assume that
	$\rho(t)$ is positive definite; the general case is treated at the
	end of the proof. Fix $t$ and omit it from the notation.
	Let $M$ be the optimizer for $(\rho,\sigma)$, and set
	$A_M=I-M$ and $B_M=I+uM$.
	The equation $\rho M+uM\sigma=\rho-\sigma$ gives
	\[
	\rho B_M+uB_M\sigma=(u+1)\rho.
	\]
	Define $s=\|B_M\|_\infty$ and choose unit vectors $v,w$ such that
	$B_Mv=sw$ and $B_M^\dagger w=sv$, i.e. $v$ and $w$ are right and left singular vectors, respectively. Using Cauchy--Schwarz inequality, $|\langle x|\rho|y\rangle| \leq \sqrt{\langle x|\rho|x\rangle \langle y|\rho|y\rangle}$:
	\[
	\begin{aligned}
		\|B_M\|_\infty(\langle w|\rho|w\rangle+u\langle v|\sigma|v\rangle)
		&=(u+1)|\langle w|\rho|v\rangle|\\
		&\leq(u+1)\sqrt{\langle w|\rho|w\rangle\langle v|\rho|v\rangle}\\
		&\leq(u+1)\sqrt{\kappa \langle w|\rho|w\rangle \langle v|\sigma|v\rangle}
		\\&\leq\frac{u+1}{2\sqrt u}\sqrt\kappa(\langle w|\rho|w\rangle+u\langle v|\sigma|v\rangle).
	\end{aligned}
	\]
	The last step uses $2\sqrt{ab}\leq a+b$.
	Since $\langle w|\rho|w\rangle>0$, this proves
	\[
	\|B_M\|_\infty\leq\frac{u+1}{2\sqrt u}\sqrt\kappa,
	\qquad
	\|M\|_\infty
	\leq\frac1u+\frac{u+1}{2u\sqrt u}\sqrt\kappa
	\leq B.
	\]
	
	Next, let $Y=\rho^{1/2}A_M\sigma^{-1/2}$.
	The same equation for $M$ gives
	\[
	\rho A_M+uA_M\sigma=(u+1)\sigma,
	\qquad
	\rho A_M=B_M\sigma,
	\]
	and hence
	\[
	\rho Y+uY\sigma=(u+1)\rho^{1/2}\sigma^{1/2}.
	\]
	Choose unit vectors $v,w$ for the largest singular value of $Y$. The same calculation gives
	\[
	\begin{aligned}
		\|Y\|_\infty(\langle w|\rho|w\rangle+u\langle v|\sigma|v\rangle)
		&=(u+1)|\langle w|\rho^{1/2}\sigma^{1/2}|v\rangle|\\
		&\leq(u+1)\sqrt{\langle w|\rho|w\rangle\langle v|\sigma|v\rangle}
		\leq\frac{u+1}{2\sqrt u}(\langle w|\rho|w\rangle+u\langle v|\sigma|v\rangle).
	\end{aligned}
	\]
	Thus $\|Y\|_\infty^2\leq(u+1)^2/(4u)\leq9/8$ for
	$u\in[1,2]$. Since $B_M\sigma^{1/2}=\rho^{1/2}Y$, we obtain
	\[
	\begin{aligned}
		A_M^\dagger\rho A_M
		&=\sigma^{1/2}Y^\dagger Y\sigma^{1/2}
		\preceq\frac98\sigma,\\
		B_M\sigma B_M^\dagger
		&=\rho^{1/2}YY^\dagger\rho^{1/2}
		\preceq\frac98\rho.
	\end{aligned}
	\]
	For states with non-full support, these estimates hold by restricting
	to $\operatorname{supp}\sigma$ and extending $M$ by zero.
	Partial trace preserves $\rho\preceq\kappa\sigma$, so the same
	norm bound applies to the reduced optimizer. Lifting it by $I_C$
	does not change its norm.
	
	We now differentiate the gap.
	
	Let $M(t)$ be the maximizing matrix of $\mathcal F_u(\rho(t),\sigma(t))$, and let $N(t)$ be
	the maximizing matrix of $\mathcal F_u(\rho^-(t),\sigma^-(t))$ lifted by $I_C$. Let $R(t)=M(t)-N(t)$.
	By Lemma~\ref{lem:potential-message-gap},
	\[
	g_u=a^2+ub^2,
	\qquad
	a^2=\operatorname{Tr}\rho RR^\dagger,
	\qquad
	b^2=\operatorname{Tr}\sigma R^\dagger R
	\]
	
	Since $M,N$ maximize the variational objectives, the envelope
	theorem allows us to differentiate the objectives while keeping
	$M,N$ fixed. For $N$, we apply this before lifting it by $I_C$.
	
	So we have
	\[
	g_u'(t)=\Tr(\rho'(t) H_\rho(t))
	+\Tr(\sigma'(t) H_\sigma(t)),
	\]
	where
	\[
	\begin{aligned}
		H_\rho
		&=(M+M^\dagger-MM^\dagger)
		-(N+N^\dagger-NN^\dagger)=RA_M^\dagger+A_NR^\dagger,\\
		H_\sigma
		&=(-M-M^\dagger-uM^\dagger M)
		-(-N-N^\dagger-uN^\dagger N)=-R^\dagger B_M-B_N^\dagger R,
	\end{aligned}
	\]
	with $A_N=I-N$ and $B_N=I+uN$. Both $H_\rho$ and $H_\sigma$ are Hermitian.
	
	For a Hermitian operator $H$ and a differentiable purification, $\Tr(\rho'(t) H)=2\operatorname{Re}\langle\chi'(t)|H|\chi(t)\rangle$. Thus we bound $\|H_\rho|\chi_\rho\rangle\|_2$ and $\|H_\sigma|\chi_\sigma\rangle\|_2$ respectively. We take $|\chi_\rho(t)\rangle=|\chi(t)\rangle$, the given global purified state. We also need to choose a purification of $\sigma(t)$ with a controlled derivative. Write the original register $T$ as $T_{\mathrm{old}}$, introduce a new output register $T$, and fix a purification
	$|\alpha\rangle_{TT'}$ of $\alpha_T$.
	After a fixed rearrangement of registers, the vector
	\[
	|\chi_\sigma(t)\rangle
	=|\alpha\rangle_{TT'}\otimes
	|\chi(t)\rangle_{T_{\mathrm{old}}KCE}
	\]
	purifies $\sigma(t)$ when $T'T_{\mathrm{old}}E$ is regarded as
	the environment. The map from $|\chi(t)\rangle$ to this vector
	is a fixed isometry, so
	\[
	\|\chi_\sigma'(t)\|_2=\|\chi'(t)\|_2.
	\]
	
	For any purification $|\chi_\rho\rangle$ of $\rho$,
	the operator estimates above give
	\[
	\begin{aligned}
		\|RA_M^\dagger|\chi_\rho\rangle\|_2^2
		&=\operatorname{Tr}(A_M^\dagger\rho A_MR^\dagger R)\\
		&\leq\frac98\operatorname{Tr}(\sigma R^\dagger R)
		=\frac98b^2.
		\\
		\|A_NR^\dagger|\chi_\rho\rangle\|_2
		&\leq\|A_N\|_\infty a\leq(1+B)a.
	\end{aligned}
	\]
	Similarly, for any purification $|\chi_\sigma\rangle$ of $\sigma$,
	\[
	\begin{aligned}
		\|R^\dagger B_M|\chi_\sigma\rangle\|_2^2
		&=\operatorname{Tr}(B_M\sigma B_M^\dagger RR^\dagger)
		\leq\frac98a^2,\\
		\|B_N^\dagger R|\chi_\sigma\rangle\|_2
		&\leq\|B_N\|_\infty b\leq(1+2B)b.
	\end{aligned}
	\]
	Since $a,b\leq\sqrt{g_u}$ and $\sqrt{9/8}\leq2$, these estimates imply
	\[
	\|H_\rho|\chi_\rho\rangle\|_2\leq(3+B)\sqrt{g_u},
	\qquad
	\|H_\sigma|\chi_\sigma\rangle\|_2\leq(3+2B)\sqrt{g_u}.
	\]
	
	Set $C_{T,\alpha}=6+3B$. Using the two purification bounds therefore gives
	\[
	|g_u'(t)|
	\leq2C_{T,\alpha}\sqrt{g_u(t)}\,\|\chi'(t)\|_2.
	\]
	To handle the situation of $\sqrt {g_u}=0$, fix $\eta>0$ and differentiate
	$\sqrt{g_u+\eta}$ instead. We know
	\[
	\begin{aligned}
		\left\vert{} \sqrt{g_u(t_1) + \eta} - \sqrt{g_u(t_0) + \eta} \right\vert{} 
		&\le
		\int_{t_0}^{t_1} \left\vert{} \frac{d}{dt} \sqrt{g_u(t) + \eta} \right\vert{} dt
		\\&=
		\int_{t_0}^{t_1} \frac{\vert{}g_u'(t)\vert{}}{2\sqrt{g_u(t) + \eta}} dt
		\\&\leq 
		\int_{t_0}^{t_1} \frac{2C_{T,\alpha}\sqrt{g_u(t)}\Vert{}\chi'(t)\Vert{}_2}{2\sqrt{g_u(t) + \eta}} dt
		\\&\le 
		\int_{t_0}^{t_1} C_{T,\alpha}\Vert{}\chi'(t)\Vert{}_2 dt
	\end{aligned}
	\]
	
	Letting $\eta\rightarrow0$, we have
	\[
	\left|\sqrt{g_u(t_1)}-\sqrt{g_u(t_0)}\right|
	\leq C_{T,\alpha}\int_{t_0}^{t_1}\|\chi'(t)\|_2\,dt.
	\]

	Finally, to remove the full-rank assumption, apply the calculation to
	$\rho^\varepsilon(t)=(1-\varepsilon)\rho(t)+\varepsilon\tau$, where
	$\tau=I_{TKC}/\dim(TKC)$, and set
	$\sigma^\varepsilon(t)=\alpha_T\otimes\rho^\varepsilon_{KC}(t)$.
	Using a fixed purification of $\tau$ with an orthogonal environment
	flag gives a purification path of speed
	$\sqrt{1-\varepsilon}\|\chi'(t)\|_2$.
	The uniform bound on the optimizing operators and the variational
	formula imply continuity of $\mathcal F_u$, so letting
	$\varepsilon\to0$ proves the same estimate for the original states.
\end{proof}

The proof gives the choice
$C_{T,\alpha}=9+3\sqrt{\dim T/\lambda_{\min}(\alpha_T)}$.

\section{Application of Information Potential to AND}

Here, we apply information potential to solve the AND problem and derive the optimal QIC lower bound.

\begin{lemma}
\label{lem:potential-qic-and}
    Fix a safe, purified quantum protocol $\Pi$ with $r$ rounds for $\AND$, in which Bob outputs the answer with worst-case error at most $\varepsilon\le 1/3$.
    Write
    \[
    I_0=\operatorname{QIC}(\Pi,\mu_0).
    \]
    Let $\rho_i(0)$ be the protocol state after the $i$-th message under $\mu_0$, and let $\rho_0(0)$ be the initial state.
    Let $\mathcal I_A$ be the index set of messages sent from Alice to Bob.

    There exists a universal constant $C_1>0$ such that
    \[
    0\le \sum_{i\in\mathcal I_A}
\bigl(V(\rho_i(0))-V(\rho_{i-1}(0))\bigr)\le C_1 I_0.
    \]
\end{lemma}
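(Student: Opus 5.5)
This lemma is the specialization of Corollary~\ref{lem:potential-qic} to the distribution $\mu=\mu_0$. So the plan is to check that the hypotheses of that corollary hold for the given $\AND$ protocol, and then to re-trace its per-message argument to confirm that the constant is universal.

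\textbf{Setup.} The protocol $\Pi$ is safe and purified by assumption, and Bob produces the output. Definition~\ref{def:potential-protocol} therefore applies with $T=R_X$, where $\dim T=2$. The register $K$ consists of $R_YY$ together with Bob's workspace, including any message just received from Alice. The fixed reference state is $\alpha=\frac23\proj0+\frac13\proj1$ from Definition~\ref{def:and-distributions}. It is positive definite and independent of $p$, so $V(\rho_i(0))$ is well defined.

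\textbf{Alice-to-Bob messages.} For $i\in\mathcal I_A$, the state $(\rho_{i-1}(0))_{TK}$ is obtained from $(\rho_i(0))_{TKC_i}$ by tracing out $C_i$. This holds because the transmission itself does not change the global state, and any Alice-side isometry between messages leaves Bob's reduced state fixed. Theorem~\ref{thm:potential-cmi} then gives
\[
0\le V(\rho_i(0))-V(\rho_{i-1}(0))\le \tfrac92\, I(R_X:C_i\mid R_YYB_i)_{\Psi_i^{\mu_0}}.
\]
Next, apply the chain rule of Lemma~\ref{thm:entropy-properties}:
\[
I(R_XR_Y:C_i\mid YB_i)=I(R_Y:C_i\mid YB_i)+I(R_X:C_i\mid R_YYB_i).
\]
Since the first term is nonnegative, $I(R_X:C_i\mid R_YYB_i)\le I(R_XR_Y:C_i\mid S_i)$ with $S_i=YB_i$. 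Two further points need checking here. First, any Bob-side isometries applied between Alice's messages do not alter the potential, by Lemma~\ref{lem:potential-isometries}. Second, Bob-to-Alice messages only discard registers from $K$, so they cannot increase $V$, by Lemma~\ref{lem:potential-message-gap}. These facts are not needed for the sum restricted to $\mathcal I_A$, but they justify identifying $\rho_{i-1}(0)$ with the pre-receipt reduced state.

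\textbf{Summation.} Summing over $i\in\mathcal I_A$ and then adding the nonnegative conditional mutual information terms for Bob-to-Alice messages gives
\[
0\le\sum_{i\in\mathcal I_A}\bigl(V(\rho_i(0))-V(\rho_{i-1}(0))\bigr)\le \tfrac92\sum_i I(R_XR_Y:C_i\mid S_i)=9\,\QIC(\Pi,\mu_0)=9I_0 .
\]
So we may take $C_1=9$. Neither the worst-case error assumption nor the round bound $r$ is used in this step.

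The main obstacle is bookkeeping rather than analysis. One must make sure that "the state before receipt" in Theorem~\ref{thm:potential-cmi} coincides exactly with $\rho_{i-1}(0)$, even when Bob performs local isometries or sends messages between consecutive Alice-to-Bob messages. If they do not coincide, the potentials should be compared through the intermediate states, using isometry invariance for Bob's local operations and monotonicity for Bob's sends.
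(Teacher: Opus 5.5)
Your proposal is correct and follows the paper's route: the paper proves this lemma in one line, as the specialization of Corollary~\ref{lem:potential-qic} to $\mu=\mu_0$ with $C_1=9$. Your re-tracing of that corollary's proof matches it step for step: Theorem~\ref{thm:potential-cmi} for each Alice-to-Bob message, then the chain rule to pass from $I(R_X:C_i\mid R_YYB_i)$ to $I(R_XR_Y:C_i\mid S_i)$, then summation to $\frac92\cdot 2I_0=9I_0$.
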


\begin{proof}
This lemma is a direct consequence of Corollary~\ref{lem:potential-qic}, with $C_1=9$.
\end{proof}

\begin{lemma}
\label{lem:potential-perturbation}
    Fix a safe, purified quantum protocol $\Pi$ with $r$ rounds for $\AND$, in which Bob outputs the answer with worst-case error at most $\varepsilon\le 1/3$.
    Let $\rho_i(p)$ be the protocol state after the $i$-th message under $\mu_p$, and let $\rho_0(p)$ be the initial state.
    Let $\mathcal I_A$ be the index set of messages sent from Alice to Bob.
    There exists a universal constant $C_2>0$ such that
    for every $p\in[0,1/2]$,
    \[
    \sum_{i\in\mathcal I_A}
    \bigl(V(\rho_i(p))-V(\rho_{i-1}(p))\bigr)\le 2C_1I_0+C_2rp.
    \]
\end{lemma}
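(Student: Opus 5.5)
We compare each Alice-to-Bob increment under $\mu_p$ with the corresponding increment under $\mu_0$, using the stability estimate of Lemma~\ref{lem:quadratic-gap-stability}, and then sum using Lemma~\ref{lem:potential-qic-and}.

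\textbf{Purification path.} Fix an Alice-to-Bob message $i\in\mathcal I_A$. With $T=R_X$ and $K$ consisting of $R_YY$ and Bob's workspace before receipt, the increment is
\[
V(\rho_i(p))-V(\rho_{i-1}(p))=\int_1^2 g_u(p)\,du,
\]
where $g_u(p)$ is the gap of Lemma~\ref{lem:potential-message-gap} for the state $\Psi_i^{\mu_p}$ on $TKC_i$. Here $\alpha_T=\alpha$ is the fixed reference state, which is positive definite on a qubit. By Lemma~\ref{lem:quadratic-gap-stability}, $C_{T,\alpha}$ is then an absolute constant: $9+3\sqrt{2/(1/3)}$.

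The canonical purification is
\[
\ket{\Psi_i^{\mu_p}}=\sqrt{1-p}\,\ket{\Phi_0}+\sqrt p\,\ket{\Phi_{11}},
\]
where $\ket{\Phi_0}$ is the $\mu_0$ state and $\ket{\Phi_{11}}$ is the branch on input $11$. The two vectors are orthogonal unit vectors, because their reference registers are orthogonal. Parametrize $\chi(\theta)=\cos\theta\ket{\Phi_0}+\sin\theta\ket{\Phi_{11}}$ for $\theta\in[0,\theta_p]$, with $\sin\theta_p=\sqrt p$. This path is smooth, has unit speed, and its endpoint state is $\Psi_i^{\mu_p}$ with $E$ being Alice's registers. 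Its length is $\theta_p=\arcsin\sqrt p\le \frac{\pi}{2}\sqrt p$.

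\textbf{Comparing the gaps.} Lemma~\ref{lem:quadratic-gap-stability} gives
\[
\sqrt{g_u(p)}\le\sqrt{g_u(0)}+C\sqrt p
\]
for every $u\in[1,2]$. Squaring with $(x+y)^2\le 2x^2+2y^2$ gives $g_u(p)\le 2g_u(0)+2C^2p$. Integrating over $u\in[1,2]$ yields
\[
V(\rho_i(p))-V(\rho_{i-1}(p))\le 2\bigl(V(\rho_i(0))-V(\rho_{i-1}(0))\bigr)+2C^2p.
\]

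\textbf{Summing.} Sum this bound over $i\in\mathcal I_A$, noting $|\mathcal I_A|\le r$. Lemma~\ref{lem:potential-qic-and} bounds the resulting $\mu_0$ sum by $C_1I_0$, so the total is at most $2C_1I_0+C_2rp$ with $C_2=2C^2$.

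\textbf{Main obstacle.} The delicate step is checking that Lemma~\ref{lem:quadratic-gap-stability} applies with the correct register split. The purification must include all of Alice's registers in $E$, while $T=R_X$ only and $R_Y$ sits inside $K$. The path must also stay exactly normalized so that its speed is $1$.

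A second point is that the $\mu_0$ increments are individually nonnegative, which makes the doubling legitimate term by term. This holds by Theorem~\ref{thm:potential-cmi}, because receipt of $C_i$ is a reverse partial trace.
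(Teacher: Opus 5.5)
Your proof is correct and follows the paper's argument essentially step for step: the same unit-speed path $\cos\theta\,\ket{\Phi_0}+\sin\theta\,\ket{\Phi_{11}}$ is fed into Lemma~\ref{lem:quadratic-gap-stability} with $C_{T,\alpha}=9+3\sqrt6$, the bound is squared via $(x+y)^2\le 2x^2+2y^2$, integrated over $u\in[1,2]$, and summed using Lemma~\ref{lem:potential-qic-and}. The paper uses $\arcsin x\le 2x$ where you use $\arcsin x\le\frac{\pi}{2}x$, which is immaterial. Two small remarks: identifying $V(\rho_i(p))-V(\rho_{i-1}(p))$ with $\int_1^2 g_u(p)\,du$ silently uses that Alice's preparation of $C_i$ leaves the $TK$ marginal unchanged and that Bob's local isometries preserve the potential (Lemma~\ref{lem:potential-isometries}), which the paper states explicitly. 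Also, the term-by-term nonnegativity you invoke is not needed, since doubling the bound on the sum works regardless of the signs of the individual increments.
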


\begin{lemma}
\label{lem:potential-terminal}
    Fix a safe, purified quantum protocol $\Pi$ with $r$ rounds for $\AND$, in which Bob outputs the answer with worst-case error at most $\varepsilon\le 1/3$. Define \(\gamma_\varepsilon:=1-2\sqrt{\varepsilon(1-\varepsilon)}\).
    There exist universal constants
    $c_0,C_3,C_4,C_5>0$ and $p_0\in(0,1/2]$ such that
    if $I_0\le c_0\gamma_\varepsilon^2$, then
    for every $p\in[0,p_0]$,
    \[
    V(\rho_r(p))-V(\rho_0(p))
    \ge
    C_3\gamma_\varepsilon\sqrt p
    -
    C_4 I_0
    -
    C_5 p.
    \]
\end{lemma}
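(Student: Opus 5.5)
The plan is to compute $V(\rho_r(p))-V(\rho_0(p))$ directly from the structure of the initial and final states, with $T=R_X$, $\alpha=\frac23\proj0+\frac13\proj1$ as in Definition~\ref{def:and-distributions}, and $K$ consisting of $R_YY$ together with Bob's registers. The argument has three steps.

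\textbf{Step 1: the initial potential.} At time $0$ the state on $TK$ is classical on $R_X,R_Y,Y$, tensored with Bob's share $\theta_{B_0}$ of the resource, which is input independent. By Lemma~\ref{lem:potential-isometries} the register $\theta_{B_0}$ and the copy $Y$ of $R_Y$ can be discarded without changing $V$, since both are isometric adjoinings. So $V(\rho_0(p))$ is an explicit function of the $2\times2$ joint distribution $\mu_p$ on $(x,y)$, computed with the commuting formula $\sum (p_i-q_j)^2/(p_i+uq_j)$.

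\textbf{Step 2: reduction of the final state to four dimensions.} At time $r$, Bob holds $Y$, $R_Y$ and his workspace, and after the last message $K$ effectively contains the state $\psi^{xy}_B$. Since the protocol is purified, I would write Bob's reduced conditional states $\rho^{xy}_B$. For $I_0\le c_0\gamma_\varepsilon^2$, the standard QIC-to-fidelity arguments (the cut-and-paste and local alignment of \cite{BGKMT15}, derived from the small conditional mutual informations via Uhlmann, Theorem~\ref{thm:uhlmann}, and Lemma~\ref{thm:fidelity-inequalities}) give the following near-equalities, up to local isometries, with error $O(\sqrt{I_0})$ in fidelity distance: Bob's state on $00$ is close to that on $10$, while $01$ and $11$ are linked through the rectangle property. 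At the same time, correctness forces Bob's reduced state on $11$ to be far from the $01$ state in fidelity. Concretely, $1-F(\rho^{01}_B,\rho^{11}_B)\ge\gamma_\varepsilon$, using the measurement bound in Lemma~\ref{thm:fidelity-inequalities}: the root fidelity of states distinguishable with error $\varepsilon$ is at most $2\sqrt{\varepsilon(1-\varepsilon)}$.

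Next I project Bob's $y=1$ block onto the two-dimensional span of the Uhlmann-aligned states for $01$ and $11$. Monotonicity (Lemma~\ref{lem:potential-message-gap}) shows that replacing $K$ by this coarse-grained classical-plus-qubit register, together with $T$, only decreases $V$. This leaves a lower bound through a $T\otimes(\text{qubit})$, i.e. four-dimensional, state whose entries depend on $p$, on the overlap $F$, and on $O(I_0)$ perturbations.

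\textbf{Step 3: Taylor expansion and the main obstacle.} I expand $\mathcal F_u$ for the four-dimensional state in $\sqrt p$, where the coherence between $x=0$ and $x=1$ conditioned on $y=1$ has amplitude $\sqrt{p/(1/3)}$. The terms common to the initial potential cancel the $O(1)$ part. The first-order term arises because the $11$ branch lies off the $01$ direction with overlap $\le1-\gamma_\varepsilon$, and I expect it to contribute $\Omega(\gamma_\varepsilon\sqrt p)$ after integration over $u\in[1,2]$. The $O(I_0)$ misalignments enter through the uniform operator bounds on the optimizer $M$ from the proof of Lemma~\ref{lem:quadratic-gap-stability}, which give Lipschitz continuity of $\mathcal F_u$. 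Second-order terms are $O(p)$.

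The hard part is this first-order term. A naive expansion of a relative-entropy-like quantity in a small perturbation is quadratic, which would give $O(p)$ rather than $\sqrt p$. The linear-in-$\sqrt p$ gain must therefore come from a genuinely singular direction: on $y=1$, the $x=1$ block is rank-deficient at $p=0$, so $\rho$ is not a smooth perturbation of $\sigma$ there. I would compute $M_u$ explicitly in that block to exhibit the $\sqrt p$ behavior. A second worry is that the misalignment errors are first order in $\sqrt{I_0}$, not $I_0$. This has to be resolved by noting that they enter the quadratic gap squared, or are absorbed through $\sqrt{I_0}\sqrt p\le I_0+p$.
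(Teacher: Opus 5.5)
\textbf{There is a genuine gap, at exactly the step you flagged as hard: your reduction throws away the coherence that produces the $\sqrt p$ term.}

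Step~1 already misdescribes the initial state. Alice holds $X$, so tracing it out makes $T=R_X$ block diagonal in $x$. But given $x$, the register $Q=R_YY$ is in the \emph{pure} state $\propto\sum_y\sqrt{\mu_p(x,y)}\ket{yy}$. So the initial state is a pure ensemble, not a classical one. Up to $O(p)$, its potential is $\int_1^2\phi_u(\det\hat\rho_Q)\,du$ with $\det\hat\rho_Q=a^2(a-\sqrt p)^2$, and this already contains a term linear in $\sqrt p$.

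Step~2 then passes to a register in which $y$ is classical. (Also, the ``span of the Uhlmann-aligned states'' is a subspace of $AB$, so projecting onto it is not an operation on $K$ at all.) Making $y$ classical fails on two counts.
\begin{itemize}
\item At $p=0$ the bound is already a constant below the coherent $V(\rho_0(0))$. At $u=1$ one gets $\phi_1(1/9)=6/19$ for the coherent pure ensemble, versus roughly $0.17$ for the classical $\mu_0$ correlation plus $O(I_0)$ from Bob's workspace. So the $O(1)$ cancellation you expect never happens.
\item Once $y$ is classical, the pair $(\rho,\sigma)$ is affine in $p$. Since $\mathcal F_u$ is a pointwise maximum of functions affine in $(\rho,\sigma)$, it is jointly convex, so this lower bound can rise by at most $O(p)$ on $[0,p_0]$.
\end{itemize}
Relatedly, $\rho_{TK}$ has no coherence between $x=0$ and $x=1$, and the rank deficiency of the $(x,y)=(1,1)$ block produces only $O(p)$ effects.

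\textbf{The missing idea} is to trace out all of Bob's workspace (Lemma~\ref{lem:potential-message-gap}) and keep only $T=R_X$ and $Q=R_YY$.
\begin{itemize}
\item Because $R_Y$ purifies $Y$, the $y$-coherence in block $x$ of the resulting $4\times4$ state is $\sqrt{\mu_p(x,0)\mu_p(x,1)}\,\langle\psi^{x1}|\psi^{x0}\rangle$, a full $AB$ overlap.
\item After the input-controlled alignment unitaries, the final and initial reduced states are $\hat\rho_{TQ}(s_0,\omega,\sqrt p)$ and $\hat\rho_{TQ}(1,1,\sqrt p)$. Moreover $V(\rho_0(p))=V(\hat\rho_0(p))$ exactly, because Bob's initial workspace is in product.
\item The misalignment $1-s_0\le3I_0$ costs $O(I_0)$ by trace-norm continuity, and making the $x=1$ block pure costs $O(p)$.
\item For pure ensembles $\mathcal F_u=\phi_u(\det\hat\rho_Q)$ with $\phi_u$ increasing. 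The cross term $-2a^3\sqrt p\,\operatorname{Re}\omega$ in the determinant yields the gain $\kappa_u\sqrt p\,(1-\operatorname{Re}\omega)$.
\end{itemize}
So the initial potential drops by order $\sqrt p$ because the $11$ branch interferes constructively with the $10$ branch in $Q$, and the final potential drops less.

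Correctness enters through $\omega=\langle\xi|\zeta\rangle$, the overlap of the $11$ and $10$ branches, not directly through Bob's $01$ versus $11$ states. The bounds $|\langle\eta|\xi\rangle|\le1-\gamma_\varepsilon$ and $\|\zeta-\eta\|_2\le2\sqrt{6I_0}$ give $\operatorname{Re}\omega\le1-\gamma_\varepsilon/2$ once $I_0\le\gamma_\varepsilon^2/96$. This is also how the $\sqrt{I_0}$ misalignment is handled: it only shrinks the coefficient of $\sqrt p$, and the hypothesis $I_0\le c_0\gamma_\varepsilon^2$ absorbs it.
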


Lemmas~\ref{lem:potential-perturbation} and~\ref{lem:potential-terminal}
are proved in Sections~\ref{sec:potential-stability}
and~\ref{sec:terminal-potential-gain}, respectively.
Combining the three lemmas gives the main lower bound.

\begin{proof}[Proof of Theorem \ref{thm:and-qic-lower-bound}]
By Proposition~\ref{prop:safe-copies}, it is sufficient to consider safe pure protocols.
We may assume that $\Pi$ has exactly $r$ messages by adding trivial messages.

Write
\[
    I_0=\operatorname{QIC}(\Pi,\mu_0),
    \qquad
    \gamma=\gamma_{1/3}
    =1-2\sqrt{\frac{1}{3}\left(1-\frac{1}{3}\right)}
    =1-\frac{2\sqrt{2}}{3}>0.
\]
Let $C_1,C_2,c_0,C_3,C_4,C_5,p_0$ be the universal constants
from Lemmas~\ref{lem:potential-qic-and}, \ref{lem:potential-perturbation}, and \ref{lem:potential-terminal}.

First, local isometries preserve the potential by
Lemma~\ref{lem:potential-isometries}. A Bob-to-Alice message traces
out a register from $K$, so it cannot increase the potential by
Lemma~\ref{lem:potential-message-gap}.

It follows that, for every $p\in[0,p_0]$,
\[
\begin{aligned}
    V(\rho_r(p))-V(\rho_0(p))
    &\leq
    \sum_{i\in\mathcal I_A}
    \bigl(V(\rho_i(p))-V(\rho_{i-1}(p))\bigr)\\
    &\leq 2C_1I_0+C_2rp.
\end{aligned}
\]
$p_0\leq 1/2$ ensures that Lemma \ref{lem:potential-perturbation} applies.

If $I_0>c_0\gamma^2$, then
$I_0\geq c_0\gamma^2/r$, since $r\geq 1$.
It therefore remains to consider $I_0\leq c_0\gamma^2$.
In this case, Lemma \ref{lem:potential-terminal} gives
\[
    V(\rho_r(p))-V(\rho_0(p))
    \geq C_3\gamma\sqrt{p}-C_4I_0-C_5p
\]
for every $p\in[0,p_0]$.
Combining the two bounds yields
\[
    (2C_1+C_4)I_0
    \geq C_3\gamma\sqrt{p}-(C_2r+C_5)p.
\]

Choose
\[
    \beta=
    \min\left\{
        \sqrt{p_0},
        \frac{C_3\gamma}{2(C_2+C_5)}
    \right\}>0,
    \qquad
    p=\frac{\beta^2}{r^2}.
\]
This choice satisfies $p\leq p_0$.
Moreover, $C_2r+C_5\leq (C_2+C_5)r$, so
\[
\begin{aligned}
    (2C_1+C_4)I_0
    &\geq
    \frac{C_3\gamma\beta}{r}
    -\frac{(C_2+C_5)\beta^2}{r}\\
    &\geq \frac{C_3\gamma\beta}{2r}.
\end{aligned}
\]
Thus, in both cases,
\[
    I_0\geq \frac{c}{r},
\]
where $
    c=
    \min\left\{
        c_0\gamma^2,
        \frac{C_3\gamma\beta}{2(2C_1+C_4)}
    \right\}>0
$ is a universal constant.
\end{proof}

\section{Stability under Input Perturbations}
\label{sec:potential-stability}

This section proves Lemma \ref{lem:potential-perturbation}.
The main estimate controls the square root of a divergence gap
along a path of purifications.

\begin{proof}[Proof of Lemma \ref{lem:potential-perturbation}]
	Fix an Alice-to-Bob message position $i$.
	Use the coherent global state after preparation of the message as a purification of the reduced state on $TKC_i$.
	Here $T=R_X$ and $K=R_YYB_i$, with the message $C_i$ excluded from $K$.
	The protocol prefix is a fixed global isometry, independent of the input distribution. Since
	$\mu_p=(1-p)\mu_0+p\delta_{11}$, its coherent state can be written
	\[
	|\chi_i(t)\rangle
	=\cos t\,|\chi_i(0)\rangle
	+\sin t\,|\chi_i(11)\rangle,
	\qquad p=\sin^2t.
	\]
	The two vectors on the right are normalized and orthogonal, so
	\[
	\|\chi_i'(t)\|_2^2
	=\|{-\sin t\,\chi_i(0)+\cos t\,\chi_i(11)}\|_2^2=1.
	\]
	Thus the path from $\mu_0$ to $\mu_p$ has length $\arcsin\sqrt p$.
	Let $g_{i,u}(p)$ be the divergence gap corresponding to this
	message, obtained by tracing out $C_i$.
	
	We apply Lemma~\ref{lem:quadratic-gap-stability} with the fixed
	reference state $\alpha_T=\operatorname{diag}(2/3,1/3)$.
	Since $\dim T=2$ and $\lambda_{\min}(\alpha_T)=1/3$, the proof
	of that lemma gives
	\[
	C_{T,\alpha}
	=9+3\sqrt{\frac{\dim T}{\lambda_{\min}(\alpha_T)}}
	=9+3\sqrt6<18.
	\]
	For $0\leq p\leq1/2$, the stability estimate gives
	\[
	\begin{aligned}
		\sqrt{g_{i,u}(p)}
		&\leq\sqrt{g_{i,u}(0)}+18\arcsin\sqrt p\\
		&\leq\sqrt{g_{i,u}(0)}+36\sqrt p
	\end{aligned}
	\]
	by $\arcsin x\le 2x$. The QM-AM inequality yields
	\[
	\begin{aligned}
		g_{i,u}(p)&\le (\sqrt{g_{i,u}(0)}+36\sqrt p)^2
		\\&\le 2g_{i,u}(0)+C_2p
	\end{aligned}
	\]
	Keeping $C_i$ gives the state on $T$ and Bob's registers after receipt,
	while tracing it out gives the state before receipt: Alice's local
	preparation does not change this reduced state, and local isometries
	on Bob's registers preserve the potential.
	Integrating over $u\in[1,2]$ we have
	\[
	\begin{aligned}
		V(\rho_i(p))-V(\rho_{i-1}(p))
		&\leq2\bigl(V(\rho_i(0))-V(\rho_{i-1}(0))\bigr)
		+C_2p.
	\end{aligned}
	\]
	Summing over $i\in\mathcal I_A$:
	\[
	\begin{aligned}
		\sum_{i\in\mathcal I_A}\big(V(\rho_i(p))-V(\rho_{i-1}(p))\big)
		&\leq 2\sum_{i\in\mathcal I_A}\big((V(\rho_i(0))-V(\rho_{i-1}(0)))\big)
		+C_2rp
		\\&\le
		2C_1I_0
		+C_2rp
	\end{aligned}
	\]
\end{proof}
\section{Lower Bound of Potential Change}
\label{sec:terminal-potential-gain}

This section proves Lemma~\ref{lem:potential-terminal}.
Define $\gamma_\varepsilon=1-2\sqrt{\varepsilon(1-\varepsilon)}$.
The proof uses the final pure workspace states
$|\psi^{xy}\rangle_{AB}$, with the safe input registers
$X,Y$ written separately.

\subsection{Distance Analysis of the Final State}

To lower-bound the information potential change, we compare the purified states on inputs $00,01,10,11$. When the information cost is small, local unitaries can make the three states corresponding to inputs with correct output $0$ close to one another. Correctness, on the other hand, keeps the state on input $11$ separated from the state on input $01$. The following lemmas quantify these two observations.

\begin{lemma}[Corollary of \cite{JRS03}]
	\label{lem:holevo-local-alignment}
	Let $X$ be a classical random variable uniformly distributed over $\{0,1\}$, and let $B$ be a quantum system encoding $x \in \{0, 1\}$ as the density operator $\omega_x$. Let $|\psi_0\rangle_{AB}$ and $|\psi_1\rangle_{AB}$ be fixed purifications of $\omega_0$ and $\omega_1$, and let 
	\[
	v = I(X:B).
	\]
	There exists a unitary $U_A$ on $A$ such that
	\[
	\langle\psi_0|(U_A\otimes I_B)|\psi_1\rangle
	=F(\omega_0,\omega_1)\in[0,1],
	\qquad
	\bigl\||\psi_0\rangle
	-(U_A\otimes I_B)|\psi_1\rangle\bigr\|_2
	\leq \sqrt{2v}.
	\]
	In particular,
	$1-\langle\psi_0|(U_A\otimes I_B)|\psi_1\rangle\leq v$.
\end{lemma}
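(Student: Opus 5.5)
The plan is to prove the fidelity bound $1-F(\omega_0,\omega_1)\le v$ first, and then obtain the unitary and the norm bound from Uhlmann's theorem (Theorem~\ref{thm:uhlmann}). I take $\log$ to be base $2$, so mutual information is in bits. The inequality is false in nats: for orthogonal $\omega_0,\omega_1$ we have $1-F=1$ but $I(X:B)=\ln 2$.

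\textbf{Step 1 (Uhlmann alignment).} By Theorem~\ref{thm:uhlmann} applied to the purifications $\ket{\psi_0},\ket{\psi_1}$ of $\omega_0,\omega_1$, with $A$ as the purifying system, there is a unitary $U_A$ such that $\abs{\bra{\psi_0}(U_A\otimes I_B)\ket{\psi_1}}=F(\omega_0,\omega_1)$. Multiplying $U_A$ by a global phase, which is still a unitary on $A$, makes the inner product real and nonnegative. It is then exactly $F(\omega_0,\omega_1)\in[0,1]$, and
\[
\bigl\|\ket{\psi_0}-(U_A\otimes I_B)\ket{\psi_1}\bigr\|_2^2=2\bigl(1-F(\omega_0,\omega_1)\bigr).
\]
Both stated conclusions therefore reduce to showing $1-F(\omega_0,\omega_1)\le v$.

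\textbf{Step 2 (reduction to a classical inequality).} Let $\{E_k\}$ be a POVM that attains the fidelity, i.e.\ $F(\omega_0,\omega_1)=\sum_k\sqrt{p_kq_k}$ with $p_k=\Tr E_k\omega_0$ and $q_k=\Tr E_k\omega_1$. Such a measurement exists by the Fuchs--Caves characterization of fidelity. Applying this measurement to $B$ is a channel, so data processing for mutual information (Lemma~\ref{thm:entropy-properties}, with trivial conditioning) gives $v=I(X:B)\ge I(X:K)$, where $K$ is the outcome. For uniform $X$ the classical quantity is the Jensen--Shannon divergence
\[
I(X:K)=\sum_k\tfrac12\Bigl[p_k\log\tfrac{2p_k}{p_k+q_k}+q_k\log\tfrac{2q_k}{p_k+q_k}\Bigr].
\]
Since $1-F=\sum_k\tfrac12(\sqrt{p_k}-\sqrt{q_k})^2$, it suffices to prove the following termwise for all $p,q\ge 0$:
\[
\tfrac12\Bigl[p\log\tfrac{2p}{p+q}+q\log\tfrac{2q}{p+q}\Bigr]\ \ge\ \tfrac12(\sqrt p-\sqrt q)^2 .
\]

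\textbf{Step 3 (scalar inequality; the main obstacle).} Both sides are homogeneous of degree one and symmetric in $p,q$. Assuming $p>0$ (the case $p=0$ is trivial), set $q=s^2p$ with $s\in[0,1]$. The claim becomes $h(s)\ge 0$, where
\[
h(s)=\log_2\tfrac{2}{1+s^2}+s^2\log_2\tfrac{2s^2}{1+s^2}-(1-s)^2 .
\]
This holds with equality at $s=0$ and $s=1$, so the inequality is tight at both endpoints and requires real work. I would prove it by calculus. Compute $h'(s)=2s\log_2\tfrac{2s^2}{1+s^2}+2(1-s)$. This derivative vanishes at $s=1$, and $h'(s)\to 2>0$ as $s\to0^+$. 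I would then show $h'$ changes sign exactly once on $(0,1)$, by analyzing $h''$ or by substituting $w=\log_2\tfrac{2s^2}{1+s^2}$. It follows that $h$ increases and then decreases on $[0,1]$, so $h\ge\min\{h(0),h(1)\}=0$. A quick check at $s=1/2$ gives $h\approx 0.32-0.08-0.25>0$, consistent with this picture.

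Summing the termwise inequality over $k$ gives $v\ge 1-F(\omega_0,\omega_1)$. Substituting into Step~1 yields $\bigl\|\ket{\psi_0}-(U_A\otimes I_B)\ket{\psi_1}\bigr\|_2\le\sqrt{2v}$ and $1-\bra{\psi_0}(U_A\otimes I_B)\ket{\psi_1}=1-F\le v$, which is the statement.
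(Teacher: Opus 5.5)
Your proof is correct. The Uhlmann step is exactly the paper's argument: you choose $U_A$, fix the phase so the overlap equals $F(\omega_0,\omega_1)$, and use $\norm{\ket{\psi_0}-(U_A\otimes I_B)\ket{\psi_1}}_2^2=2(1-F)$.

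The difference lies in the key inequality $1-F(\omega_0,\omega_1)\le v$. The paper imports it as Lemma~2 of \cite{JRS03}, whereas you derive it. An optimal Fuchs--Caves measurement plus data processing reduces it to a classical fact: the Jensen--Shannon divergence in bits dominates the squared Hellinger distance $\sum_k\tfrac12(\sqrt{p_k}-\sqrt{q_k})^2$. You then check this termwise through the one-variable function $h$. The citation buys brevity; your route makes the lemma self-contained and makes the units explicit.

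Your remark on units is right: the constant $1$ forces base-$2$ logarithms, and the paper leaves this implicit. Reading all information quantities in bits is consistent with the rest of the paper. The reason is that the natural-log bound of Theorem~\ref{thm:potential-cmi} remains valid when $I$ is measured in bits, since nat-valued information is smaller.

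The sign analysis you only sketch does go through. Set $k(s)=h'(s)/2=s\log_2\frac{2s^2}{1+s^2}+1-s$. Then $k''(s)=\frac{2(1-s^2)}{\ln 2\cdot s(1+s^2)^2}>0$ on $(0,1)$, so $k$ is strictly convex. It satisfies $k(0^+)=1$, $k(1)=0$ and $k'(1)=\frac{1}{\ln 2}-1>0$. Hence $k$ has exactly one zero in $(0,1)$, so $h$ increases and then decreases, giving $h\ge\min\{h(0),h(1)\}=0$.

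One small slip: your numerical check at $s=1/2$ is wrong as written, since $0.32-0.08-0.25$ is negative. The correct terms are $\log_2 1.6\approx0.68$, $\tfrac14\log_2 0.4\approx-0.33$ and $-0.25$, so $h(1/2)\approx0.10>0$. This does not affect the argument.
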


\begin{proof}
	Lemma~2 of~\cite{JRS03} gives
	\[
	1-F(\omega_0,\omega_1)
	\leq v.
	\]
	By Uhlmann's theorem, a unitary $U_A$ can be chosen so that
	the overlap of the fixed purifications is the nonnegative
	real number $F(\omega_0,\omega_1)$.
	Consequently,
	\[
	\bigl\||\psi_0\rangle
	-(U_A\otimes I_B)|\psi_1\rangle\bigr\|_2^2
	=2\bigl(1-F(\omega_0,\omega_1)\bigr)
	\leq 2v.
	\]
\end{proof}

\begin{lemma}\label{lem:alignment}
	Let $\Pi$ be a safe pure protocol. Let $X,Y$ be classical input registers jointly distributed according to $\mu_0$, which is uniform on $\{(0,0),(0,1),(1,0)\}$, and let $I_0=\QIC(\Pi,\mu_0)$.
	Write $|\psi^{xy}\rangle_{AB}$ for its final pure workspace
	state on input $(x,y)$, with the safe input registers written
	separately, and define the marginals
	\[
	\rho_B^{xy}
	=\operatorname{Tr}_A|\psi^{xy}\rangle\langle\psi^{xy}|,
	\qquad
	\rho_A^{xy}
	=\operatorname{Tr}_B|\psi^{xy}\rangle\langle\psi^{xy}|.
	\]
	Then, the conditional quantum mutual informations satisfy
	\[
	I(X:B \mid Y=0) \leq 3I_0,
	\qquad
	I(Y:A \mid X=0) \leq 3I_0.
	\]
	There exist unitaries $U_A$ on $A$ and $U_B$ on $B$ such that
	the normalized vectors
	\begin{align*}
		|\theta\rangle&=|\psi^{00}\rangle,
		&|\eta\rangle&=(I_A\otimes U_B)|\psi^{01}\rangle,\\
		|\zeta\rangle&=(U_A\otimes I_B)|\psi^{10}\rangle,
		&|\xi\rangle&=(U_A\otimes U_B)|\psi^{11}\rangle
	\end{align*}
	satisfy, with $\delta=\sqrt{6I_0}$,
	\[
	\|\eta-\theta\|_2\leq\delta,
	\qquad
	\|\zeta-\theta\|_2\leq\delta,
	\qquad
	\|\eta-\zeta\|_2\leq2\delta.
	\]
	The overlaps $\langle\theta|\eta\rangle$ and
	$\langle\theta|\zeta\rangle$ are real and nonnegative.
	In particular,
	\[
	s_0:=\langle\eta|\theta\rangle\in[0,1],
	\qquad
	1-s_0=\tfrac12\|\eta-\theta\|_2^2\leq 3I_0.
	\]
	These vectors can be realized by appending input-controlled local unitaries, which do not change QIC or information potential.
\end{lemma}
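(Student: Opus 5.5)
The plan has two parts. First, bound the two conditional mutual informations by a telescoping argument over the protocol, in the spirit of Corollary~\ref{lem:potential-qic} but with ordinary mutual information in place of the potential. Then apply Lemma~\ref{lem:holevo-local-alignment} on each side separately to build the aligning unitaries.

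\emph{Step 1: information bounds.} Work with the coherent state under $\mu_0$. After each message define
\[
J_i = I(R_X : R_Y Y K_i),
\]
where $K_i$ denotes Bob's registers at that time, including a just-received message.
\begin{itemize}
\item For an Alice-to-Bob message $C_i$, the chain rule gives
\[
J_i - J_{i-1} = I(R_X : C_i \mid R_Y Y B_i) \le I(R_X R_Y : C_i \mid Y B_i) = I(R_X R_Y : C_i \mid S_i),
\]
using $I(R_XR_Y:C\mid S)=I(R_Y:C\mid S)+I(R_X:C\mid R_Y S)$.
\item Bob's local isometries leave $J$ unchanged.
\item Bob sending a register, and Alice's local operations, cannot increase $J$ (data processing, Lemma~\ref{thm:entropy-properties}).
\end{itemize}
Summing, $J_{\mathrm{final}} - J_0 \le \sum_i I(R_XR_Y:C_i\mid S_i) = 2I_0$.

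Initially the resource $\ket{\theta}$ is independent of the inputs and $R_Y$ is a copy of $Y$, so $J_0 = I(X:Y)$. Because the protocol is safe, $R_X$ is perfectly correlated with the retained $X$. The reduced state on $R_X R_Y Y B$ is therefore classical on $R_X$ and $R_Y$, and coincides with the classical-input state $\tau^{\mu_0}$ of Definition~\ref{def:protocol-states} with $R_X$ playing the role of $X$. Hence
\[
J_{\mathrm{final}} = I(X:YB)_\tau = I(X:Y) + I(X:B\mid Y).
\]
This gives $I(X:B\mid Y) \le 2I_0$. Since $I(X:B\mid Y)$ is an average over $y$ of nonnegative terms and $\Pr[Y=0]=2/3$, we get
\[
I(X:B\mid Y=0) \le \tfrac32 \cdot 2I_0 = 3I_0.
\]
The symmetric argument, tracking $I(R_Y : R_X X K^A_i)$ on Alice's side with Bob-to-Alice messages, gives $I(Y:A\mid X=0)\le 3I_0$.

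\emph{Step 2: alignment.} Conditioned on $Y=0$, the variable $X$ is uniform on $\{0,1\}$ and Bob holds $\rho_B^{00}$ or $\rho_B^{10}$. Apply Lemma~\ref{lem:holevo-local-alignment} with $v\le 3I_0$ and fixed purifications $\ket{\psi^{00}},\ket{\psi^{10}}$. This yields a unitary $U_A$ such that $\langle\psi^{00}|(U_A\otimes I)|\psi^{10}\rangle = F \ge 0$ and
\[
\norm{\zeta-\theta}_2 \le \sqrt{6I_0} = \delta.
\]
Symmetrically, conditioning on $X=0$ and working on Alice's marginals $\rho_A^{00},\rho_A^{01}$ with the roles of $A$ and $B$ swapped gives $U_B$ with $\langle\theta|\eta\rangle$ real and nonnegative and $\norm{\eta-\theta}_2\le\delta$. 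The triangle inequality then gives $\norm{\eta-\zeta}_2\le 2\delta$.

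Set $s_0 = \langle\eta|\theta\rangle$. Since this overlap is real and nonnegative, $\norm{\eta-\theta}_2^2 = 2-2s_0$. Therefore $1-s_0 = \frac12\norm{\eta-\theta}_2^2 \le \frac12\delta^2 = 3I_0$, and $s_0\in[0,1]$. Define $\ket{\xi} = (U_A\otimes U_B)\ket{\psi^{11}}$.

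\emph{Step 3: realizability.} Append two local operations at the end of the protocol: Alice applies $U_A$ controlled on $X=1$, and Bob applies $U_B$ controlled on $Y=1$. These are local isometries controlled by the safe input registers, and no message is added, so QIC is unchanged. Bob's controlled isometry preserves $V$ by Lemma~\ref{lem:potential-isometries}. Alice's operation does not change the reduced state on $TK$, so it leaves $V$ unchanged as well. The final states are then exactly $\theta,\eta,\zeta,\xi$.

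I expect the main obstacle to be the identification in Step 1 of the final coherent quantity $I(R_X: R_YYB)$ with the classical-input quantity $I(X:YB)_\tau$, i.e.\ justifying that the reference copies act as classical registers for safe protocols. The initial value $J_0=I(X:Y)$ also needs care: it is nonzero because $X$ and $Y$ are correlated under $\mu_0$, and it must cancel exactly.
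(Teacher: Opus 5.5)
Steps 2 and 3 match the paper's proof, and the overall shape of Step 1 (telescope an information quantity over the messages, then average over $Y$) is also the paper's. But Step 1 fails at exactly the point you flagged. The identifications $J_0=I(X:Y)$ and $J_{\mathrm{final}}=I(X:YB)_\tau$ are false, not merely unjustified.

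The cause is that $J_i$ keeps both copies $Y$ and $R_Y$ of Bob's input, so nothing dephases them. Only tracing out $X$ dephases $R_X$. The reduced state is therefore
\[
\sum_{x}\proj{x}_{R_X}\otimes\sum_{y,y'}\sqrt{\mu(x,y)\mu(x,y')}\,\ket{yy}\!\bra{y'y'}_{YR_Y}\otimes\Tr_A\ket{\psi^{xy}}\!\bra{\psi^{xy'}},
\]
which is coherent on $YR_Y$. Under $\mu_0$ this already matters at time zero. With $\ket{\Phi}=(\ket{00}+\ket{11})/\sqrt2$,
\[
J_0=S\bigl(\tfrac23\proj{\Phi}+\tfrac13\proj{00}\bigr)\approx0.55 \text{ bits},
\]
whereas $I(X:Y)=\log_2 3-4/3\approx0.25$ bits.

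Worse, the telescoped quantity can fall below the target. Take the one-message safe protocol in which Alice sends a copy of $x$. Then $J_{\mathrm{final}}=\log_2 3-2/3\approx0.92$ bits, so $J_{\mathrm{final}}-J_0\approx0.37$ bits. In the same protocol, $I(X:B\mid Y)_\tau=2/3=2I_0$. So the bound $J_{\mathrm{final}}-J_0\le 2I_0$, which your telescoping does establish, does not yield $I(X:B\mid Y)_\tau\le 2I_0$.

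The repair is to drop the reference copy of the conditioning input and track $J_i=I(R_X:YK_i)$ instead. Tracing out $R_Y$ dephases $Y$, so the state on $R_XYK_i$ is $\sum_{x,y}\mu(x,y)\proj{x}\otimes\proj{y}\otimes\rho^{xy}_{K_i}$. Both identifications then hold, and $J_{\mathrm{final}}-J_0=I(X:B\mid Y)_\tau$. The steps of the telescoping carry over:
\begin{itemize}
\item At an Alice-to-Bob message the increment is $I(R_X:C_i\mid YB_i)\le I(R_XR_Y:C_i\mid S_i)$. This follows from the chain rule $I(R_XR_Y:C_i\mid YB_i)=I(R_X:C_i\mid YB_i)+I(R_Y:C_i\mid R_XYB_i)$.
\item Bob's $Y$-controlled isometries act on $YK_i$ jointly, so they preserve $J_i$.
\item Bob sending a register, and Alice's local operations, cannot increase $J_i$.
\end{itemize}
On Alice's side, track $I(R_Y:XK^A_i)$, again without $R_X$. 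With this change, Step 1 becomes the chain-rule and data-processing argument the paper invokes, and the rest of your proposal coincides with the paper's proof.
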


\begin{proof}
	Under the uniform distribution $\mu_0$ on $\{(0,0),(0,1),(1,0)\}$, the marginal probability of $Y=0$ is $2/3$. Conditioned on $Y=0$, the input $X$ is uniformly distributed over $\{0,1\}$. By the chain rule and data processing, Bob's final information $I(X:B\mid Y)$ is at most the sum of the incoming message information terms, and hence at most $2I_0$ under our convention for QIC. Averaging over the classical register $Y$ therefore gives
	\[
	I(X:B \mid Y=0) \leq \frac{I(X:B\mid Y)}{\Pr(Y=0)} \leq 3I_0.
	\]
	By symmetry, conditioning on $X=0$ gives $I(Y:A \mid X=0) \leq 3I_0$.
	
	We can now directly apply Lemma~\ref{lem:holevo-local-alignment}. 
	For the $Y=0$ branch, the marginal density operators on Bob's system are $\rho_B^{00}$ and $\rho_B^{10}$. Since $v = I(X:B \mid Y=0) \leq 3I_0$, Lemma~\ref{lem:holevo-local-alignment} guarantees the existence of a unitary $U_A$ on $A$ such that
	\[
	\langle\psi^{00}|(U_A\otimes I_B)|\psi^{10}\rangle = F(\rho_B^{00},\rho_B^{10}) \in [0,1],
	\]
	so this overlap is real and nonnegative. 
	Defining $|\theta\rangle = |\psi^{00}\rangle$ and $|\zeta\rangle = (U_A\otimes I_B)|\psi^{10}\rangle$, the lemma immediately gives the distance bound
	\[
	\|\zeta-\theta\|_2 \leq \sqrt{2v} \leq \sqrt{6I_0} = \delta.
	\]
	
	Similarly, for the $X=0$ branch, Bob plays the role of the correcting party. Set $v=I(Y:A\mid X=0)$. Lemma~\ref{lem:holevo-local-alignment} yields a unitary $U_B$ on $B$ defining $|\eta\rangle = (I_A\otimes U_B)|\psi^{01}\rangle$ such that $s_0 = \langle\eta|\theta\rangle \in [0,1]$, and
	\[
	1-s_0 = \tfrac{1}{2}\|\eta-\theta\|_2^2 \leq v \leq 3I_0.
	\]
	This also ensures $\|\eta-\theta\|_2 \leq \sqrt{6I_0} = \delta$.
	
	For the cross-distance $\|\eta-\zeta\|_2$, we apply the triangle inequality:
	\[
	\|\eta-\zeta\|_2 \leq \|\eta-\theta\|_2 + \|\theta-\zeta\|_2 \leq \delta + \delta = 2\delta.
	\]
	
	Finally, Alice applies $U_A$ to her local workspace only if her input is $x=1$, and Bob applies $U_B$ to his local workspace only if $y=1$. This defines $|\xi\rangle = (U_A\otimes U_B)|\psi^{11}\rangle$ on input $(1,1)$. These operations require no additional communication, so QIC does not change. Alice's operation leaves the reduced state on $TK$ unchanged, and Bob's operation preserves its potential by Lemma~\ref{lem:potential-isometries}.
\end{proof}

\begin{lemma}\label{lem:final-state-separation}
	Suppose the protocol in
	Lemma~\ref{lem:alignment} computes
	$\operatorname{AND}$ with worst-case error at most
	$0\leq\varepsilon<1/2$, and Bob produces the output.
	For the corrected workspace vectors in that lemma, let
	\[
	\gamma=1-2\sqrt{\varepsilon(1-\varepsilon)}>0,
	\qquad \omega=\langle\xi|\zeta\rangle.
	\]
	Then
	\[
	|\langle\eta|\xi\rangle|\leq1-\gamma,
	\qquad
	|\omega|\leq1-\gamma+2\delta.
	\]
	In particular,
	\[
	\|\xi-\eta\|_2\geq\sqrt{2\gamma},
	\qquad
	\|\xi-\zeta\|_2
	\geq\max\{0,\sqrt{2\gamma}-2\delta\}.
	\]
	If $I_0\leq\gamma^2/96$, then
	\[
	\operatorname{Re}\omega\leq|\omega|\leq1-\gamma/2,
	\qquad
	\|\xi-\zeta\|_2\geq\sqrt\gamma.
	\]
\end{lemma}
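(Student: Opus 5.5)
The plan is to pass from overlaps of the corrected pure states to fidelities of Bob's reduced output states, and then to bound those fidelities using correctness. Bob produces the output by a measurement on his registers. Inputs $01$ and $11$ share $y=1$, so Bob's safe register $Y$ is the same in both cases.

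\textbf{Step 1: the bound on $|\langle\eta|\xi\rangle|$.} The vectors $\eta=(I_A\otimes U_B)\psi^{01}$ and $\xi=(U_A\otimes U_B)\psi^{11}$ carry the same unitary $U_B$ on $B$. Let $\beta^{xy}$ denote Bob's reduced state (including $Y$) on input $xy$. By monotonicity of fidelity under partial trace (Uhlmann), $|\langle\eta|\xi\rangle|\le F(U_B\rho_B^{01}U_B^\dagger,U_B\rho_B^{11}U_B^\dagger)=F(\rho_B^{01},\rho_B^{11})$. Fidelity is invariant under the common unitary and equals the fidelity of Bob's final states. Bob's output measurement is a channel to a binary distribution, so data processing gives $F(\rho_B^{01},\rho_B^{11})\le F(P,Q)=\sqrt{P_0Q_0}+\sqrt{P_1Q_1}$. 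Here $P$ is the output distribution on $01$ and $Q$ the one on $11$, with $P_1\le\varepsilon$ and $Q_0\le\varepsilon$.

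The remaining work is the elementary maximization of $\sqrt{ab}+\sqrt{(1-a)(1-b)}$ over $a=P_1\le\varepsilon$ and $b=Q_0\le\varepsilon$. Since $\varepsilon<1/2$, the function is increasing in each variable on this region, so the maximum sits at $a=b=\varepsilon$. Its value there is $2\sqrt{\varepsilon(1-\varepsilon)}=1-\gamma$. Monotonicity follows because the partial derivative in $a$ has the sign of $\sqrt{b/a}-\sqrt{(1-b)/(1-a)}$, which is nonnegative when $a\le 1/2$ and $b\ge a$; if needed, use symmetry plus concavity along the diagonal. This is the only step I expect needs genuine care, and it is a two-variable calculus check.

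\textbf{Step 2: the bound on $|\omega|$.} Write $\omega=\langle\xi|\zeta\rangle=\langle\xi|\eta\rangle+\langle\xi|\zeta-\eta\rangle$. Cauchy--Schwarz with $\|\xi\|=1$ and Lemma~\ref{lem:alignment} give $|\omega|\le 1-\gamma+\|\eta-\zeta\|_2\le 1-\gamma+2\delta$.

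\textbf{Step 3: the distance bounds.} For unit vectors, $\|\xi-\eta\|_2^2=2-2\operatorname{Re}\langle\eta|\xi\rangle\ge 2-2(1-\gamma)=2\gamma$. Then $\|\xi-\zeta\|_2\ge\|\xi-\eta\|_2-\|\eta-\zeta\|_2\ge\sqrt{2\gamma}-2\delta$, and the bound is trivially at least $0$.

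\textbf{Step 4: the small-$I_0$ regime.} If $I_0\le\gamma^2/96$, then $\delta=\sqrt{6I_0}\le\gamma/4$. Hence $|\omega|\le1-\gamma+\gamma/2=1-\gamma/2$, and $\|\xi-\zeta\|_2^2=2-2\operatorname{Re}\omega\ge\gamma$.
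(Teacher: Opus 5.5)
Your strategy is sound and Steps 2--4 coincide with the paper's proof. However, Step 1 as written does not establish $|\langle\eta|\xi\rangle|\le1-\gamma$, and the failure is exactly at the step you flagged. Set $a=P_1\le\varepsilon$ (false positive on $01$) and $b=Q_0\le\varepsilon$ (false negative on $11$). The classical fidelity is then
\[
F(P,Q)=\sqrt{P_0Q_0}+\sqrt{P_1Q_1}=\sqrt{(1-a)\,b}+\sqrt{a\,(1-b)},
\]
not $\sqrt{ab}+\sqrt{(1-a)(1-b)}$. The function you wrote is the fidelity of $(a,1-a)$ with $(b,1-b)$. It equals $1$ whenever $a=b$, so its maximum over $a,b\le\varepsilon$ is $1$ and gives no bound. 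Your own derivative criterion ($\partial_a\ge0$ iff $b\ge a$) also shows it is not increasing on the whole square. The value $2\sqrt{\varepsilon(1-\varepsilon)}$ you report is the value of the correct expression at $(\varepsilon,\varepsilon)$, not of the one you wrote.

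The repair is short. For $g(a,b)=\sqrt{(1-a)b}+\sqrt{a(1-b)}$, the sign of $\partial_a g$ is that of $\sqrt{(1-b)/a}-\sqrt{b/(1-a)}$, i.e.\ of $(1-a)(1-b)-ab=1-a-b$. This is positive because $a+b\le2\varepsilon<1$. By symmetry the same holds in $b$, so $\max_{[0,\varepsilon]^2}g=g(\varepsilon,\varepsilon)=2\sqrt{\varepsilon(1-\varepsilon)}=1-\gamma$.

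Once corrected, your route differs only mildly from the paper's. The paper never computes the fidelity of the output distributions. It uses the common (corrected) acceptance effect $\widetilde E_1$ to get $\Delta(\omega_B^\eta,\omega_B^\xi)\ge(1-\varepsilon)-\varepsilon=1-2\varepsilon$. The Fuchs--van de Graaf bound $\Delta\le\sqrt{1-F^2}$ then gives $F\le2\sqrt{\varepsilon(1-\varepsilon)}$ immediately, followed by Uhlmann exactly as in your proof, with no optimization needed. Your version, data processing of fidelity under Bob's measurement followed by a two-variable maximization, is self-contained and equally valid. It is, however, the more error-prone of the two, as the slip above shows. For binary distributions the two routes are tied together by $F(P,Q)\le\sqrt{1-\Delta(P,Q)^2}$ with $\Delta(P,Q)=1-a-b\ge1-2\varepsilon$.
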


\begin{proof}
	Let $\omega_B^\eta=\operatorname{Tr}_A|\eta\rangle
	\langle\eta|$ and
	$\omega_B^\xi=\operatorname{Tr}_A|\xi\rangle
	\langle\xi|$.
	Inputs $01$ and $11$ have the same Bob input, so their
	corrected acceptance effect is the same operator
	$\widetilde E_1$.
	Correctness gives
	\[
	\operatorname{Tr}(\widetilde E_1\omega_B^\eta)
	\leq\varepsilon,
	\qquad
	\operatorname{Tr}(\widetilde E_1\omega_B^\xi)
	\geq1-\varepsilon.
	\]
	Consequently, their trace distance is at least $1-2\varepsilon$.
	The trace-distance--fidelity inequality and Uhlmann's theorem
	imply
	\[
	\begin{aligned}
		|\langle\eta|\xi\rangle|
		&\leq F(\omega_B^\eta,\omega_B^\xi)\\
		&\leq\sqrt{1-(1-2\varepsilon)^2}
		=2\sqrt{\varepsilon(1-\varepsilon)}=1-\gamma.
	\end{aligned}
	\]
	Since $\|\zeta-\eta\|_2\leq2\delta$, Cauchy--Schwarz gives
	\[
	|\omega|
	\leq|\langle\xi|\eta\rangle|
	+|\langle\xi|\zeta-\eta\rangle|
	\leq1-\gamma+2\delta.
	\]
	Normalization gives
	$\|\xi-\eta\|_2^2
	=2-2\operatorname{Re}\langle\eta|\xi\rangle
	\geq2\gamma$.
	The reverse triangle inequality then yields the stated
	unconditional lower bound on $\|\xi-\zeta\|_2$.
	Finally, $I_0\leq\gamma^2/96$ implies
	$\delta=\sqrt{6I_0}\leq\gamma/4$.
	Thus $\operatorname{Re}\omega\leq|\omega|\leq1-\gamma/2$, and
	$\|\xi-\zeta\|_2^2=2-2\operatorname{Re}\omega\geq\gamma$.
\end{proof}

\subsection{Analysis on the Reduced Quantum State}
\label{sec:four-dimensional}

During protocol execution, the complete quantum state contains
registers $R_XR_YXYAB$. To estimate the potential change, we first
trace out the work registers and reduce the calculation to a
four-dimensional state.

Set $T=R_X$ and $Q=R_YY$. Since the protocol is safe, the register
$Q$ is supported on the subspace spanned by $|00\rangle_{R_YY}$
and $|11\rangle_{R_YY}$. We use these two vectors as the basis
$|0\rangle_Q,|1\rangle_Q$. Thus, after also tracing out $X$,
the remaining state on $TQ$ is a $4\times4$ matrix.
For the initial and final states, write
\[
\hat\rho_j(p)=\Tr_{XAB}\rho_j(p),\qquad j\in\{0,r\}.
\]
Here $\rho_r(p)$ is the final state after the local unitaries in
Lemma~\ref{lem:alignment}. For these reduced states, we use the
abbreviation $V(\hat\rho_j(p))=V(\hat\rho_j(p),T,Q)$.
Relative to $V(\rho_j(p))$, this discards Bob's workspace $B$
from $K=QB$; the registers $X,A$ were already traced out in
the definition of the protocol potential.

For convenience, set $z=\sqrt p$ and
$a=\sqrt{(1-z^2)/3}$. To compare the states in the following
calculation, define
\[
\hat\rho_{TQ}(s,\omega,z)
=\proj0_T\otimes a^2\begin{pmatrix}1&s\\s&1\end{pmatrix}
+\proj1_T\otimes
\begin{pmatrix}
	a^2&az\omega\\
	az\overline\omega&z^2
\end{pmatrix},
\]
where $s\in[0,1]$ and $|\omega|\le1$. Its marginal and the
corresponding reference operator are
\[
\begin{aligned}
	\hat\rho_Q(s,\omega,z)
	&=\Tr_T\hat\rho_{TQ}(s,\omega,z)\\
	&=\begin{pmatrix}
		2a^2&a^2s+az\omega\\
		a^2s+az\overline\omega&a^2+z^2
	\end{pmatrix},\\
	\sigma_{TQ}(s,\omega,z)
	&=\alpha_T\otimes\hat\rho_Q(s,\omega,z).
\end{aligned}
\]
The reference state remains
$\alpha_T=\frac23\proj0+\frac13\proj1$.
With $s_0=\langle\eta|\theta\rangle$ and
$\omega=\langle\xi|\zeta\rangle$ from the preceding lemmas,
tracing out the workspace gives
\[
\hat\rho_r(p)=\hat\rho_{TQ}(s_0,\omega,\sqrt p),
\qquad
\hat\rho_0(p)=\hat\rho_{TQ}(1,1,\sqrt p).
\]
The initial state corresponds to setting both overlaps to $1$.
The following lemma allows us to lower-bound the potential change
of the complete state using these reduced states.

\begin{lemma}\label{lem:reduction}
	The potential changes of the full and reduced states satisfy
	\[
	V(\rho_r(p))-V(\rho_0(p))
	\ge V(\hat{\rho}_r(p))-V(\hat\rho_0(p)).
	\]
\end{lemma}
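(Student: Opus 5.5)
The plan is to handle the two endpoints separately. For the final state, monotonicity under partial trace gives $V(\rho_r(p))\ge V(\hat\rho_r(p))$. For the initial state, I will show the stronger fact $V(\rho_0(p))=V(\hat\rho_0(p))$, because at the start Bob's workspace is uncorrelated with $TQ$. Subtracting the two relations gives the claim.

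\emph{Final state.} By Definition~\ref{def:potential-protocol}, $V(\rho_r(p))$ is evaluated on the reduced state on $T=R_X$ and $K=QB$, where $Q=R_YY$ and $B$ is Bob's final workspace, already after the local unitaries of Lemma~\ref{lem:alignment}. The registers $X$ and $A$ are already traced out in this reduced state. Hence $\hat\rho_r(p)$ is obtained from it by tracing out the single register $B$ from $K$. Lemma~\ref{lem:potential-message-gap}, applied with $C=B$, then gives $V(\rho_r(p))\ge V(\hat\rho_r(p))$.

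\emph{Initial state.} In $\ket{\Psi_0^{\mu_p}}$ Bob's workspace is $B_0$, which carries the input-independent resource $\ket{\theta}_{A_0B_0}$. The reduced state therefore factorizes as $\rho_{TQB_0}=\hat\rho_0\otimes\theta_{B_0}$, where $\theta_{B_0}$ is the marginal of the resource. The reference operator factorizes as well: $\sigma=\alpha_T\otimes\hat\rho_Q\otimes\theta_{B_0}=\hat\sigma\otimes\theta_{B_0}$, where $\hat\sigma=\alpha_T\otimes\hat\rho_Q$ is the reference for the reduced pair. Let $\hat M_u$ be the optimizer for $(\hat\rho_0,\hat\sigma)$, and take as candidate $M_u=\hat M_u\otimes\Pi_{\theta}$, where $\Pi_\theta$ is the projector onto $\supp\theta_{B_0}$. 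This candidate lives on $\supp\sigma=\supp\hat\sigma\otimes\supp\theta$. Using $\theta\Pi_\theta=\Pi_\theta\theta=\theta$, we get
\[
\rho M_u+uM_u\sigma
=(\hat\rho_0\hat M_u+u\hat M_u\hat\sigma)\otimes\theta
=(\hat\rho_0-\hat\sigma)\otimes\theta
=\rho-\sigma.
\]
The solution on $\supp\sigma$ is unique by the invertibility shown in Lemma~\ref{lem:potential-variational}, so $M_u$ is the optimizer. Consequently
\[
\mathcal F_u(\rho,\sigma)=\langle(\hat\rho_0-\hat\sigma)\otimes\theta,\hat M_u\otimes\Pi_\theta\rangle=\mathcal F_u(\hat\rho_0,\hat\sigma)\Tr\theta=\mathcal F_u(\hat\rho_0,\hat\sigma).
\]
Integrating over $u\in[1,2]$ gives $V(\rho_0(p))=V(\hat\rho_0(p))$.

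Subtracting this equality from the final-state inequality proves the lemma. The step that needs the most care is the bookkeeping that makes the initial factorization valid. Specifically, I must confirm that $K$ at time $0$ consists only of $R_YY$ and $B_0$, and that $\hat\rho_0(p)$ equals the $TQ$ marginal $\hat\rho_{TQ}(1,1,\sqrt p)$. The latter holds because tracing out $XAB$ at time $0$ leaves only the classical-input correlations between $T$ and $Q$. Once this is checked, the rest follows directly from Lemma~\ref{lem:potential-message-gap} and the tensor-product computation above.
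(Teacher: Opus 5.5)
Your proof is correct and follows the paper's argument. For the final state you use monotonicity under partial trace (Lemma~\ref{lem:potential-message-gap}). For the initial state you use the factorization $\rho_{TK}=\hat\rho_0\otimes\theta_{B_0}$ and lift the reduced optimizer to get $V(\rho_0(p))=V(\hat\rho_0(p))$. The only difference is that you lift by the support projector $\Pi_\theta$ rather than by $I_B$. This places the candidate exactly on $\supp\sigma$, so uniqueness applies directly, whereas the paper's lift by $I_B$ simply yields the same objective value.
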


\begin{proof}
	By Lemma~\ref{lem:potential-message-gap}, tracing out Bob's workspace cannot increase the potential. Thus, $V(\rho_r(p))\ge V(\hat{\rho}_r(p))$. It remains to show equality for the initial state.
	
	Bob's initial workspace is independent of the input registers. Omitting the initial-state index and the parameter $p$, we therefore have $\rho_{TK}=\rho_{TQ}\otimes \rho_{B}$ and $\alpha_{T}\otimes \rho_{K}=\alpha_T\otimes \rho_{Q}\otimes \rho_{B}$.
	
	By Lemma~\ref{lem:potential-variational}, the optimizer $M$ for the reduced pair satisfies
	$$
	\rho_{TQ} M + u M (\alpha_T\otimes\rho_Q) = \rho_{TQ} - \alpha_T\otimes\rho_Q 
	$$
	
	Lift $M$ to the full state by setting $M'=M\otimes I_B$. This is an optimizer for the full pair, since
	\[
	\begin{aligned}
		\rho_{TK} M' + u M' (\alpha_T\otimes \rho_K) 
		&= (\rho_{TQ} \otimes \rho_B)(M \otimes I_B) + u (M \otimes I_B)(\alpha_T\otimes\rho_Q\otimes\rho_B)\\
		&= (\rho_{TQ} M \otimes \rho_B I_B) + u (M(\alpha_T\otimes\rho_Q) \otimes I_B\rho_B) \\
		&= (\rho_{TQ} M \otimes \rho_B) + u (M(\alpha_T\otimes\rho_Q) \otimes\rho_B)  \\
		&= (\rho_{TQ} M + u M(\alpha_T\otimes\rho_Q))\otimes\rho_B
		\\&=(\rho_{TQ} - \alpha_T\otimes\rho_Q)\otimes \rho_B
		\\&=\rho_{TK}-\alpha_T\otimes \rho_K.
	\end{aligned}
	\]
	Substituting $M'=M\otimes I_B$ into the variational objective gives the same value as for the reduced pair, because $\Tr\rho_B=1$. Integrating over $u\in[1,2]$ gives $V(\rho_0(p))=V(\hat\rho_0(p))$. Combining this equality with the final-state inequality proves the lemma.
\end{proof}

\begin{lemma}\label{lem:four-dimensional}
	For every $u\in[1,2]$, $z\in[0,1/2]$, $s\in[0,1]$, and
	$\omega\in\mathbb C$ with $|\omega|\le1$, the states defined above
	satisfy
	\[
	\begin{aligned}
		&\mathcal F_u\bigl(\hat\rho_{TQ}(s,\omega,z),
		\sigma_{TQ}(s,\omega,z)\bigr)\\
		&\quad-\mathcal F_u\bigl(\hat\rho_{TQ}(1,1,z),
		\sigma_{TQ}(1,1,z)\bigr)\\
		&\qquad\geq\frac12(1-\operatorname{Re}\omega)z
		-\frac{128}{3}(1-s)-442z^2.
	\end{aligned}
	\]
\end{lemma}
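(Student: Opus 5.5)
The plan is to split the left-hand side into two pieces. The first changes $\omega$ with $s=1$ held fixed. The second changes $s$ from $1$ to $s$. Write $\Phi(s,\omega)=\mathcal F_u(\hat\rho_{TQ}(s,\omega,z),\sigma_{TQ}(s,\omega,z))$. Then
\[
\Phi(s,\omega)-\Phi(1,1)=\bigl[\Phi(1,\omega)-\Phi(1,1)\bigr]+\bigl[\Phi(s,\omega)-\Phi(1,\omega)\bigr].
\]

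\textbf{The $s$-step.} I would bound the second bracket by a Lipschitz estimate in $s$ that is uniform in $\omega$ and $z$. By the envelope argument in the proof of Lemma~\ref{lem:quadratic-gap-stability}, the derivative of $\Phi$ in a parameter equals $\Tr(\partial\hat\rho\,H_\rho)+\Tr(\partial\sigma\,H_\sigma)$, with $H_\rho=M+M^\dagger-MM^\dagger$ and $H_\sigma=-M-M^\dagger-uM^\dagger M$. The operator bound $\|M\|_\infty\le B=1+\sqrt6$ from that proof makes both $H$'s bounded by absolute constants. Moreover $\partial_s\hat\rho$ and $\partial_s\sigma$ have trace norm $O(a^2)\le 2/3$. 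This is only linear in $1-s$, which is acceptable because the target has a linear term $\frac{128}{3}(1-s)$.

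One subtlety is that $\sigma$ may be singular, so the optimizer is defined only on its support. I would handle this with the same $\varepsilon$-regularization used at the end of Lemma~\ref{lem:quadratic-gap-stability}, since the constants there are uniform. The constant $128/3$ should then come from $\|H\|\le$ const times $a^2\cdot(\text{factor } 2)$, tuned as needed.

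\textbf{The $\omega$-step at $s=1$.} This is the main computation. At $s=1$ and $z=0$ the state is $\proj{+}$-like on $Q$ in the $T=0$ block. More precisely, with $\omega=1$ the full state $\hat\rho_{TQ}(1,1,z)$ is the reference configuration. I would expand $\Phi(1,\omega)$ in powers of $z$ up to first order, with remainder $O(z^2)$. The zeroth-order term does not depend on $\omega$, because $\omega$ always appears multiplied by $z$. The remainder is controlled uniformly by bounding second derivatives in $z$ via the same bounded-$H$ estimates, which yields the $442z^2$ term.

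For the first-order term I would use the envelope theorem again. Here $\partial_z\hat\rho$ at $z=0$ is $a\,\proj1_T\otimes(\omega\ket0\bra1+\bar\omega\ket1\bra0)$, and $\partial_z\sigma=\alpha_T\otimes a(\omega\ket0\bra1+\text{h.c.})$. These are paired with $H_\rho,H_\sigma$ built from the explicit optimizer $M_0$ at $z=0$. That optimizer can be computed in closed form, because at $z=0$ the state is $a^2(\proj0+\proj1)_T\otimes\ket{\phi}\bra{\phi}$-type with $\ket\phi\propto\ket0+\ket1$ restricted appropriately. The first-order coefficient then comes out as $c\,(1-\operatorname{Re}\omega)$ plus something independent of $\omega$. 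Comparing with $\omega=1$ cancels the latter.

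\textbf{Main obstacle.} The crux, which I expect to be the hardest step, is verifying that $c\ge1/2$ uniformly in $u\in[1,2]$. The support of $\sigma$ jumps at $z=0$, since $\hat\rho_Q$ loses rank when $s=1$ and $z=0$. This makes the Taylor expansion delicate. I would try first to solve $\rho M+uM\sigma=\rho-\sigma$ explicitly in the eigenbasis of $\hat\rho_Q(1,1,0)$ and check the sign of the linear coefficient directly.
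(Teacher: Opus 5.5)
\textbf{There is a genuine gap in the $\omega$-step.} Your split into an $s$-step at fixed $\omega$ and an $\omega$-step at $s=1$ is the paper's decomposition. Your $s$-step is a differential version of the paper's argument: Lemma~\ref{lem:optimizer-bounds} tests each variational objective at the other pair's optimizer, using only $\rho\preceq6\sigma$ and $\norm{M}_\infty<4$, and obtains $(24+40)\cdot2a^2(1-s)\le\frac{128}{3}(1-s)$. That part is fine. The $\omega$-step, however, carries all the content of the lemma, and it has two problems.

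First, your description of the state at $z=0$ is wrong. $\hat\rho_{TQ}(1,\omega,0)=\tfrac23\proj0_T\otimes\proj{+}_Q+\tfrac13\proj1_T\otimes\proj0_Q$ is not of product form. Its marginal $\hat\rho_Q(1,\omega,0)=\bigl(\begin{smallmatrix}2/3&1/3\\1/3&1/3\end{smallmatrix}\bigr)$ has determinant $1/9$, so nothing loses rank at $z=0$. The obstacle you single out does not exist. Meanwhile, the actual evaluation of the linear coefficient, and the bound $c\ge\frac12$ uniformly in $u\in[1,2]$, is deferred.

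Second, your remainder argument does not work as stated. The envelope theorem gives the first $z$-derivative with $M$ frozen. The second derivative, however, involves $\partial_zM=(\mathcal K^{(u)}_{\rho,\sigma})^{-1}\bigl[(\partial_z\rho)(I-M)-(I+uM)(\partial_z\sigma)\bigr]$. This is controlled only through a lower bound on $\lambda_{\min}(\sigma)$ (e.g.\ $1/(u\lambda_{\min}(\sigma))$ in Hilbert--Schmidt norm), not by $\norm{M}_\infty\le1+\sqrt6$. And $\sigma$ does degenerate inside the lemma's range: at $s=1$, $\omega=1$, $z=1/2$ one has $a=z=1/2$ and $\hat\rho_Q=\proj{+}$. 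So the bounded-$H$ estimates give no uniform $O(z^2)$ remainder on $[0,1/2]$, let alone the constant $442$.

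The missing idea is the paper's reduction, which never differentiates the optimizer.
It replaces the second conditional block by the outer product of $a\ket0+z\overline\omega\ket1$, renormalizes, and replaces $\alpha_T$ by the resulting $T$-marginal. By the same trace-norm continuity lemma this costs at most $200z^2$, regardless of the rank of $\sigma$ (Lemma~\ref{lem:pure-approximation}). For pure conditional blocks with matched reference, Lemma~\ref{lem:pure-ensemble} gives $\mathcal F_u=\phi_u(\det\widetilde\rho_Q)$ exactly. Here $\phi_u(d)=\frac{(u+1)(u+2)}{u}\,\frac{d}{1+u+u^2d}$ is smooth with bounded derivatives on all of $[0,1/4]$. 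The expansion $d_\omega(z)=\frac19-\frac{2}{3\sqrt3}z\operatorname{Re}\omega+O(z^2)$ and a one-variable Taylor bound then give the linear coefficient $\kappa_u=\frac{2}{3\sqrt3}\phi_u'(1/9)\ge\frac12$ with remainder $21z^2$.

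If you want to keep a direct expansion, you would need three things. (i) Dispose of $z$ above an absolute constant using $0\le\mathcal F_u\le2(1+\sqrt6)$. (ii) Prove an explicit lower bound on $\lambda_{\min}(\hat\rho_Q)$ on the remaining range, and push explicit constants through $\partial_zM$. (iii) Actually compute the first-order term at $z=0$. There the $T$-marginal equals $\alpha_T$ and the blocks are pure, so the optimizer is the closed form in Lemma~\ref{lem:pure-ensemble}.
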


\begin{proof}
	First, we replace $s$ by $1$. By Lemma~\ref{lem:restore-overlap},
	this changes $\mathcal F_u$ by at most $\frac{128}{3}(1-s)$, so
	\[
	\begin{aligned}
		&\mathcal F_u\bigl(\hat\rho_{TQ}(s,\omega,z),
		\sigma_{TQ}(s,\omega,z)\bigr)\\
		&\qquad\geq
		\mathcal F_u\bigl(\hat\rho_{TQ}(1,\omega,z),
		\sigma_{TQ}(1,\omega,z)\bigr)
		-\frac{128}{3}(1-s).
	\end{aligned}
	\]
	
	After this replacement, the first conditional block is pure.
	Lemma~\ref{lem:pure-approximation} replaces the second block by
	another pure state, with an error of order $z^2$. For the resulting
	state, Lemma~\ref{lem:pure-ensemble} expresses $\mathcal F_u$ as a
	function $\phi_u$ of the determinant of its marginal.
	Lemmas~\ref{lem:pure-approximation} and~\ref{lem:scalar-expansion}
	then give
	\[
	\begin{aligned}
		&\Bigl|\mathcal F_u\bigl(\hat\rho_{TQ}(1,\omega,z),
		\sigma_{TQ}(1,\omega,z)\bigr)\\
		&\qquad-\phi_u(1/9)+\kappa_u z\operatorname{Re}\omega\Bigr|
		\leq221z^2,
		\qquad \kappa_u\geq\tfrac12.
	\end{aligned}
	\]
	Here $\phi_u(d)$ is the pure-state expression from Lemma~\ref{lem:pure-ensemble}, and $\kappa_u=\frac{2}{3\sqrt3}\phi_u'(1/9)$ is its linear coefficient in Lemma~\ref{lem:scalar-expansion}.
	
	Apply the same estimate with $\omega=1$ and subtract the two
	expressions. The common term $\phi_u(1/9)$ cancels, and the two
	errors add up to at most $442z^2$. Thus
	\[
	\begin{aligned}
		&\mathcal F_u\bigl(\hat\rho_{TQ}(1,\omega,z),
		\sigma_{TQ}(1,\omega,z)\bigr)\\
		&\quad-\mathcal F_u\bigl(\hat\rho_{TQ}(1,1,z),
		\sigma_{TQ}(1,1,z)\bigr)\\
		&\qquad\geq\kappa_u(1-\operatorname{Re}\omega)z-442z^2\\
		&\qquad\geq\frac12(1-\operatorname{Re}\omega)z-442z^2.
	\end{aligned}
	\]
	The last inequality follows from $1-\operatorname{Re}\omega\geq0$.
	Combining this with the first estimate proves the lemma.
\end{proof}

The coefficient of $z=\sqrt p$ is controlled by correctness through
Lemma~\ref{lem:final-state-separation}. The first error is of order
$1-s_0=O(I_0)$ by Lemma~\ref{lem:alignment}; the remaining error
is of order $z^2=p$. The appendix verifies the uniform constants
without any further protocol assumptions.

\subsection{Proof of the terminal bound}
\label{sec:terminal-proof}

\begin{proof}[Proof of Lemma~\ref{lem:potential-terminal}]
	Assume $I_0\le\gamma_\varepsilon^2/96$ and $p\in[0,1/4]$.
	Apply Lemmas~\ref{lem:reduction} and~\ref{lem:four-dimensional}
	with $s=s_0$ and $z=\sqrt p$.
	The integration interval has length one, so
	\[
	\begin{aligned}
		V(\rho_r(p))-V(\rho_0(p))
		&\ge\tfrac12(1-\operatorname{Re}\omega)\sqrt p
		-\tfrac{128}{3}(1-s_0)-442p\\
		&\ge\tfrac14\gamma_\varepsilon\sqrt p-128I_0-442p
	\end{aligned}
	\]
	The second inequality uses Lemma~\ref{lem:final-state-separation} for
	$1-\operatorname{Re}\omega$ and Lemma~\ref{lem:alignment} for $1-s_0$.
	This proves the lemma with $c_0=1/96$, $C_3=1/4$, $C_4=128$, $C_5=442$, and $p_0=1/4$.
\end{proof}

\section{Lower Bounds for Set Disjointness}
\label{sec:disjointness}

Using our lower bound for AND, we obtain several results for set disjointness, which imply quantum protocol of set disjointness is "fragile".

\subsection{Optimal Round Restriction for Set Disjointness}

Now we can give a strictly Optimal lower bound without logarithm loss for set disjointness with $r$ round interaction. 
Let
\[
    \QIC^r_0(\mathrm{AND},1/3)
    =
    \inf_{\Pi\in\mathcal P_r(\mathrm{AND},1/3)}
    \max_{\mu:\,\mu(1,1)=0}\QIC(\Pi,\mu).
\]
The following result from~\cite{BGKMT15} relates
set disjointness to AND.

\begin{theorem}[Lemma 4.20 of \cite{BGKMT15}]
\label{thm:disj-to-and}
For all integers $n,r\geq 1$,
\[
  \mathrm{QCC}_r(\mathrm{DISJ}_n,1/3)
  \geq
  n\,\mathrm{QIC}^{r}_{0}(\mathrm{AND},1/3).
\]
\end{theorem}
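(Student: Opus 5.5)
The plan is the standard information-theoretic direct-sum embedding: start from an $r$-message protocol $\Pi$ for $\DISJ_n$ with worst-case error $1/3$ and build, for every distribution $\mu$ with $\mu(1,1)=0$, an $r$-message protocol $\Pi'$ for $\AND$ with worst-case error $1/3$ and $\QIC(\Pi',\mu)\le \QCC(\Pi)/n$. Since $\Pi'$ will not depend on $\mu$, this gives $\max_{\mu:\mu(1,1)=0}\QIC(\Pi',\mu)\le \QCC(\Pi)/n$, and taking the infimum over $\Pi$ yields the claim.

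\textbf{Embedding.} On input $(a,b)$ to $\AND$, the players pick a coordinate $j\in[n]$ from shared randomness, which we realize as part of the prior entanglement $\ket{\theta}$ so that Definition~\ref{def:protocol} is respected. They place $(a,b)$ in coordinate $j$. The other coordinates are filled with pairs drawn from a distribution supported on $\{00,01,10\}$, for example $\mu$ itself. Each coordinate is filled in one of two ways: Alice samples her bit and Bob's matching bit comes from shared randomness, or the roles are reversed, with the choice fixed by a shared bit $D_k$. This is the usual Braverman--Garg--Ko--Mao--Touchette sampling, which makes the filled coordinates locally samplable without communication.

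\textbf{Correctness and rounds.} Because no filled coordinate equals $11$, we have $\DISJ_n=1-\AND(a,b)$. Bob outputs the negation of $\Pi$'s answer, so the worst-case error stays at most $1/3$ and the message count stays $r$.

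\textbf{Information bound.} Two ingredients are needed.
\begin{enumerate}
\item Additivity, or a chain rule, of QIC over product inputs: when $\Pi$ runs on $\mu^{\otimes n}$, with the sampling registers $D$ and the public parts treated as part of the conditioning, we want $\sum_j \QIC(\Pi_j,\mu)\le \QIC(\Pi,\mu^{\otimes n})\le \QCC(\Pi)$. Here $\Pi_j$ is the embedded protocol at a fixed $j$, the last inequality is Lemma~\ref{lem:qic-entropy-formula}, and a random $j$ then costs at most $\QCC(\Pi)/n$. The first inequality should follow from the chain rule in Lemma~\ref{thm:entropy-properties}. We apply it to $I(R_XR_Y:C_i\mid S_i)$ with $R_XR_Y=R^1\cdots R^n$, and we need the independence of the coordinates together with the fact that the other coordinates' references can be treated as local purifying systems of one party. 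This is the content of Touchette's direct-sum argument.
\item Public versus private randomness: the randomness of the other coordinates has to be held in a form that does not inflate QIC. Under Definition~\ref{def:qic} with safe protocols, shared entanglement used as public randomness whose copy sits with a receiver contributes nothing to the conditional mutual information terms.
\end{enumerate}

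The main obstacle is making step 1 rigorous in the fully quantum setting, because conditioning on the other coordinates' quantum reference registers is not an average. I would use the trick in which, for each $k\neq j$, the party who sampled coordinate $k$ holds its purification. The conditional mutual information for coordinate $j$ is then taken conditioned on a receiver system that already contains those purifications, and the chain rule decomposes $I(R:C_i\mid S_i)$ exactly into such terms. For the final statement I would expect to cite the corresponding lemma of \cite{BGKMT15} rather than reprove it, since Theorem~\ref{thm:disj-to-and} is quoted from there.
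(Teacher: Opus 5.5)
The paper gives no proof of this statement; it quotes it as Lemma~4.20 of \cite{BGKMT15}. Your embedding combined with a Touchette-style direct sum is the right mechanism for a \emph{fixed} prior. As written, however, it does not reach the stated inequality, because $\QIC^r_0(\AND,1/3)=\inf_{\Pi'}\max_{\mu:\mu(1,1)=0}\QIC(\Pi',\mu)$ has the infimum outside.

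You assert that $\Pi'$ does not depend on $\mu$. Yet the bound you rely on, $\sum_j\QIC(\Pi_j,\mu)\le\QIC(\Pi,\mu^{\otimes n})\le\QCC(\Pi)$, requires the other $n-1$ coordinates to be filled from the same $\mu$ under which information is charged. So in fact $\Pi'=\Pi'_\mu$. What you actually obtain is $\QCC(\Pi)\ge n\,\QIC(\Pi'_\mu,\mu)$ for each $\mu$, that is,
\[
\QCC_r(\DISJ_n,1/3)\ \ge\ n\max_{\mu:\,\mu(1,1)=0}\ \inf_{\Pi'\in\mathcal P_r(\AND,1/3)}\QIC(\Pi',\mu),
\]
with the quantifiers in the wrong order.

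A fixed filling $\nu$ does not repair this. The $j$-th embedded cost is then a chain-rule term of $\nu^{\otimes(j-1)}\otimes\mu\otimes\nu^{\otimes(n-j)}$, which is a different distribution for each $j$, so nothing bounds the sum by $\QCC(\Pi)$. For instance, take $\nu=\delta_{00}$. A $\DISJ_n$ protocol may add $O(\log n)$ qubits to one of Alice's messages, announcing the position of her unique $1$ whenever her input has Hamming weight at most one. This reveals $a$ in every embedded copy, so each term is at least $\tfrac13$ under $\mu_0$, while $\QCC(\Pi)$ can be $o(n)$.

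The missing ingredient is a minimax step, and it does hold here.
\begin{itemize}
\item By Proposition~\ref{prop:safe-copies} you may restrict to safe protocols. Purity gives $I(R:C_i\mid S_i)=S(C_i\mid W_i)+S(C_i\mid S_i)$.
\item For a safe protocol, each of these conditional entropies splits over the relevant party's classical input. For an Alice-to-Bob message, $S(C_i\mid YB_i)=-\sum_y D\bigl(\tau^y_{C_iB_i}\Vert I_{C_i}\otimes\tau^y_{B_i}\bigr)$, where $\tau^y=\sum_x\mu(x,y)\rho^{xy}_{C_iB_i}$ and $D$ is extended homogeneously to unnormalized operators. The states $\rho^{xy}$ do not depend on $\mu$, and $S(C_i\mid W_i)$ is handled symmetrically.
\item Joint convexity of $D$ then makes $\mu\mapsto\QIC(\Pi',\mu)$ concave, and it is continuous.
\item Mixing two protocols with the same sequence of senders through a shared classical coin (the embedding preserves the sender pattern of $\Pi$) keeps at most $r$ messages and worst-case error at most $1/3$, and makes $\QIC$ affine.
\item The set $\{\mu:\mu(1,1)=0\}$ is compact and convex.
\end{itemize}
Ky Fan's minimax theorem therefore gives $\max_\mu\inf_{\Pi'}\QIC(\Pi',\mu)=\inf_{\Pi'}\max_\mu\QIC(\Pi',\mu)$, which closes the gap.

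Two smaller points. First, for the chain rule to split $I(R:C_i\mid S_i)$ exactly, the purifications of coordinates $k<j$ should all sit with Bob and those of $k>j$ with Alice. Alice-to-Bob terms then match $I(R_j:C_i\mid S_iR_{<j})$, and Bob-to-Alice terms match the reverse-order chain rule. Independent coins $D_k$ do not give an exact decomposition. Second, the paper uses this theorem only through $\QIC^r_0(\AND,1/3)\ge\QIC_r(\AND,\mu_0,1/3)$ in Theorem~\ref{thm:disj-tight}. For that use, your embedding with the prior fixed to $\mu_0$ already suffices, with no minimax needed.
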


Together with the information lower bound for $\AND$,
this reduction gives the following communication lower bound.

\begin{theorem}
\label{thm:disj-tight}
For all integers $n,r\geq 1$,
\[
  \mathrm{QCC}_r(\mathrm{DISJ}_n,1/3)
  =
  \Omega\left(\frac{n}{r}\right).
\]
The constants are independent of $n$ and $r$.
\end{theorem}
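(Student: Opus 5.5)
The plan is to combine the reduction of Theorem~\ref{thm:disj-to-and} with the AND lower bound of Theorem~\ref{thm:and-qic-lower-bound}. Theorem~\ref{thm:disj-to-and} gives
\[
\QCC_r(\DISJ_n,1/3)\ge n\,\QIC_0^r(\AND,1/3),
\]
so it suffices to prove $\QIC_0^r(\AND,1/3)=\Omega(1/r)$ with an absolute constant.

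Fix any protocol $\Pi\in\mathcal P_r(\AND,1/3)$. By definition, $\QIC_0^r$ takes the infimum over such protocols of $\max_{\mu:\mu(1,1)=0}\QIC(\Pi,\mu)$. The distribution $\mu_0$, uniform on $\{00,01,10\}$, satisfies $\mu_0(1,1)=0$, so
\[
\max_{\mu:\mu(1,1)=0}\QIC(\Pi,\mu)\ge\QIC(\Pi,\mu_0).
\]
Since $\Pi$ uses at most $r$ messages and has worst-case error at most $1/3$ on every input, including $11$, Theorem~\ref{thm:and-qic-lower-bound} applies directly. It gives $\QIC(\Pi,\mu_0)\ge c/r$ for the universal constant $c>0$ found at the end of its proof. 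That proof already handles arbitrary prior entanglement and, through Proposition~\ref{prop:safe-copies}, non-safe protocols, so no further restriction on $\Pi$ is needed.

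Taking the infimum over $\Pi$ gives $\QIC_0^r(\AND,1/3)\ge c/r$. Substituting into the reduction yields $\QCC_r(\DISJ_n,1/3)\ge cn/r$, with $c$ independent of $n$ and $r$.

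The main point to check is that the conventions agree between the two results. The message count in Lemma~4.20 of \cite{BGKMT15} must match Definition~\ref{def:protocol}, where each transmission is one message. The worst-case error model and the QIC normalization (the factor $\tfrac12$ in Definition~\ref{def:qic}) must also match. If the cited lemma counts rounds differently, for example as pairs of messages, then $r$ changes by at most a factor of $2$, which only rescales the constant.

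No asymptotic assumption such as $n\ge r^2$ is needed for the lower bound. If $r>n$, the bound $cn/r<c$ is trivially weaker than any nonnegative communication cost and remains valid.
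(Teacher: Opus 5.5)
Your proposal is correct and matches the paper's proof: it also chains Theorem~\ref{thm:disj-to-and} with Theorem~\ref{thm:and-qic-lower-bound}, using $\mu_0(1,1)=0$ to bound $\QIC_0^r(\AND,1/3)\ge\QIC_r(\AND,\mu_0,1/3)\ge c/r$ and hence $\QCC_r(\DISJ_n,1/3)\ge cn/r$. Your extra remarks on matching message-count and QIC-normalization conventions only rescale the constant, and the paper does not discuss them.
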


\begin{proof}
Let $\mu_0$ be uniform on $\{00,01,10\}$.
Since $\mu_0(1,1)=0$, Theorem~\ref{thm:disj-to-and} and
Theorem~\ref{thm:and-qic-lower-bound} imply that, for a universal constant $c>0$,
\[
  \begin{aligned}
  \mathrm{QCC}_r(\mathrm{DISJ}_n,1/3)
  &\geq n\,\mathrm{QIC}^{r}_{0}(\mathrm{AND},1/3)\\
  &\geq n\,\mathrm{QIC}_r(\mathrm{AND},\mu_0,1/3)
   \geq c\,\frac{n}{r}.
  \end{aligned}
\]

Combining these two bounds proves the required lower bound.

\end{proof}

\cite{Raz02} gives
$\mathrm{QCC}(\mathrm{DISJ}_n,1/3)=\Omega(\sqrt n)$
without a message bound. Our theorem further shows that
achieving $O(\sqrt n)$ communication requires
$\Omega(\sqrt n)$ messages.

\subsection{Multiple Instances of Set Disjointness}
For a Boolean function $f$, let $f^{\times k}$ denote the task
of evaluating $f$ on $k$ input pairs. Alice receives
$\mathbf{x}=(x^{(1)},\ldots,x^{(k)})$, Bob receives
$\mathbf{y}=(y^{(1)},\ldots,y^{(k)})$, and Bob outputs
$Z=(Z_1,\ldots,Z_k)$. The error requirement is
\[
  \Pr\!\left[Z_j\neq f(x^{(j)},y^{(j)})\right]\leq\varepsilon
  \qquad
  \text{for every }(\mathbf{x},\mathbf{y})\text{ and }j\in[k].
\]
The probability is over the protocol's internal randomness
and measurements. The output errors may be correlated.
Write $\mathrm{QCC}_r(f^{\times k},\varepsilon)$
for the minimum worst-case communication cost under this
requirement, using at most $r$ individual messages in total
and allowing prior entanglement.

The following consequence of \cite[Lemma~3.10]{BGKMT15}
preserves the error guarantee for each coordinate.

\begin{lemma}
\label{lem:coordinate-qic-direct-sum}
Let $\nu$ be an input distribution for $f$, and let $\Pi$
use at most $r$ messages to compute $f^{\times k}$ with
error at most $\varepsilon$ in each coordinate on every input.
Then
\[
  \mathrm{QIC}(\Pi,\nu^{\otimes k})
  \geq k\,\mathrm{QIC}_r(f,\nu,\varepsilon).
\]
Here $\mathrm{QIC}_r(f,\nu,\varepsilon)$ requires correctness
on every input, including inputs outside the support of $\nu$.
\end{lemma}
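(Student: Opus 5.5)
The plan is the standard embedding (direct-sum) argument of \cite{BGKMT15}, adapted so that correctness holds on \emph{every} input of the embedded single-instance protocol. Fix $\Pi$ for $f^{\times k}$ with at most $r$ messages, and assume by Proposition~\ref{prop:safe-copies} that it is safe and pure. For each $j\in[k]$ I construct a single-instance protocol $\Pi_j$ for $f$ with at most $r$ messages and
\[
 \sum_{j=1}^k\QIC(\Pi_j,\nu)\le\QIC(\Pi,\nu^{\otimes k}).
\]
Then some $j$ has $\QIC(\Pi_j,\nu)\le\QIC(\Pi,\nu^{\otimes k})/k$. Provided $\Pi_j\in\mathcal P_r(f,\varepsilon)$, this gives the claimed inequality $\QIC(\Pi,\nu^{\otimes k})\ge k\,\QIC_r(f,\nu,\varepsilon)$.

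\textbf{Construction of $\Pi_j$.} On input $(x,y)$, Alice and Bob place $x$ and $y$ in coordinate $j$. They fill the other coordinates with independent samples from $\nu$, and they generate these samples \emph{quantumly from shared entanglement}. Concretely, the prior resource of $\Pi_j$ contains, for each $\ell\ne j$, the state
\[
 \sum_{x',y'}\sqrt{\nu(x',y')}\,\ket{x'}_{X_\ell}\ket{x'}_{R_{X_\ell}}\ket{y'}_{Y_\ell}\ket{y'}_{R_{Y_\ell}}.
\]
Alice holds $X_\ell R_{X_\ell}$ and Bob holds $Y_\ell R_{Y_\ell}$; this is exactly the split used in the BGKMT15 embedding (a general Alice-holds-some-purification arrangement). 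The parties then run $\Pi$ and Bob outputs $Z_j$. The message count is unchanged. Because the per-coordinate error guarantee of $\Pi$ holds for every joint input, and the filled-in coordinates are merely a mixture of joint inputs, $Z_j$ is correct with probability at least $1-\varepsilon$ for \emph{every} $(x,y)$, including inputs outside $\supp\nu$. Hence $\Pi_j\in\mathcal P_r(f,\varepsilon)$.

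\textbf{Information accounting.} This is the step I expect to be the main obstacle. Under input $\nu$, the global pure state of $\Pi_j$ coincides with the state of $\Pi$ under $\nu^{\otimes k}$. The only change is which registers are labelled reference and which are labelled receiver side information: for the $i$-th message, the receiver's register $S_i$ of $\Pi_j$ includes the other coordinates' registers that the receiver holds. So each term of $\QIC(\Pi_j,\nu)$ has the form $I(R^{(j)}:C_i\mid S_i R^{(<j)}_{\text{side}}\ldots)$, where the conditioning contains the part of $R^{(\ne j)}$ given to the receiver. The plan is to choose the split as in \cite[Lemma~3.10]{BGKMT15}: the receiver holds the references of coordinates $<j$ on one side and $>j$ on the other side, or equivalently all of $R^{(\ne j)}$ on the receiving side for each message. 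Then, by the chain rule of Lemma~\ref{thm:entropy-properties} (together with monotonicity to handle the leftover conditioning, which must be checked carefully), the sum over $j$ of each term telescopes to at most $I(R_XR_Y:C_i\mid S_i)_{\Psi_i^{\nu^{\otimes k}}}$. Summing over $i$ gives the displayed inequality.

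Since \cite[Lemma~3.10]{BGKMT15} is stated for protocols with a worst-case error guarantee on the full output, the only new point I will verify is that their construction never uses a stronger correctness property than the per-coordinate guarantee. That is clear: correctness of $\Pi_j$ is read off from coordinate $j$ alone.
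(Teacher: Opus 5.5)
Your overall route is the paper's. The paper also just invokes the product-input decomposition of \cite[Lemma~3.10]{BGKMT15} to get $\sum_j\QIC(\Pi_j,\nu)\le\QIC(\Pi,\nu^{\otimes k})$. It then argues, as you do, that the $j$th output of $\Pi_j$ on any $(x,y)$ is the $j$th output of $\Pi$ averaged over the other coordinates, so $\Pi_j\in\mathcal P_r(f,\varepsilon)$; that part of your proposal is fine.

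The gap is in the embedding you actually write down and the accounting you sketch for it. If Alice holds $X_\ell R_{X_\ell}$ and Bob holds $Y_\ell R_{Y_\ell}$ for every $\ell\ne j$, then an Alice-to-Bob message contributes $I(R^{(j)}:C_i\mid S_iR_Y^{(\ne j)})$ to $\QIC(\Pi_j,\nu)$. This is not a chain-rule piece of $I(R:C_i\mid S_i)$, and the per-message bound you claim fails for this split. For instance, take $k=2$ and $\nu$ uniform on $\{0,1\}^2$. Let Alice send $x_1\oplus x_2$, Bob reply trivially, and Alice then send $x_1$. For Alice's second message your two embedded terms are $H(X_1\mid X_1\oplus X_2)=1$ and $H(X_2\mid X_1\oplus X_2)=1$, summing to $2$. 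The corresponding term of $\Pi$ is only $1$.

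``Monotonicity'' cannot repair this. Conditional mutual information is not monotone in the conditioning system, and earlier messages destroy the conditional independence between coordinates. Nor does this split coincide with the one behind the cited lemma, so your argument does not inherit that lemma's conclusion. The alternative you mention, giving all of $R^{(\ne j)}$ to whichever party is receiving, is not available either. Each reference register is placed with one party in the initial resource, while the receiver alternates.

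The split that makes the decomposition work is Touchette's additivity argument, which is what the cited lemma provides. Take the resource $\sum_{x',y'}\sqrt{\nu(x',y')}\ket{x'}_{X_\ell}\ket{y'}_{Y_\ell}\ket{x'y'}_{R^{(\ell)}}$, and put the whole purification $R^{(\ell)}=R_{X_\ell}R_{Y_\ell}$ with Bob when $\ell<j$ and with Alice when $\ell>j$. Then an Alice-to-Bob message contributes $I(R^{(j)}:C_i\mid S_iR^{(<j)})$, and a Bob-to-Alice message contributes $I(R^{(j)}:C_i\mid S_iR^{(>j)})$. Applying the chain rule in the order $1,\dots,k$ for the first kind and $k,\dots,1$ for the second gives $\sum_j\QIC(\Pi_j,\nu)=\QIC(\Pi,\nu^{\otimes k})$ exactly, with no leftover conditioning. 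In the example above, Alice's second message then contributes $1+0$. Because the reference registers are never acted on, your correctness argument carries over verbatim to this split, and the lemma follows.
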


\begin{proof}
Iterating the product-input decomposition of
\cite[Lemma~3.10]{BGKMT15} gives protocols
$\Pi_1,\ldots,\Pi_k$, each using at most $r$ messages, such that
\[
  \sum_{j=1}^k\mathrm{QIC}(\Pi_j,\nu)
  \leq \mathrm{QIC}(\Pi,\nu^{\otimes k}).
\]
On any input to $\Pi_j$, its output has the same distribution
as the $j$th output of $\Pi$ when the other input pairs are
drawn independently from $\nu$. The assumed error guarantee
therefore makes each $\Pi_j$ a worst-case
$\varepsilon$-error protocol for $f$. Each summand is at least
$\mathrm{QIC}_r(f,\nu,\varepsilon)$, proving the claim.
\end{proof}

\begin{theorem}
\label{thm:multiple-disjointness}
There exists a universal constant $c>0$ such that, for all
integers $n,k,r\geq 1$,
\[
  \mathrm{QCC}_r
  (\mathrm{DISJ}_n^{\times k},1/3)
  =\Omega\left(\frac{kn}{r}\right),
\]
with implicit constants independent of $n$, $k$, and $r$.
\end{theorem}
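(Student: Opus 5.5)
The plan is to combine the coordinate-wise direct-sum bound of Lemma~\ref{lem:coordinate-qic-direct-sum} with the AND-to-DISJ reduction of Theorem~\ref{thm:disj-to-and}. Rather than using that theorem as a black box, I will reuse the information-theoretic statement behind it. Fix a protocol $\Pi$ for $\DISJ_n^{\times k}$ with at most $r$ messages and per-coordinate error at most $1/3$ on every input. By Lemma~\ref{lem:qic-entropy-formula}, $\QCC(\Pi)\ge\QIC(\Pi,\nu)$ for any distribution $\nu$ on the $k$ input pairs. It therefore suffices to exhibit a distribution under which the information cost is $\Omega(kn/r)$.

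\textbf{Step 1 (reduction to one DISJ instance).} Let $\nu$ be a distribution for $\DISJ_n$ supported on disjoint pairs. Applying Lemma~\ref{lem:coordinate-qic-direct-sum} with $f=\DISJ_n$ and product distribution $\nu^{\otimes k}$ gives
\[
\QIC(\Pi,\nu^{\otimes k})\ge k\,\QIC_r(\DISJ_n,\nu,1/3).
\]
Crucially, this lemma retains worst-case correctness of each embedded protocol $\Pi_j$, including on intersecting inputs outside $\supp\nu$.

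\textbf{Step 2 (reduction from DISJ to AND).} I need $\QIC_r(\DISJ_n,\nu,1/3)\ge n\,\QIC_r(\AND,\mu_0,1/3)$ for a suitable $\nu$. Take $\nu=\mu_0^{\otimes n}$, so every coordinate lies in $\{00,01,10\}$ and $\nu$ is supported on disjoint pairs. Then I repeat the embedding argument of \cite[Lemma~4.20]{BGKMT15} (equivalently, Lemma~3.10 applied coordinatewise):
\begin{itemize}
\item Given a worst-case protocol for $\DISJ_n$, embed the AND input into coordinate $j$.
\item Sample the remaining coordinates from $\mu_0$ via shared or private randomness, keeping safe copies so that the purification conventions match.
\item Output the negation of the DISJ answer.
\end{itemize}
Since the other coordinates never intersect, $\DISJ_n=\neg\AND(x_j,y_j)$ on every AND input, including $11$. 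Each embedded protocol is therefore a worst-case $1/3$-error AND protocol with at most $r$ messages, and the chain rule makes the information costs add to at most $\QIC(\cdot,\mu_0^{\otimes n})$. Composing the two direct sums, one could even apply Lemma~3.10 directly to $\nu=\mu_0^{\otimes nk}$, treating $\Pi$ as computing $nk$ embedded ANDs through the same argument.

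\textbf{Step 3 (conclusion).} Theorem~\ref{thm:and-qic-lower-bound} gives $\QIC_r(\AND,\mu_0,1/3)\ge c/r$. Combining the steps,
\[
\QCC(\Pi)\ge\QIC(\Pi,\mu_0^{\otimes nk})\ge kn\cdot c/r.
\]

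The main obstacle is Step 2 and its compatibility with Lemma~\ref{lem:coordinate-qic-direct-sum}. The embedding must preserve the message count, and the information accounting must match Definition~\ref{def:qic} when the nonembedded coordinates are sampled partly by Alice and partly by Bob. This requires public randomness for the shared coordinate choice, with private sampling of the $x$-parts conditioned on public $y$-parts, as in \cite{BGKMT15}. I would handle this by citing the proof of \cite[Lemma~4.20]{BGKMT15}, which establishes precisely $\QIC_r(\DISJ_n,\mu_0^{\otimes n},\eps)\ge n\QIC_r(\AND,\mu_0,\eps)$ with worst-case correctness retained, and then simply chaining it with Lemma~\ref{lem:coordinate-qic-direct-sum}.
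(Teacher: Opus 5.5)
Your proposal is correct and is essentially the paper's proof. With $\nu_n=\mu_0^{\otimes n}$, the paper chains $\QCC(\Pi)\ge\QIC(\Pi,\nu_n^{\otimes k})\ge k\,\QIC_r(\DISJ_n,\nu_n,1/3)\ge kn\,\QIC_r(\AND,\mu_0,1/3)\ge c\,kn/r$, using Lemma~\ref{lem:coordinate-qic-direct-sum} and Theorem~\ref{thm:and-qic-lower-bound}; the only difference is that for your Step~2 the paper cites the information-level direct sum \cite[Lemma~4.18]{BGKMT15} directly instead of extracting it from the proof of \cite[Lemma~4.20]{BGKMT15} (which, as you rightly note, is stated for $\QCC$ and so cannot be used as a black box here).
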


\begin{proof}
Let $\mu_0$ be uniform on $\{00,01,10\}$ and set
$\nu_n=\mu_0^{\otimes n}$. For any admissible protocol $\Pi$,
Lemma~\ref{lem:coordinate-qic-direct-sum},
\cite[Lemma~4.18]{BGKMT15}, and Theorem~\ref{thm:and-qic-lower-bound} give
\[
  \begin{aligned}
  \mathrm{QCC}(\Pi)
  &\geq \mathrm{QIC}(\Pi,\nu_n^{\otimes k})\\
  &\geq k\,\mathrm{QIC}_r(\mathrm{DISJ}_n,\nu_n,1/3)\\
  &\geq kn\,\mathrm{QIC}_r(\mathrm{AND},\mu_0,1/3)
   \geq c\,\frac{kn}{r}.
  \end{aligned}
\]
Taking the infimum over $\Pi$ proves the lower bound.
\end{proof}

\subsection{Almost optimal asymmetric restriction for set disjointness}

Theorem~\ref{thm:disj-tight} also constrains the communication
in each direction. For a protocol $\Pi$, let $q_A(\Pi)$ and
$q_B(\Pi)$ denote the total numbers of qubits sent from Alice
to Bob and from Bob to Alice, respectively. We can introduce the following theorem.

\begin{theorem}
\label{thm:disj-asymmetric}
There exists a universal constant $c>0$ such that, for every
integer $n\geq 1$, every quantum protocol $\Pi$ computing
$\mathrm{DISJ}_n$ with worst-case error at most $1/3$ satisfies
\[
  \bigl(q_A(\Pi)+1\bigr)\bigl(q_B(\Pi)+1\bigr)\geq \Omega(n).
\]
\end{theorem}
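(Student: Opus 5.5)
Given a protocol $\Pi$ for $\DISJ_n$ with $q_A$ qubits from Alice to Bob and $q_B$ qubits from Bob to Alice, the plan is to convert it into a protocol with few messages and apply Theorem~\ref{thm:disj-tight}. The round bound there gives $\QCC \ge cn/r$. So if we can simulate $\Pi$ with $r = O(\min\{q_A,q_B\}+1)$ messages and communication $O(q_A+q_B+1)$, then $q_A+q_B+1 \ge c' n/(\min\{q_A,q_B\}+1)$. Since $q_A+q_B+1\le 2(\max\{q_A,q_B\}+1)$, this would give $(q_A+1)(q_B+1)=\Omega(n)$.

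\textbf{Step 1: normalize the protocol.} We may drop empty messages and merge consecutive messages sent by the same party into a single message. This changes neither the output distribution nor the qubit counts $q_A,q_B$. After merging, messages strictly alternate direction, and each message contains at least one qubit.

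\textbf{Step 2: count messages.} Let $m_A$ and $m_B$ be the numbers of Alice-to-Bob and Bob-to-Alice messages. Alternation gives $\abs{m_A-m_B}\le 1$, and nonemptiness gives $m_A\le q_A$ and $m_B\le q_B$. Hence the total number of messages is
\[
r=m_A+m_B\le 2\min\{m_A,m_B\}+1\le 2\min\{q_A,q_B\}+1\le 2(\min\{q_A,q_B\}+1).
\]
We must be careful in the degenerate case where one count is zero. For example, $q_B=0$ forces $m_B=0$, so $r\le 1$. The inequality above already covers this case.

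\textbf{Step 3: apply the round bound.} The normalized protocol is an $r$-message protocol with worst-case error at most $1/3$, so Theorem~\ref{thm:disj-tight} gives
\[
q_A+q_B\ \ge\ \QCC_r(\DISJ_n,1/3)\ \ge\ \frac{cn}{r}\ \ge\ \frac{cn}{2(\min\{q_A,q_B\}+1)}.
\]
Combining this with $q_A+q_B\le 2(\max\{q_A,q_B\}+1)$ and $(\min+1)(\max+1)=(q_A+1)(q_B+1)$ gives $(q_A+1)(q_B+1)\ge cn/4$.

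\textbf{Main obstacle.} The only delicate point is Step 1: checking that merging consecutive same-direction messages is a valid protocol transformation under Definition~\ref{def:protocol}. The intervening local isometries belong to the sender, who can apply them before sending the combined register, so the merge is valid, and Theorem~\ref{thm:disj-tight} counts messages under the same convention.

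Beyond that, the argument is bookkeeping. It requires that the constant in Theorem~\ref{thm:disj-tight} is uniform in $r$ and $n$, which that theorem explicitly states.
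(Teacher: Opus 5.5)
Your proposal is correct and is essentially the paper's proof. You drop empty messages and merge consecutive same-direction messages, so at most $2\min\{q_A,q_B\}+1$ messages remain. You then substitute this into the $\Omega(n/r)$ bound of Theorem~\ref{thm:disj-tight} and use $q_A+q_B\le 2\max\{q_A,q_B\}$. Your remarks on why merging is valid and on the degenerate one-way cases fill in details the paper leaves implicit. In the case $q_A=q_B=0$, apply the theorem with $r=2(\min\{q_A,q_B\}+1)\ge 2$ rather than $r=0$; this is allowed because ``at most $r$ messages'' is monotone in $r$.
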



Thus, throughout the range $1\leq a\leq\sqrt{n}$, an
$O(a)$-qubit budget from Alice to Bob forces
$\Omega(n/a)$ qubits in the reverse direction.
In particular, $O(1)$ qubits from Alice require $\Omega(n)$
qubits from Bob, while $o(\sqrt{n})$ qubits from Alice require
$\omega(\sqrt{n})$ qubits from Bob.

The additive constants account for one-way protocols. For example, Alice can send her entire input using $n$ qubits, after which Bob computes the answer without sending any message. If both directional costs are positive, the theorem also gives $q_A(\Pi)q_B(\Pi)\geq \Omega(n)$.

\begin{proof}[Proof of Theorem~\ref{thm:disj-asymmetric}]
By Theorem~\ref{thm:disj-tight}, for a $r$ round protocol $\Pi$ solving DISJ with error less than $1/3$, we have
\[
\QCC_r(\text{DISJ}_n,1/3)=\Omega(\frac{n}{r})
\]

Since $\Pi$ exchanges $q_A(\Pi)+q_B(\Pi)$ qubits,
after removing empty messages and merging consecutive messages
from the same party, it uses at most
$2\min\{q_A(\Pi),q_B(\Pi)\}+1$ messages.
Substituting this bound on $r$, we obtain
\[
\begin{aligned}
\Omega(n)&=\left(q_A(\Pi)+q_B(\Pi)\right)\left(2\min{(q_A(\Pi),q_B(\Pi))}+1\right)
\\&\le \left(2\max{(q_A(\Pi),q_B(\Pi))}+1\right)\left(2\min{(q_A(\Pi),q_B(\Pi))}+1\right)
\\&=
\left(2q_A(\Pi)+1\right)\left(2q_B(\Pi)+1\right)
\end{aligned}
\]
So,
\[
\bigl(q_A(\Pi)+1\bigr)\bigl(q_B(\Pi)+1\bigr)\geq \Omega(n).
\]
\end{proof}

\section{An asymmetric protocol for Set Disjointness}
\label{sec:asymmetric-upper-bound}

We give an asymmetric protocol whose communication matches
our lower bound up to logarithmic factors.

\begin{theorem}\label{thm:asymmetric-upper-bound}
For every integer $1\le m\le\sqrt n$, there is a quantum
protocol computing $\mathrm{DISJ}_n$ with worst-case error
at most $1/3$ and directional communication
\[
    q_A=O(m\log(n+2)),\qquad
    q_B=O\left(
        \frac{n}{m}\log(m+1)+m\log^2(m+1)
    \right).
\]
Bob produces the final output.
\end{theorem}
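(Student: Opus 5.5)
The plan is a block-decomposition protocol in which Alice announces a few elements of her set and Bob runs Grover searches locally, with only a short answer flowing back. Partition $[n]$ into $m$ contiguous blocks $P_1,\dots,P_m$, each of size $\lceil n/m\rceil$. Write $x^{(\ell)}$ and $y^{(\ell)}$ for the restrictions of the inputs to block $\ell$. Then $\mathrm{DISJ}_n(x,y)=0$ if and only if some block contains an intersecting index. The protocol runs a distributed Grover search, in the style of Buhrman--Cleve--Wigderson and Aaronson--Ambainis, but with the direction of communication arranged so that Alice only ever sends short index registers.

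\textbf{Outer search over blocks, driven by Alice.} Alice runs amplitude amplification over block labels $\ell\in[m]$, using $O(\sqrt m)$ iterations, or the version with unknown marked count via random iteration numbers, which costs $O(\sqrt m)$ expected and $O(\sqrt m\log m)$ after repetition for constant error. Each oracle call asks ``does block $\ell$ intersect?'' and is answered as follows. Alice sends a register holding $\ell$ in superposition, costing $O(\log m)$ qubits. Bob then evaluates the predicate coherently.

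\textbf{Evaluating the block predicate.} Doing this inside one call would need Alice's bits, so instead we swap roles. Alice cannot send $n/m$ bits per call, so Bob does the heavy sending. For the superposed $\ell$, Bob sends back a quantum register encoding the restriction $y^{(\ell)}$ in a compressed way, costing $O(\frac nm)$ qubits per call in the worst case. Alice evaluates the block predicate locally and uncomputes, which requires a returned register of the same size.

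This naive approach gives $q_B=O(\sqrt m\cdot n/m)$ and $q_A=O(\sqrt m\log m)$, which does not yet match. The target bound $q_B=\tilde O(n/m)$ together with $q_A=\tilde O(m)$ suggests a different balance: Alice should send $O(m)$ indices in total, and Bob should send $\tilde O(n/m)$ in total.

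\textbf{Corrected plan.} Alice's set is restricted to the indices $i$ with $x_i=1$.

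\emph{Case $|x|\le m$.} Alice sends her whole set classically, using $O(m\log n)$ bits. Bob then answers locally and needs no further communication.

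\emph{Case $|x|>m$.} First, using shared randomness, the parties hash $[n]$ into $m$ buckets. Alice sends Bob a description of the $O(m)$ buckets she uses, or a sample of her elements, costing $O(m\log n)$ bits. Next, Bob performs a Grover search over his own elements of the buckets. Each oracle query requires Alice's bit $x_i$. To keep Alice's communication low, queries are batched: Bob sends the superposed index register, costing $O(\log n)$ qubits, and Alice returns one qubit.

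This batched scheme gives roughly $q_B\approx q_A\approx\sqrt n\log n$, so it is only balanced. To reach the tradeoff, use the $\sqrt n$-block trick of Aaronson--Ambainis, with blocks of size $n/m^2$ arranged so that Bob's messages dominate: $q_B=O(\frac nm\log m)$ for the block contents plus $O(m\log^2 m)$ for the amplification overhead, while $q_A=O(m\log n)$.

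\textbf{Main obstacle.} The main obstacle is the exact accounting that makes Alice's side only $O(m\log n)$ while Bob's side stays $O(\frac nm\log m)$. I would first try the recursive Aaronson--Ambainis scheme, in which a single search level sends $\log n$-qubit indices from Alice and Bob returns whole block data. The extra $\log^2(m+1)$ term would come from error reduction across the $\log m$ recursion levels.
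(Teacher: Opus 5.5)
There is a genuine gap: the proposal never constructs a protocol achieving the stated bounds. The decisive step is asserted rather than derived.

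Both concrete schemes you write down are symmetric. In the block-Grover scheme, once Bob sends $\ket{\ell}\ket{y^{(\ell)}}$ to Alice, the block register is entangled with the search register. It must travel back to Bob to be uncomputed, so Alice also sends $\Theta(n/m)$ qubits per oracle call. Your figure $q_A=O(\sqrt m\log m)$ contradicts your own remark that uncomputation needs a returned register of the same size. The hashing-plus-batched-Grover scheme is balanced for the same reason: every coherent oracle call ships the query register out and back. Your final step, that the Aaronson--Ambainis block trick with blocks of size $n/m^2$ can be ``arranged so that Bob's messages dominate'', is precisely the content of the theorem. You give no mechanism by which changing block sizes would avoid the query register crossing the channel in both directions. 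Likewise, no scheme backs the attribution of the $m\log^2(m+1)$ term to error reduction across recursion levels. The case $|x|\le m$ is fine, but it does not touch the general case.

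The missing idea is a primitive in which Alice's operation is dissipative rather than reversible. Then nothing must be returned for uncomputation, and her outgoing state lies in a very small subspace. The paper's subroutine does exactly this, for length-$N$ inputs with at most one intersection:
\begin{itemize}
\item Bob holds an $N$-qubit register $C$ initialized to $\ket{0^N}$. In each of $k$ cycles he rotates every qubit with $y_i=1$ by $\theta=\pi/(2k)$ and sends $C$ to Alice.
\item Alice swaps every qubit with $x_i=0$ with a fresh $\ket{0}$ ancilla that she keeps, and returns $C$.
\item Only an intersecting coordinate accumulates the full angle $k\theta=\pi/2$, so Bob detects it by measuring at the end.
\end{itemize}
After Alice's reset, $C$ lies in the $(N+1)$-dimensional space $\mathcal H_{\le1}$ spanned by $\ket{0^N}$ and the weight-one strings. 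So she needs only $\lceil\log_2(N+1)\rceil$ qubits per cycle.

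Before Bob sends, each qubit is excited with probability at most $\sin^2\theta=O(k^{-2})$. After a gentle projection onto Hamming weight at most $L=O(N/k^2+\log(k+1))$, his message fits in $O((N/k^2)\log(k+1)+\log^2(k+1))$ qubits. The additive $\log(k+1)$ in $L$ keeps the discarded mass per cycle at most $\eta^2/k^2$, and it is what produces the $m\log^2(m+1)$ term.

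The promise is removed by running the subroutine on random subsets of size $N_j\approx n4^{-j}$ with $k_j\approx m2^{-j}$ cycles, for $0\le j\le\lceil\log_2 m\rceil+1$. A classical test on $\lceil n/m^2\rceil$ random coordinates handles intersections larger than $m^2$. Summing over levels gives $q_A=O(m\log(n+2))$ and the stated $q_B$. Because no test ever reports a nonexistent intersection, a constant number of repetitions brings the error below $1/3$.

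Without some such asymmetric primitive, your outline does not prove the theorem.
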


In particular, for $C\log(n+2)\le a\le c\sqrt n$, choosing
$m=\Theta(a/\log(n+2))$ gives a protocol with $q_A\le a$ and
\[
    q_B=O\left(
        \frac{n\log(n+2)}{a}
        \log\left(2+\frac{a}{\log(n+2)}\right)
    \right).
\]

\begin{proof}
We first consider inputs of length $N$ with at most one
intersection. Fix an integer $k\ge2$ and let
$\theta=\pi/(2k)$. Bob prepares an $N$-qubit register $C$
in state $|0^N\rangle$. In each of $k$ cycles, he applies
\[
    U_\theta=
    \begin{pmatrix}
        \cos\theta&-\sin\theta\\
        \sin\theta&\cos\theta
    \end{pmatrix}
\]
to every coordinate with $y_i=1$, and sends $C$ to Alice.
For every coordinate with $x_i=0$, Alice swaps the
corresponding qubit with a fresh private ancilla in state
$|0\rangle$. She then sends $C$ back to Bob.

At an intersection, the rotations accumulate to $k\theta=\pi/2$,
so the corresponding qubit ends in state $|1\rangle$.
Every other coordinate is either never rotated or reset
by Alice in each cycle. Thus Bob can detect an intersection
by measuring $C$ after the final cycle.

We now compress the messages. After Alice's operation,
$C$ belongs to
\[
    \mathcal H_{\le1}
    =\operatorname{span}\{|0^N\rangle,|e_1\rangle,\ldots,|e_N\rangle\},
\]
where $|e_i\rangle$ has a single $1$ in position $i$.
This space has dimension $N+1$, so Alice can encode her
message using $\lceil\log_2(N+1)\rceil$ qubits.

Before Bob sends, each nonintersection coordinate has
excitation probability at most $\sin^2\theta=O(k^{-2})$.
In the uncompressed protocol, the Hamming weight of $C$
is therefore dominated by
$1+\operatorname{Bin}(N,\sin^2\theta)$.
Bob projects onto the subspace of Hamming weight at most
\[
    L=\min\left\{
        N,\left\lceil C_0\left(\frac{N}{k^2}+\log(k+1)\right)\right\rceil
    \right\},
\]
and encodes the resulting state in this subspace.
For a sufficiently large absolute constant $C_0$, the
discarded probability in each cycle is at most
$\eta^2/k^2$, where $\eta>0$ is a sufficiently small
constant. The gentle measurement bound and a hybrid
argument show that the total disturbance over all $k$
cycles is $O(\eta)$.

The encoding uses
\[
    O\left(
        \frac{N}{k^2}\log(k+1)+\log^2(k+1)
    \right)
\]
qubits per Bob message, by counting the strings of
Hamming weight at most $L$. Hence the subroutine costs
\[
    O(k\log(N+2))
    \quad\text{from Alice, and}\quad
    O\left(
        \frac{N}{k}\log(k+1)+k\log^2(k+1)
    \right)
    \quad\text{from Bob}.
\]
To keep these bounds on inputs with multiple intersections,
Alice also projects onto $\mathcal H_{\le1}$ before encoding.
If either projection fails, the current subroutine ends
and reports disjointness. On disjoint inputs, Alice's reset
always leaves $C$ in state $|0^N\rangle$, so the subroutine
never reports a nonexistent intersection.

To remove the promise, set
\[
    J=\lceil\log_2m\rceil+1,\qquad
    N_j=\max\{1,\lfloor n4^{-j}\rfloor\},\qquad
    k_j=\max\{2,\lceil m2^{-j}\rceil\}.
\]
For each $0\le j\le J$, use shared randomness to choose
a uniform $N_j$-element subset of the coordinates, and
run the subroutine on this subset with $k_j$ cycles.
If the intersection has size $1\le t\le m^2$, one of
these samples contains exactly one intersection with
constant probability.

Also choose a uniform subset of $\lceil n/m^2\rceil$
coordinates. Bob sends his input bits on this subset,
and Alice checks for an intersection and returns one bit.
When $t>m^2$, this test finds an intersection with constant
probability. Its communication cost is $O(n/m^2)$ qubits
from Bob and one qubit from Alice.

The protocol reports an intersection if any test finds one.
Since no test reports a nonexistent intersection, a constant
number of independent repetitions gives error at most $1/3$.
Finally, $N_j=O(n4^{-j})$ and $k_j=\Theta(m2^{-j})$, so
summing the communication over the sampling levels gives
the stated bounds. The final classical test is absorbed
in these bounds.
\end{proof}
\appendix
\section{Auxiliary Estimates for Section~\ref{sec:terminal-potential-gain}}
\label{app:terminal}

This appendix proves the estimates used in
Lemma~\ref{lem:four-dimensional}. We use the states
$\hat\rho_{TQ}(s,\omega,z)$, their marginals
$\hat\rho_Q(s,\omega,z)$, and
$\sigma_{TQ}(s,\omega,z)=\alpha_T\otimes\hat\rho_Q(s,\omega,z)$
defined in Section~\ref{sec:four-dimensional}.
As before, $z=\sqrt p$ and $a=\sqrt{(1-z^2)/3}$.
First, we bound the change when $s$ is replaced by $1$.
We then compute $\mathcal F_u$ when both conditional states are pure,
and use this formula after replacing the second block by a pure state.
Finally, we expand the resulting expression in $z$.

\subsection{A continuity estimate}

The following estimate will be used when changing the overlaps
and replacing a conditional state by a pure state.

\begin{lemma}\label{lem:optimizer-bounds}
	Let $\rho,\sigma,\widetilde\rho,\widetilde\sigma$ be density
	operators on the same finite-dimensional space, satisfying
	$\rho\preceq6\sigma$ and
	$\widetilde\rho\preceq6\widetilde\sigma$.
	For every $u\in[1,2]$,
	\begin{equation}
		\label{eq:resolvent-trace-continuity}
		\begin{aligned}
			&\bigl|\mathcal F_u(\rho,\sigma)
			-\mathcal F_u(\widetilde\rho,\widetilde\sigma)\bigr|\\
			&\qquad\leq
			24\|\rho-\widetilde\rho\|_1
			+40\|\sigma-\widetilde\sigma\|_1.
		\end{aligned}
	\end{equation}
	The two pairs need not have the same support.
\end{lemma}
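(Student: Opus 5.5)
Both $\mathcal F_u(\rho,\sigma)$ and $\mathcal F_u(\widetilde\rho,\widetilde\sigma)$ are maxima of concave quadratic objectives (Lemma~\ref{lem:potential-variational}), so I will compare them by plugging one optimizer into the other objective. Let $M$ be the optimizer for $(\rho,\sigma)$, extended by zero off $\supp\sigma$. Then
\[
\mathcal F_u(\widetilde\rho,\widetilde\sigma)\ge\mathcal L^{(u)}_{\widetilde\rho,\widetilde\sigma}(M),
\qquad
\mathcal F_u(\rho,\sigma)=\mathcal L^{(u)}_{\rho,\sigma}(M).
\]
Subtracting gives
\[
\mathcal F_u(\rho,\sigma)-\mathcal F_u(\widetilde\rho,\widetilde\sigma)
\le 2\operatorname{Re}\Tr\bigl(M^\dagger[(\rho-\widetilde\rho)-(\sigma-\widetilde\sigma)]\bigr)
-\Tr\bigl((\rho-\widetilde\rho)MM^\dagger\bigr)
-u\Tr\bigl((\sigma-\widetilde\sigma)M^\dagger M\bigr).
\]
Each term is bounded by Hölder's inequality, $|\Tr(XY)|\le\|X\|_1\|Y\|_\infty$. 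This yields
\[
\bigl(2\|M\|_\infty+\|M\|_\infty^2\bigr)\|\rho-\widetilde\rho\|_1+\bigl(2\|M\|_\infty+u\|M\|_\infty^2\bigr)\|\sigma-\widetilde\sigma\|_1 .
\]
The symmetric argument, with the roles of the two pairs swapped and using the optimizer $\widetilde M$, gives the reverse inequality. It therefore suffices to bound $\|M\|_\infty$ and $\|\widetilde M\|_\infty$ uniformly.

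For the operator-norm bound I will reuse the first step of the proof of Lemma~\ref{lem:quadratic-gap-stability}. That argument only used the domination $\rho\preceq\kappa\sigma$; here the hypothesis supplies it with $\kappa=6$. The singular-vector argument gives
\[
\|I+uM\|_\infty\le\frac{u+1}{2\sqrt u}\sqrt\kappa,
\qquad\text{hence}\qquad
\|M\|_\infty\le\frac1u+\frac{u+1}{2u\sqrt u}\sqrt6 .
\]
For $u\in[1,2]$ this is at most $1+\sqrt6\approx3.45$. Numerically, $2\|M\|_\infty+\|M\|_\infty^2\approx18.8\le24$ and $2\|M\|_\infty+2\|M\|_\infty^2\approx30.7\le40$, which matches the stated constants.

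The main obstacle is making this rigorous when the supports differ. Lemma~\ref{lem:quadratic-gap-stability} assumed $\rho$ positive definite for the strict-positivity step $\langle w|\rho|w\rangle>0$, and $M$ lives only on $\supp\sigma$. I plan two fixes:
\begin{itemize}
\item Use that $\rho\preceq6\sigma$ forces $\supp\rho\subseteq\supp\sigma$, and run the singular-vector argument inside $\supp\sigma$. If $\langle w|\rho|w\rangle=0$, the identity $\rho B_M+uB_M\sigma=(u+1)\rho$ directly gives $u\,s\,\langle v|\sigma|v\rangle=0$, a contradiction unless $s=0$. Alternatively, perturb $\rho\mapsto(1-\epsilon)\rho+\epsilon\sigma$, which preserves domination, and pass to the limit.
\item Check that plugging $M$ (extended by zero) into $\mathcal L_{\widetilde\rho,\widetilde\sigma}$ is legitimate even though $\widetilde M$ ranges over operators on $\supp\widetilde\sigma$. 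Components of $M$ outside $\supp\widetilde\sigma$ only decrease the objective relative to their compression, since $\supp\widetilde\rho\subseteq\supp\widetilde\sigma$. So the compressed operator has no smaller objective and no larger norm, and the Hölder estimate above still holds.
\end{itemize}
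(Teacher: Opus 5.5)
Your proposal is correct and follows the paper's proof. Both arguments bound the zero-extended optimizers by $1+\sqrt6<4$ via the singular-vector argument with $\kappa=6$, test each optimizer in the other pair's objective, and apply trace-norm/operator-norm duality in both directions. The only difference is minor: you justify using an optimizer that is not supported on $\supp\widetilde\sigma$ by compressing it onto $\supp\widetilde\sigma$, whereas the paper notes that the completing-the-square identity already holds for every operator on the full space; both justifications are valid.
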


\begin{proof}
	The singular-vector calculation in the proof of
	Lemma~\ref{lem:quadratic-gap-stability}, with $\kappa=6$,
	gives an operator norm bound $1+\sqrt6<4$ for both optimizing
	matrices. The calculation uses only $\rho\preceq\kappa\sigma$,
	so it applies to the pairs here.
	
	For an operator $Z$, write the
	variational objective as
	\[
	\begin{aligned}
		\mathcal L_{\rho,\sigma}^{(u)}(Z)
		&=\operatorname{Tr}
		\rho(Z+Z^\dagger-ZZ^\dagger)\\
		&\quad+\operatorname{Tr}
		\sigma(-Z-Z^\dagger-uZ^\dagger Z).
	\end{aligned}
	\]
	The zero extension of the support-restricted optimizer
	also maximizes this full-space objective.
	Indeed, the same completing-the-square identity applies
	to every operator $Z$.
	
	If $\|Z\|_\infty\leq4$, then
	\[
	\|Z+Z^\dagger-ZZ^\dagger\|_\infty
	\leq2\|Z\|_\infty+\|Z\|_\infty^2
	\leq24,
	\]
	and
	\[
	\|-Z-Z^\dagger-uZ^\dagger Z\|_\infty
	\leq2\|Z\|_\infty+u\|Z\|_\infty^2
	\leq40.
	\]
	Trace-norm duality therefore gives
	\[
	\begin{aligned}
		&\bigl|
		\mathcal L_{\rho,\sigma}^{(u)}(Z)
		-\mathcal L_{\widetilde\rho,\widetilde\sigma}^{(u)}(Z)
		\bigr|\\
		&\qquad\leq
		24\|\rho-\widetilde\rho\|_1
		+40\|\sigma-\widetilde\sigma\|_1.
	\end{aligned}
	\]
	
	Let $M$ and $\widetilde M$ be the zero-extended optimizers
	for the two pairs. Both have operator norm less than $4$.
	Optimality gives
	\[
	\begin{aligned}
		&\mathcal F_u(\rho,\sigma)
		-\mathcal F_u(\widetilde\rho,\widetilde\sigma)\\
		&\qquad\leq
		\mathcal L_{\rho,\sigma}^{(u)}(M)
		-\mathcal L_{\widetilde\rho,\widetilde\sigma}^{(u)}(M).
	\end{aligned}
	\]
	Interchanging the two pairs gives the reverse one-sided
	bound, with $\widetilde M$ as the test operator.
	Combining them proves
	\eqref{eq:resolvent-trace-continuity}.
\end{proof}

\subsection{Changing the first overlap}

We first replace $s$ by $1$ so that the first conditional state is pure.
The following lemma bounds the change in $\mathcal F_u$.

\begin{lemma}\label{lem:restore-overlap}
	For $u\in[1,2]$, $z\in[0,1/2]$, $s\in[0,1]$, and
	$|\omega|\le1$,
	\[
	\begin{aligned}
		&\bigl|\mathcal F_u(\hat\rho_{TQ}(s,\omega,z),\sigma_{TQ}(s,\omega,z))\\
		&\qquad-\mathcal F_u(\hat\rho_{TQ}(1,\omega,z),\sigma_{TQ}(1,\omega,z))\bigr|
		\le\frac{128}{3}(1-s).
	\end{aligned}
	\]
\end{lemma}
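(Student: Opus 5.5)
We plan to apply Lemma~\ref{lem:optimizer-bounds} to the two pairs $(\rho,\sigma)=(\hat\rho_{TQ}(s,\omega,z),\sigma_{TQ}(s,\omega,z))$ and $(\widetilde\rho,\widetilde\sigma)=(\hat\rho_{TQ}(1,\omega,z),\sigma_{TQ}(1,\omega,z))$. This reduces the statement to two routine checks: verifying the domination hypotheses, and computing the trace-norm distances between the two states and between the two reference operators.

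\textbf{Step 1: domination hypotheses.} We need $\rho\preceq6\sigma$ and $\widetilde\rho\preceq6\widetilde\sigma$. The proof of Lemma~\ref{lem:quadratic-gap-stability} shows that every state satisfies $\rho_{TQ}\preceq d_T\,I_T\otimes\rho_Q\preceq (d_T/\lambda_{\min}(\alpha_T))\,\alpha_T\otimes\rho_Q$. Here $d_T=2$ and $\lambda_{\min}(\alpha_T)=1/3$, so the constant is exactly $6$. Before invoking this, we confirm that $\hat\rho_{TQ}(s,\omega,z)$ is a genuine density operator for $s\in[0,1]$ and $|\omega|\le1$.
\begin{itemize}
\item The first block $a^2\begin{pmatrix}1&s\\s&1\end{pmatrix}$ is PSD because $s\le1$.
\item The second block has determinant $a^2z^2(1-|\omega|^2)\ge0$, so it is also PSD.
\item The trace is $3a^2+z^2=1$.
\end{itemize}
Both pairs therefore qualify, and Lemma~\ref{lem:optimizer-bounds} yields the bound $24\|\rho-\widetilde\rho\|_1+40\|\sigma-\widetilde\sigma\|_1$.

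\textbf{Step 2: trace-norm distances.} The states differ only in the $T=0$ block, so
\[
\rho-\widetilde\rho=\proj0_T\otimes a^2\begin{pmatrix}0&s-1\\s-1&0\end{pmatrix}.
\]
This matrix has eigenvalues $\pm a^2(1-s)$, giving $\|\rho-\widetilde\rho\|_1=2a^2(1-s)$. The marginals $\hat\rho_Q$ differ by the same off-diagonal matrix, since tracing out $T$ removes the projector $\proj0_T$. Hence
\[
\sigma-\widetilde\sigma=\alpha_T\otimes a^2(s-1)\begin{pmatrix}0&1\\1&0\end{pmatrix},
\]
and $\|\sigma-\widetilde\sigma\|_1=\|\alpha_T\|_1\cdot2a^2(1-s)=2a^2(1-s)$.

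\textbf{Step 3: assemble the bound.} Combining the two steps gives a total bound of $(24+40)\cdot2a^2(1-s)=128a^2(1-s)$. Since $a^2=(1-z^2)/3\le1/3$, this is at most $\frac{128}{3}(1-s)$, which is the claimed estimate.

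\textbf{Main obstacle.} The only delicate point is the support issue: at $s=1$ or $\omega$ on the unit circle the states are rank-deficient. Lemma~\ref{lem:optimizer-bounds} explicitly allows different supports, so this should cause no difficulty, provided the domination $\rho\preceq6\sigma$ holds, which Step 1 establishes.
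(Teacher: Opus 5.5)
Your proposal is correct and follows the paper's proof essentially line for line. You invoke Lemma~\ref{lem:optimizer-bounds}, compute both trace-norm differences as $2a^2(1-s)$ (using $\|\alpha_T\|_1=1$ for the reference operators), and conclude with $(24+40)\cdot 2a^2(1-s)\le\frac{128}{3}(1-s)$ since $a^2\le 1/3$. Two of your checks are stated in the paper without detail: that $\hat\rho_{TQ}(s,\omega,z)$ is a density operator, and that $\rho\preceq 6\sigma$ follows from $\rho_{TQ}\preceq d_T\, I_T\otimes\rho_Q$ with $d_T/\lambda_{\min}(\alpha_T)=6$. Your proposal simply makes these explicit.
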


\begin{proof}
	Only the first block changes. The difference in this block has
	eigenvalues $\pm a^2(1-s)$, so
	\[
	\norm{\hat\rho_{TQ}(s,\omega,z)-\hat\rho_{TQ}(1,\omega,z)}_1
	=2a^2(1-s).
	\]
	Tracing out $T$ gives the same difference on $Q$, with the same trace
	norm. Since $\norm{\alpha_T}_1=1$, the difference between the two
	$\sigma_{TQ}$ states also has this norm.
	Both pairs satisfy the condition $\rho\preceq6\sigma$.
	By the continuity bound in Lemma~\ref{lem:optimizer-bounds}, we have
	\[
	\begin{aligned}
		&\bigl|\mathcal F_u(\hat\rho_{TQ}(s,\omega,z),\sigma_{TQ}(s,\omega,z))\\
		&\qquad-\mathcal F_u(\hat\rho_{TQ}(1,\omega,z),\sigma_{TQ}(1,\omega,z))\bigr|\\
		&\qquad\le(24+40)\,2a^2(1-s)
		\le\frac{128}{3}(1-s).
	\end{aligned}
	\]
\end{proof}

\subsection{Pure conditional states}

When each conditional state is pure, $\mathcal F_u$ can be written
using only the marginal on $Q$. For a qubit marginal, the expression
depends only on its determinant. We first give this calculation,
which will be used after replacing the second block.

\begin{lemma}\label{lem:pure-ensemble}
	Let $\widetilde\rho_{TQ}=\sum_xw_x\proj x_T\otimes P_x$, where
	$P_x=\proj{v_x}$ are pure states and $(w_x)_x$ is a probability
	distribution. Set
	$\widetilde\alpha_T=\sum_xw_x\proj x_T$ and
	$\widetilde\rho_Q=\sum_xw_xP_x$.
	For every $u>0$,
	\[
	\mathcal F_u(\widetilde\rho_{TQ},\widetilde\alpha_T\otimes\widetilde\rho_Q)
	=\frac{u+1}{u}
	\Tr\bigl[\widetilde\rho_Q(\mathbb I-\widetilde\rho_Q)
	(\mathbb I+u\widetilde\rho_Q)^{-1}\bigr].
	\]
	If $\widetilde\rho_Q$ is a qubit state and $d=\det\widetilde\rho_Q$, this equals
	\[
	\phi_u(d):=\frac{(u+1)(u+2)}{u}\,
	\frac{d}{1+u+u^2d}.
	\]
\end{lemma}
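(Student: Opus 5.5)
We will compute the optimizer explicitly. Write $\widetilde\sigma=\widetilde\alpha_T\otimes\widetilde\rho_Q=\sum_xw_x\proj x_T\otimes\widetilde\rho_Q$. Both $\widetilde\rho_{TQ}$ and $\widetilde\sigma$ are block diagonal in the $T$ basis. We may discard every $x$ with $w_x=0$, and we work on $\supp\widetilde\sigma=\bigoplus_x\ket x\otimes\supp\widetilde\rho_Q$. Each $P_x$ with $w_x>0$ is supported in $\supp\widetilde\rho_Q$, so on this space $\widetilde\rho_Q$ is invertible.

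\textbf{Block reduction.} The defining equation is $\widetilde\rho_{TQ}M+uM\widetilde\sigma=\widetilde\rho_{TQ}-\widetilde\sigma$. The superoperator $\mathcal K^{(u)}$ preserves each block $\ket x\!\bra{x'}\otimes(\cdot)$. The right-hand side has only diagonal blocks $x=x'$. By invertibility of $\mathcal K^{(u)}$, established in the proof of Lemma~\ref{lem:potential-variational}, the optimizer is therefore $M=\sum_x\proj x\otimes M_x$. Dividing by $w_x>0$, each block satisfies
\[
P_xM_x+uM_x\widetilde\rho_Q=P_x-\widetilde\rho_Q,
\qquad\text{and}\qquad
\mathcal F_u=\sum_xw_x\Tr[(P_x-\widetilde\rho_Q)M_x].
\]

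\textbf{Solving each block.} As in the proof of Lemma~\ref{lem:quadratic-gap-stability}, we pass to $B_x=I+uM_x$. The block equation becomes $P_xB_x+uB_x\widetilde\rho_Q=(u+1)P_x$. We try the ansatz $B_x=P_xG$. Using $P_x^2=P_x$, the equation reads $P_x\,G(I+u\widetilde\rho_Q)=(u+1)P_x$. It is therefore solved by
\[
G=(u+1)(I+u\widetilde\rho_Q)^{-1}.
\]
Uniqueness of the solution, again from invertibility of $\mathcal K^{(u)}$ restricted to the block, shows that $M_x=\bigl((u+1)P_x(I+u\widetilde\rho_Q)^{-1}-I\bigr)/u$ is the optimizer.

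\textbf{Evaluating the functional.} Since $\Tr(P_x-\widetilde\rho_Q)=0$, the $-I/u$ part of $M_x$ contributes nothing. This leaves
\[
\frac{u+1}{u}\sum_xw_x\Tr\bigl[(P_x-\widetilde\rho_Q)P_x(I+u\widetilde\rho_Q)^{-1}\bigr].
\]
Two facts simplify the sum: $P_xP_x=P_x$, and $\sum_xw_xP_x=\widetilde\rho_Q$. Hence it equals $\Tr[\widetilde\rho_Q(I+u\widetilde\rho_Q)^{-1}]-\Tr[\widetilde\rho_Q^2(I+u\widetilde\rho_Q)^{-1}]$, which is the claimed trace formula.

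\textbf{Qubit case.} Diagonalize $\widetilde\rho_Q$ with eigenvalues $\lambda$ and $1-\lambda$, so that $d=\lambda(1-\lambda)$. The trace becomes
\[
d\left(\frac1{1+u\lambda}+\frac1{1+u(1-\lambda)}\right)
=\frac{d(2+u)}{(1+u\lambda)(1+u-u\lambda)}
=\frac{d(u+2)}{1+u+u^2d},
\]
where the last step expands the product as $1+u+u^2\lambda(1-\lambda)$. Multiplying by $(u+1)/u$ gives $\phi_u(d)$.

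\textbf{Main obstacle.} The only delicate points are two. First, the support convention: $M$ must be defined on $\supp\widetilde\sigma$, which is why we restrict to $w_x>0$ and to $\supp\widetilde\rho_Q$. Second, we must justify that the guessed block solution is the unique solution. Both reduce to the invertibility of $\mathcal K^{(u)}$ shown earlier.
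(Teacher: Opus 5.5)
Your proof is correct and follows the paper's argument. You block-diagonalize over $T$, use the same explicit optimizer $M_x=\frac{u+1}{u}P_x(\mathbb I+u\widetilde\rho_Q)^{-1}-\frac1u\mathbb I$, drop the $-\frac1u\mathbb I$ term via $\Tr(P_x-\widetilde\rho_Q)=0$, and do the same eigenvalue computation in the qubit case. The differences are cosmetic: you get block-diagonality from the block-preserving, invertible $\mathcal K^{(u)}$ rather than from separability of the variational objective, and you derive the optimizer through $B_x=\mathbb I+uM_x$ rather than verifying it directly.
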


Here $\widetilde\alpha_T$ is chosen to be the marginal of
$\widetilde\rho_{TQ}$ for this calculation.
The reference state in our information potential is still
$\alpha_T$. The next lemma will also bound the change caused by
replacing $\alpha_T$ with $\widetilde\alpha_T$.

\begin{proof}
	We can omit the blocks with zero weight. By the variational
	representation, the optimization separates over the classical
	blocks: the off-diagonal blocks have no linear term and can only
	decrease the objective through the quadratic terms.
	On $\supp\widetilde\rho_Q$, the optimizer in block $x$ is
	\[
	M_x=\frac{u+1}{u}P_x(\mathbb I+u\widetilde\rho_Q)^{-1}-\frac1u\mathbb I.
	\]
	To check this, put $X_x=P_x(\mathbb I+u\widetilde\rho_Q)^{-1}$.
	Since $P_x^2=P_x$,
	$P_xX_x+uX_x\widetilde\rho_Q=P_x$.
	Substituting the expression for $M_x$ gives
	$P_xM_x+uM_x\widetilde\rho_Q=P_x-\widetilde\rho_Q$,
	which is the optimality condition.
	
	The value at the optimizer is
	$\operatorname{Re}\Tr M_x^\dagger(P_x-\widetilde\rho_Q)
	=\operatorname{Re}\Tr(P_x-\widetilde\rho_Q)M_x$.
	Using $\Tr(P_x-\widetilde\rho_Q)=0$ and cyclicity of trace gives
	\[
	\mathcal F_u(P_x,\widetilde\rho_Q)=\frac{u+1}{u}
	\Tr\bigl[P_x(\mathbb I-\widetilde\rho_Q)(\mathbb I+u\widetilde\rho_Q)^{-1}\bigr].
	\]
	Multiply by $w_x$ and sum over $x$. Since
	$\sum_xw_xP_x=\widetilde\rho_Q$, this gives the first formula.
	
	If $\widetilde\rho_Q$ is a qubit state, write its eigenvalues as
	$\lambda,1-\lambda$. Then
	\[
	\begin{aligned}
		&\Tr\bigl[\widetilde\rho_Q(\mathbb I-\widetilde\rho_Q)
		(\mathbb I+u\widetilde\rho_Q)^{-1}\bigr]\\
		&\qquad=d\left(\frac1{1+u\lambda}+\frac1{1+u(1-\lambda)}\right)
		=\frac{(u+2)d}{1+u+u^2d}.
	\end{aligned}
	\]
	Substituting this into the first formula gives $\phi_u(d)$.
	The calculation also holds when $d=0$, and all inverses are taken
	on the indicated support.
\end{proof}

\subsection{Replacing the second block}

After setting $s=1$, the first conditional state is pure.
We now make the second conditional state pure by changing its
bottom-right entry from $z^2$ to $|\omega|^2z^2$, and then normalize
the whole state. The change is of order $z^2$, so the previous lemma
can be used with only a small error.

\begin{lemma}\label{lem:pure-approximation}
	Let $0\le z\le1/2$, $|\omega|\le1$, $t=|\omega|^2$, and
	$h=1-(1-t)z^2$. Define
	\[
	\widetilde\rho_{TQ}(\omega,z)=\frac1h\left[
	\proj0_T\otimes a^2\begin{pmatrix}1&1\\1&1\end{pmatrix}
	+\proj1_T\otimes
	\begin{pmatrix}a^2&az\omega\\az\overline\omega&tz^2\end{pmatrix}
	\right],
	\]
	and write $\widetilde\rho_Q=\Tr_T\widetilde\rho_{TQ}(\omega,z)$,
	$d_\omega(z)=\det\widetilde\rho_Q$.
	For every $u\in[1,2]$,
	\[
	\bigl|\mathcal F_u(\hat\rho_{TQ}(1,\omega,z),\sigma_{TQ}(1,\omega,z))
	-\phi_u(d_\omega(z))\bigr|\le200z^2.
	\]
\end{lemma}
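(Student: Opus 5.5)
The plan is to pass from $(\hat\rho_{TQ}(1,\omega,z),\sigma_{TQ}(1,\omega,z))$ to $(\widetilde\rho_{TQ},\widetilde\alpha_T\otimes\widetilde\rho_Q)$ in two controlled steps, each charged to Lemma~\ref{lem:optimizer-bounds}. The second pair is then evaluated exactly by Lemma~\ref{lem:pure-ensemble}. In that pair $\widetilde\alpha_T=\operatorname{diag}(2a^2/h,(a^2+tz^2)/h)$ is the $T$-marginal of $\widetilde\rho_{TQ}$. Both conditional blocks of $\widetilde\rho_{TQ}$ are rank one, since the second block has determinant $a^2tz^2-a^2z^2|\omega|^2=0$. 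Hence Lemma~\ref{lem:pure-ensemble} gives exactly $\mathcal F_u(\widetilde\rho_{TQ},\widetilde\alpha_T\otimes\widetilde\rho_Q)=\phi_u(d_\omega(z))$.

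The first step replaces $\hat\rho_{TQ}(1,\omega,z)$ by $\widetilde\rho_{TQ}$ while keeping the reference $\alpha_T$. The unnormalized difference is $(1-t)z^2\proj{1}_T\otimes\proj{1}_Q$, with trace norm $(1-t)z^2\le z^2$. Renormalizing by $1/h$ costs at most another $|1/h-1|\le\frac43(1-t)z^2$ in trace norm, because $h\ge 3/4$ when $z\le1/2$. So $\|\hat\rho_{TQ}-\widetilde\rho_{TQ}\|_1\le 3z^2$ up to a small constant. Taking the partial trace over $T$ gives the same bound for $\|\hat\rho_Q-\widetilde\rho_Q\|_1$, and tensoring with $\alpha_T$ preserves it.

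The second step replaces $\alpha_T$ by $\widetilde\alpha_T$. One computes $\alpha_T-\widetilde\alpha_T=\operatorname{diag}(2/3-2a^2/h,\;1/3-(a^2+tz^2)/h)$. Using $a^2=(1-z^2)/3$ and $h=1-(1-t)z^2$, both entries are $O(z^2)$ with explicit constants. Since $\widetilde\rho_Q$ is a density operator, $\|(\alpha_T-\widetilde\alpha_T)\otimes\widetilde\rho_Q\|_1=\|\alpha_T-\widetilde\alpha_T\|_1=O(z^2)$. The entries are of size at most roughly $2z^2$ each.

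Lemma~\ref{lem:optimizer-bounds} requires the domination condition $\rho\preceq6\sigma$ for each of the three pairs. The general bound $\rho\preceq d_T I_T\otimes\rho_Q$ from the proof of Lemma~\ref{lem:quadratic-gap-stability} gives $\rho\preceq 2I_T\otimes\rho_Q$. It remains to note that every reference on $T$ here has minimum eigenvalue at least $1/3$. This is clear for $\alpha_T$. For $\widetilde\alpha_T$, the smaller eigenvalue is at least $a^2/h\ge(1-z^2)/3\ge 1/4>1/3\cdot(3/4)$, which still gives $\rho\preceq 2\cdot 4\,\sigma$. If the constant $6$ is tight, I would instead use $\lambda_{\min}\ge 1/3$ directly: $(a^2+tz^2)/h\ge a^2/h=(1-z^2)/(3h)$, and $h\le 1$ only helps. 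I expect this to be acceptable, or to need only a slightly larger constant in Lemma~\ref{lem:optimizer-bounds}, whose proof works for any $\kappa$.

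Summing the two steps, the total error is $24\cdot O(z^2)+40\cdot O(z^2)$ from the first step plus $40\cdot O(z^2)$ from the reference change. Tracking the constants (first step roughly $64\cdot\frac73 z^2$, second step roughly $40\cdot 2z^2$) stays under $200z^2$. The main obstacle is this bookkeeping, in particular verifying the domination hypothesis for the intermediate pair $(\widetilde\rho_{TQ},\alpha_T\otimes\widetilde\rho_Q)$ and keeping the trace-norm estimates for the normalization sharp enough.
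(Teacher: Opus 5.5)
Your route is the paper's route. You write $\hat\rho_{TQ}(1,\omega,z)=h\widetilde\rho_{TQ}+(1-t)z^2\proj{11}_{TQ}$, replace $\alpha_T$ by the marginal $\widetilde\alpha_T$, charge both changes to Lemma~\ref{lem:optimizer-bounds}, and evaluate the endpoint exactly with Lemma~\ref{lem:pure-ensemble}. The paper makes a single application of the continuity lemma, using $\|\sigma_{TQ}(1,\omega,z)-\widetilde\alpha_T\otimes\widetilde\rho_Q\|_1\le\|\hat\rho_Q-\widetilde\rho_Q\|_1+\|\alpha_T-\widetilde\alpha_T\|_1$. Your two steps through $(\widetilde\rho_{TQ},\alpha_T\otimes\widetilde\rho_Q)$ produce exactly the same sum. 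However, the constant $200$ is part of the statement, and two pieces of your bookkeeping do not close as written.

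\textbf{The constant.} Your own rough numbers give $64\cdot\frac73+40\cdot 2=\frac{688}{3}>200$, not something under $200$. Use the sharp estimates instead.
\begin{itemize}
\item $\hat\rho_{TQ}(1,\omega,z)-\widetilde\rho_{TQ}=(1-t)z^2\bigl(\proj{11}_{TQ}-\widetilde\rho_{TQ}\bigr)$ has trace norm at most $2(1-t)z^2$.
\item $\alpha_T-\widetilde\alpha_T=\frac{2tz^2}{3h}\operatorname{diag}(1,-1)$ has trace norm $\frac{4tz^2}{3h}\le\frac{16}{9}tz^2$.
\end{itemize}
The total is then at most $128(1-t)z^2+\frac{640}{9}tz^2\le 128z^2$. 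Even ignoring the complementary factors $1-t$ and $t$, you get $128+\frac{640}{9}<200$, which is exactly the paper's count.

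\textbf{The domination hypothesis.} Your ``direct'' argument that $\lambda_{\min}(\widetilde\alpha_T)\ge 1/3$ is wrong. The quantity $a^2/h=(1-z^2)/(3h)$ lies strictly below $1/3$ whenever $t<1$ and $z>0$, and $h\le 1$ only yields $(1-z^2)/3$. You must keep the $tz^2$ term: $3(a^2+tz^2)-h=2tz^2\ge 0$, so $\widetilde w_1\ge 1/3$, while $\widetilde w_0=2(1-z^2)/(3h)\ge 1/2$.

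Alternatively, your cruder bound $\lambda_{\min}\ge 1/4$, i.e.\ $\kappa=8$, already suffices with no enlarged constants. The proof of Lemma~\ref{lem:optimizer-bounds} uses only an operator-norm bound of $4$ on the optimizers. The singular-vector bound is $\frac1u+\frac{u+1}{2u\sqrt u}\sqrt\kappa\le 1+\sqrt\kappa$ on $[1,2]$, and $1+\sqrt8<4$, so the constants $24$ and $40$ are unchanged.
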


\begin{proof}
	Both conditional states of $\widetilde\rho_{TQ}(\omega,z)$ are pure.
	The second block is the outer product of
	$a\ket0+z\overline\omega\ket1$ before normalization.
	To bound the change in the state, we use
	\[
	\hat\rho_{TQ}(1,\omega,z)
	=h\widetilde\rho_{TQ}(\omega,z)+(1-t)z^2\proj{11}_{TQ}.
	\]
	So
	\[
	\norm{\hat\rho_{TQ}(1,\omega,z)-\widetilde\rho_{TQ}(\omega,z)}_1
	\le2(1-t)z^2\le2z^2.
	\]
	Tracing out $T$ gives
	$\norm{\hat\rho_Q(1,\omega,z)-\widetilde\rho_Q}_1\le2z^2$.
	
	To apply Lemma~\ref{lem:pure-ensemble}, we also need to replace
	$\alpha_T$ by $\widetilde\alpha_T=\Tr_Q\widetilde\rho_{TQ}(\omega,z)$.
	The two probabilities in $\widetilde\alpha_T$ are
	\[
	\widetilde w_0=\frac{2a^2}{h},\qquad
	\widetilde w_1=\frac{a^2+tz^2}{h}.
	\]
	Since $h\ge3/4$, they satisfy
	$1/2\le\widetilde w_0\le2/3$ and
	$1/3\le\widetilde w_1\le1/2$. Moreover,
	\[
	\norm{\widetilde\alpha_T-\alpha_T}_1
	=\frac{4tz^2}{3h}\le\frac{16}{9}z^2.
	\]
	Combining the changes on $T$ and $Q$, we have
	\[
	\begin{aligned}
		\norm{\sigma_{TQ}(1,\omega,z)
			-\widetilde\alpha_T\otimes\widetilde\rho_Q}_1
		&\le\norm{\hat\rho_Q(1,\omega,z)-\widetilde\rho_Q}_1
		+\norm{\alpha_T-\widetilde\alpha_T}_1\\
		&\le\frac{34}{9}z^2.
	\end{aligned}
	\]
	The original pair satisfies $\rho\preceq6\sigma$. The replacement pair also satisfies
	this condition, since
	$\widetilde\rho_{TQ}(\omega,z)\preceq2\mathbb I_T\otimes\widetilde\rho_Q$
	and $\widetilde\alpha_T\succeq\mathbb I_T/3$.
	Thus Lemma~\ref{lem:optimizer-bounds} bounds the change in
	$\mathcal F_u$ by
	\[
	24(2z^2)+40\left(\frac{34}{9}z^2\right)
	=\left(48+\frac{1360}{9}\right)z^2\le200z^2.
	\]
	Finally, Lemma~\ref{lem:pure-ensemble} gives
	\[
	\mathcal F_u(\widetilde\rho_{TQ}(\omega,z),
	\widetilde\alpha_T\otimes\widetilde\rho_Q)
	=\phi_u(d_\omega(z)).
	\]
	So the stated bound includes both the change in the state and the
	change from $\sigma_{TQ}(1,\omega,z)$ to
	$\widetilde\alpha_T\otimes\widetilde\rho_Q$.
\end{proof}

\subsection{The linear coefficient and the uniform remainder}

It remains to expand $\phi_u(d_\omega(z))$ around $z=0$.
We first estimate the change in the determinant, and then substitute
it into the scalar function $\phi_u$.

\begin{lemma}\label{lem:scalar-expansion}
	For $d_\omega(z)$ in Lemma~\ref{lem:pure-approximation}, set
	$d_0=1/9$ and $\ell=2/(3\sqrt3)$.
	For $u\in[1,2]$, $z\in[0,1/2]$, and $|\omega|\le1$,
	\[
	\bigl|\phi_u(d_\omega(z))-\phi_u(d_0)
	+\kappa_u z\operatorname{Re}\omega\bigr|\le21z^2,
	\]
	where
	\[
	\kappa_u=\ell\phi_u'(d_0)
	=\frac{2(u+1)^2(u+2)}{3\sqrt3\,u(1+u+u^2/9)^2}
	\ge\frac12.
	\]
\end{lemma}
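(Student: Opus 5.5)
The proof is a Taylor expansion with explicit remainder. I first compute $d_\omega(z)$ exactly, then apply Taylor's theorem to the scalar function $\phi_u$ around $d_0=1/9$.

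\textbf{Step 1: the determinant.} From Lemma~\ref{lem:pure-approximation},
\[
\widetilde\rho_Q=\frac1h\begin{pmatrix}2a^2&a^2+az\omega\\ a^2+az\overline\omega&a^2+tz^2\end{pmatrix},
\]
so that
\[
h^2d_\omega(z)=2a^2(a^2+tz^2)-|a^2+az\omega|^2=a^4+2a^2tz^2-2a^3z\operatorname{Re}\omega-a^2z^2t=a^4+a^2tz^2-2a^3z\operatorname{Re}\omega .
\]
Here $a^2=(1-z^2)/3$, so $a^4=1/9+O(z^2)$ and $a^3=3^{-3/2}+O(z^2)$. Also $h=1-O(z^2)$ with $h\ge3/4$.

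Expanding in $z$ gives
\[
d_\omega(z)=d_0-\ell z\operatorname{Re}\omega+E,\qquad \ell=2\cdot3^{-3/2}=2/(3\sqrt3),\qquad |E|\le c_1z^2 .
\]
The constant $c_1$ is explicit, uniform in $|\omega|\le1$ and $z\le1/2$. I would bound each correction term separately, using $1-a^2\cdot3=z^2$, $|1/h^2-1|\le c z^2$, and $|a^3-3^{-3/2}|\le c z^2$. This also gives $|d_\omega(z)-d_0|\le \ell z+c_1z^2$.

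\textbf{Step 2: Taylor expansion of $\phi_u$.} Write
\[
\phi_u(d)-\phi_u(d_0)=\phi_u'(d_0)(d-d_0)+R,\qquad |R|\le\tfrac12\sup|\phi_u''|\,(d-d_0)^2 .
\]
The supremum is taken over $d\in[0,1/4]$, since $d$ is the determinant of a qubit state. Since $\phi_u(d)=\frac{(u+1)(u+2)}{u}\frac{d}{1+u+u^2d}$, we get
\[
\phi_u'(d)=\frac{(u+1)^2(u+2)}{u(1+u+u^2d)^2},\qquad |\phi_u''(d)|=\frac{2u^2(u+1)^2(u+2)}{u(1+u+u^2d)^3}.
\]
Both are bounded by explicit constants for $u\in[1,2]$. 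Substituting Step 1 gives
\[
\phi_u(d_\omega)-\phi_u(d_0)=-\ell\phi_u'(d_0)z\operatorname{Re}\omega+\phi_u'(d_0)E+R ,
\]
and both error terms are $O(z^2)$. Collecting the constants should give at most $21z^2$.

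This bookkeeping is the main obstacle. It is not conceptual, but the stated constant $21$ requires fairly tight bounds on $c_1$, $\sup\phi_u'$ and $\sup|\phi_u''|$. If a crude bound overshoots, I would tighten it by restricting $d$ to a neighbourhood of $d_0$ rather than all of $[0,1/4]$. This is justified because $|d-d_0|\le\ell/2+c_1/4$ when $z\le1/2$.

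\textbf{Step 3: the coefficient bound.} The formula for $\kappa_u$ follows by direct substitution of $d_0=1/9$ into $\phi_u'$. To show $\kappa_u\ge1/2$ on $[1,2]$:
\begin{itemize}
\item At $u=1$, $\kappa_1=\frac{2\cdot4\cdot3}{3\sqrt3\,(19/9)^2}\approx1.04$.
\item At $u=2$, $\kappa_2=\frac{2\cdot9\cdot4}{3\sqrt3\cdot2\,(31/9)^2}\approx0.58$.
\end{itemize}
On the interval I would bound the numerator below by its value at $u=1$ (it is increasing) and the denominator above by its value at $u=2$. This crude bound gives about $0.49$, so it is slightly too weak. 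Instead I would split $[1,2]$ into $[1,3/2]$ and $[3/2,2]$ and apply the same monotone bounding on each piece. Alternatively, I would check that $\log\kappa_u$ is decreasing, so the minimum is attained at $u=2$ and equals $\approx0.58\ge1/2$.
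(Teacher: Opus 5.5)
Your route is the paper's own: the exact formula $h^2d_\omega(z)=a^4-2a^3z\operatorname{Re}\omega+a^2tz^2$, an $O(z^2)$ bound on $E$, and a second-order Taylor expansion of $\phi_u$ over $[0,1/4]$. The gap is that you stop where the lemma's content begins. The claim \emph{is} the explicit constant $21$, and ``collecting the constants should give at most $21z^2$'' does not prove it.

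It closes with crude global bounds. Localization is not needed; the one you propose would be vacuous at $z=1/2$ anyway, since $d_0\pm\ell/2$ already straddles $[0,1/4]$. Concretely:
\begin{itemize}
\item $|a^4-\tfrac19|\le\tfrac29z^2$.
\item $2z|a^3-3^{-3/2}|\le z^3/\sqrt3\le z^2/(2\sqrt3)$, using $1-(1-z^2)^{3/2}\le\tfrac32z^2$.
\item $a^2tz^2\le z^2/3$.
\end{itemize}
So the numerator is within $z^2$ of $d_0-\ell z\operatorname{Re}\omega$. Combine this with $h\ge3/4$, $h^{-2}\le16/9$, $0\le h^{-2}-1\le\tfrac{32}{9}z^2$ and $|d_0-\ell z\operatorname{Re}\omega|\le1/3$. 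This gives $|E|\le\tfrac{80}{27}z^2\le3z^2$, hence $|d_\omega(z)-d_0|\le\ell z+3z^2\le2z$. On $d\in[0,1/4]$ and $u\in[1,2]$ one has $0<\phi_u'(d)\le(u+2)/u\le3$ and $|\phi_u''(d)|\le2u(u+2)/(u+1)\le6$. So the error is at most $3\cdot3z^2+\tfrac12\cdot6\cdot(2z)^2=21z^2$.

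In Step 3, your crude estimate is miscomputed. The numerator at $u=1$ over the denominator at $u=2$ is $24/\bigl(6\sqrt3\,(31/9)^2\bigr)\approx0.19$, not $0.49$. Splitting at $u=3/2$ does not rescue it: the same monotone bounding gives about $0.41$ on $[1,3/2]$ and $0.35$ on $[3/2,2]$. Only your monotonicity alternative works. It is immediate once you pull $(u+1)^2$ out of $(1+u+u^2/9)^2$:
\[
\kappa_u=\ell\,\frac{u+2}{u}\Bigl(1+\frac{u^2}{9(u+1)}\Bigr)^{-2}.
\]
This is a product of two positive decreasing factors, so $\kappa_u\ge\kappa_2=2\ell(27/31)^2=972/(961\sqrt3)>1/2$. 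That is exactly the paper's argument.
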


\begin{proof}
	Summing the two blocks gives
	\[
	\widetilde\rho_Q=\frac1h
	\begin{pmatrix}
		2a^2&a^2+az\omega\\
		a^2+az\overline\omega&a^2+tz^2
	\end{pmatrix},\qquad
	d_\omega(z)=\frac{a^4-2a^3z\operatorname{Re}\omega+a^2tz^2}{h^2}.
	\]
	First, we estimate the determinant. The numerator differs from
	$d_0-\ell z\operatorname{Re}\omega$ by at most $z^2$, because
	\[
	\begin{aligned}
		|a^4-1/9|&\le\tfrac29z^2,\\
		2z\left|a^3-\tfrac1{3\sqrt3}\right|
		&\le\tfrac1{\sqrt3}z^3\le\tfrac1{2\sqrt3}z^2,\\
		a^2tz^2&\le\tfrac13z^2,
	\end{aligned}
	\]
	and $2/9+1/(2\sqrt3)+1/3<1$.
	The middle inequality uses
	$1-(1-z^2)^{3/2}\le\frac32z^2$.
	For the denominator, we have
	\[
	h^{-2}\le\frac{16}{9},\qquad
	0\le h^{-2}-1\le\frac{32}{9}z^2,\qquad
	|d_0-\ell z\operatorname{Re}\omega|\le\frac13.
	\]
	Combining these estimates gives
	\[
	|d_\omega(z)-d_0+\ell z\operatorname{Re}\omega|
	\le\frac{80}{27}z^2\le3z^2.
	\]
	
	Next, we estimate the scalar function. Since
	$\widetilde\rho_Q$ is a qubit state,
	$d_\omega(z),d_0\in[0,1/4]$.
	Direct differentiation gives
	\[
	\phi_u'(d)=\frac{(u+1)^2(u+2)}{u(1+u+u^2d)^2},\qquad
	\phi_u''(d)=-\frac{2u(u+1)^2(u+2)}{(1+u+u^2d)^3}.
	\]
	On the stated ranges,
	$0<\phi_u'(d)\le3$ and $|\phi_u''(d)|\le6$.
	Also,
	$|d_\omega(z)-d_0|\le\ell z+3z^2\le2z$.
	By Taylor's theorem between $d_0$ and $d_\omega(z)$, we get
	\[
	\begin{aligned}
		&\bigl|\phi_u(d_\omega(z))-\phi_u(d_0)
		+\ell\phi_u'(d_0)z\operatorname{Re}\omega\bigr|\\
		&\hspace{2em}\le3(3z^2)+\tfrac12\,6(2z)^2=21z^2.
	\end{aligned}
	\]
	
	Finally, we give a lower bound on the linear coefficient. We have
	\[
	\phi_u'(d_0)
	=\frac{u+2}{u}\left(1+\frac{u^2}{9(u+1)}\right)^{-2}
	\ge2\left(\frac{27}{31}\right)^2,
	\]
	because $(u+2)/u\ge2$ and $u^2/(9(u+1))\le4/27$.
	Multiplying by $\ell$ gives
	$\kappa_u\ge972/(961\sqrt3)>1/2$.
\end{proof}

\bibliographystyle{alpha}
\bibliography{references}

\end{document}